\newtheorem{thm}{\hskip\parindent Theorem}[section]
\newtheorem{lem}[thm]{\hskip\parindent Lemma}
\newtheorem{prb}[thm]{\hskip\parindent Problem}
\newtheorem{cor}[thm]{\hskip\parindent Corollary}
\theoremstyle{definition}
\newtheorem{dfn}{\hskip\parindent Definition}[section]
\newtheorem{ex}[dfn]{\hskip\parindent Example}
\newtheorem{rem}[dfn]{\hskip\parindent Remark}
\DeclareMathOperator{\Der}{Der}
\DeclareMathOperator{\wt}{wt}
\begin{document}

\title{Parametric Korteweg--de Vries hierarchy and~hyperelliptic sigma functions \footnotemark[1]}


\author{E.\,\,Yu.~Bunkova}
\address{Steklov Mathematical Institute of Russian Academy of Sciences, Moscow, Russia}
\email{bunkova@mi-ras.ru}

\author{V.\,\,M.~Buchstaber}
\address{Steklov Mathematical Institute of Russian Academy of Sciences, Moscow, Russia}
\email{buchstab@mi-ras.ru}

\markboth{Parametric Korteweg--de Vries hierarchy}{E.\,\,Yu.~Bunkova, V.\,\,M.~Buchstaber}

\maketitle


\begin{abstract}
In this paper we define the parametric Korteweg-de Vries hierarchy that depends on an infinite set of graded parameters $a = (a_4,a_6,\dots)$. We show that, for any genus $g$, the Klein hyperelliptic function $\wp_{1,1}(t,\lambda)$ defined on the basis of the multidimensional sigma function $\sigma(t, \lambda)$, where $t = (t_1, t_3,\dots, t_{2g-1})$, $\lambda = (\lambda_4, \lambda_6,\dots, \lambda_{4 g + 2})$, determines a solution of this hierarchy, where the parameters $a$ are given as polynomials in the parameters $\lambda$ of the sigma function.

The proof uses results on the family of operators introduced by V.~M.~Buchstaber and S.~Yu.~Shorina. This family consists of $g$ third-order differential operators of $g$ variables. Such families are defined for all $g \geqslant 1$, the operators in each of them commute in pairs and also commute with the Schr\"odinger operator.

In this paper, we describe the relationship between these families and the parametric Korteweg--de Vries hierarchy. A similar infinite family of third-order operators on an infinite set of variables is constructed. The results obtained are extended to the case of~such a family.
\end{abstract}

\footnotetext[1]{This research was supported by the Russian Science Foundation (grant \textnumero 20-11-19998), \text{https://rscf.ru/project/20-11-19998/}.}

\section{Introduction}\label{S0}

Emma Previato is the author of outstanding results in the theory of Abelian functions and their applications in mathematical physics. The problems discussed in this article were at the center of her scientific interests.

Consider the classical \emph{Korteweg--de Vries equation} \cite{KdVt} of the form
\begin{equation} \label{eqcKdV}
\frac{\partial u}{\partial t} = \frac14 ( u''' - 6 u u'),
\end{equation}
where $u$ is a smooth function of $z$ and $t$, and the prime denotes the derivative with respect to $z$. We consider~$z$ and~$t$ to be complex variables.  In what follows, we will use the same notation for a function and the operator of multiplication by this function.

Equation \eqref{eqcKdV} can be represented in the Lax form \cite{Lax}
\begin{equation} \label{LA3}
\frac{\partial u}{\partial t} = [L, A_3],
\end{equation}
where the second and third order operators
\begin{align}
L &= \partial_z^2 - u, \label{L0}\\
A_3 &= \partial_z^3 - \frac34(u \partial_z + \partial_z u) \notag
\end{align}
are defined for a smooth function $u$ of a complex variable $z$. Here, the operator $-L = - \partial_z^2 + u$ is the \emph{Schr\"odinger operator} with potential $u$.

Let $U_3 = A_3 - \partial_t$. Then, equation \eqref{eqcKdV} takes the form $[L, U_3] = 0$.

For a smooth function $u$ of a complex variable $z$, we denote by $\mathcal{H}$ the graded $\mathbb{C}$-algebra generated by homogeneous polynomials in the variables $u, u', u'',\dots, u^{(k)},\break\dots$, where $\wt u = 2$, $\wt u^{(k)} = k+2$. We denote the arguments $u, u', u'',\dots, u^{(k)},\dots$ of such polynomials by $[u]$. The operators
\begin{equation} \label{Ak}
A_{2k+1} = \partial_z^{2k+1} + \sum_{j=1}^k ( \alpha_{2j}^{2k+1}[u] \partial_z^{2k-2j+1} + \partial_z^{2k-2j+1} \alpha_{2j}^{2k+1}[u]),
\end{equation}
where $\alpha_{2j}^{2k+1}[u] \in \mathcal{H}$, $\wt \alpha_{2j}^{2k+1}[u] = 2j$, depending on $u$, are uniquely determined by the fundamental property that for any smooth function $u$ the condition $[L, A_{2k+1}] \in \mathcal{H}$ holds. This definition is based on results of \cite{Lax} and \cite{BM}; see for details and examples \S\ref{Ssss}. See also \cite{Nov}--\cite{Sok} for various ways of constructing these operators. Let us introduce the notation $r_{2k+3}[u] = [L, A_{2k+1}]$. Then, $r_{2k+3}[u] \in \mathcal{H}$ and $\wt r_{2k+3}[u] = 2k+3$.

Equation~\eqref{LA3} extends to the~\emph{canonical Korteweg--de Vries hierarchy}
\begin{equation} \label{KdVh}
\frac{\partial u}{\partial t_{2k+1}} = [L, A_{2k+1}], \qquad k \in \mathbb{N},
\end{equation}
where $u$ is a smooth function of the complex variable $z = t_1$ and an infinite set of complex variables $t_3, t_5, t_7,\dots$\,.

The \emph{parametric Korteweg--de Vries hierarchy} is the system
\begin{equation} \label{KdVhld}
\frac{\partial u}{\partial t_{2k+1}} = [L, A_{2k+1} + \sum_{j=0}^{k-2} a_{2k-2j} A_{2j+1}], \qquad k \in \mathbb{N},
\end{equation}
where $u$ is a smooth function of the complex variable $z = t_1$ and an infinite set of complex variables $t_3, t_5,\dots, t_{2k+1},\dots$, while we call the set of constants $a_4, a_6,\dots$, $a_{2k},\dots$ the parameters of hierarchy \eqref{KdVhld}. Note that the parameters of the $k$-th equation of this hierarchy coincide with the corresponding parameters of the $j$-th equation of the hierarchy.

For a smooth function $u$ of a complex variable $z$, we denote by $\mathcal{H}_A$ the graded $\mathbb{C}$-algebra generated by homogeneous polynomials in the variables $u, u', u'',\dots , u^{(k)},\dots$ and $a_4, a_6,\dots, a_{2k},\dots$, where~$\wt u = 2$, $\wt u^{(k)} = k+2$, $\wt a_{2k} = 2k$. Then, the right-hand sides of the equations of hierarchy \eqref{KdVhld} lie in $\mathcal{H}_A$.

The $g$-th equation of hierarchy \eqref{KdVhld}, in the case when $\partial u / \partial t_{2g+1} = 0$, is called the \emph{higher stationary $g$-KdV equation}. This equation can be represented as
\begin{equation} \label{gKdV}
r_{2g+3}[u] + \sum_{j=0}^{g-2} a_{2g-2j} r_{2j+3}[u] = 0,
\end{equation}
where $u$ is a smooth function of the complex variable $z$, and we call the set of constants $a_4, a_6,\dots, a_{2g}$ the parameters of the higher stationary $g$-KdV equation \eqref{gKdV} .

Let us show how to define the canonical Korteweg--de Vries hierarchy \eqref{KdVh} and the parametric Korteweg--de Vries hierarchy~\eqref{KdVhld} based on the Lenard operator
\begin{equation} \label{Len}
\mathcal{R} = \frac14 \partial_z^2 - \frac12 u' \partial_z^{-1} - u.
\end{equation}

In the $\mathbb{C}$-algebra $\mathcal{H}$, where the function $u$ is treated as a free variable, the operator $\partial_z$ acts injectively on homogeneous polynomials $h \in \mathcal{H}$ with $ \wt h > 0$. Therefore, the operator $\partial_z^{-1}$ is uniquely defined on the image $\partial_z h$, where $h \in \mathcal{H}$, $\wt h > 0$, by the relation $\partial_z^{-1} (\partial_z h) = h$. According to Lemma 44 in \cite{BM} or the corollary \ref{lastcor} of this paper, there exist differential polynomials $\nu_{2k+2}[u] \in \mathcal{H}$, $\wt \nu_{2k+ 2}[u] = 2k+2$ such that $r_{2k+3}[u] = \partial_z (\nu_{2k+2}[u])$. Thus, the operator $\partial_z^{-1}$ and, hence, the Lenard operator \eqref{Len} are uniquely defined on the polynomials $r_{2k+3}[u]$. According to relation (9) Chap. 2 in \cite{GD} (see also Theorem 1.2 in \cite{BS}) the right-hand sides of the canonical Korteweg--de Vries hierarchy \eqref{KdVh} are related by
\begin{equation} \label{len}
\mathcal{R}(r_{2k+1}[u]) = r_{2k+3}[u],
\end{equation}
which allows us to restore the canonical Korteweg--de Vries hierarchy~\eqref{KdVh}.

In the $\mathbb{C}$-algebra $\mathcal{H}_A$, where the function $u$ is considered as a free variable and $a_4, a_6,\dots,$ $a_{2k},\dots$ are parameters independent of $u$, the action of the operator $\partial_z^{-1}$ on the image of the operator $\partial_z$ on the homogeneous polynomial $h \in \mathcal{H}_A$, $\wt h = 2k$, is defined by $\partial_z^{-1}(\partial_z h) = h + c_{2k}$ up to a graded $u$-independent integration constant $c_{2k} \in \mathcal{H}_A$, $\ wt c_{2k} = 2k$, \text{$\partial_z(c_{2k}) = 0$.} Therefore, for $h$ and $c_{2k}$, as above, according to formula \eqref{Len}, the action of the operator $\mathcal{R}$ is defined up to the term $-\frac12 c_{2k} u'$. The relation \eqref{len} is true in the algebra $\mathcal{H}$. Applying it in the $\mathcal{H}_A$ algebra to the right hand side of the $(k-1)$-th equation of the parametric Korteweg--de Vries hierarchy \eqref{KdVhld}, we obtain
\[
\mathcal{R}\bigg(r_{2k+1}[u] + \sum_{j=0}^{k-3} a_{2k-2j-2} r_{2j+3}[u]\bigg ) = r_{2k+3}[u] + \sum_{j=1}^{k-2} a_{2k-2j} r_{2j+3}[u] - \frac12 c_{2k} r_{3 }[u],
\]
where $r_{3}[u] = u'$. Setting $c_{2k} = - 2 a_{2k}$, we obtain the right hand side of the $k$-th equation of the parametric Korteweg--de Vries hierarchy \eqref{KdVhld}. Thus, the Lenard operator makes it possible to reconstruct the parametric Korteweg--de Vries hierarchy \eqref{KdVhld} up to the parameters $a_4, a_6,\dots, a_{2k},\dots$ arising from the multivaluedness of the Lenard operator.

Let~$\partial_{2k+1} = \partial/\partial t_{2k+1}$ and
\[
U_{2k+1} = A_{2k+1} - \partial_{2k+1}.
\]
If $u$ is a solution to the Korteweg--de Vries hierarchy \eqref{KdVh}, then the second-order operator $L$ and operators $U_{2k+1}$ of order $2k+1$, for $k = 1,2 ,\dots$, commute. In~\cite{BS} and~\cite{BS1}, a family of $g$ commuting third-order multidimensional differential operators was introduced, each of which commutes with the operator $L$. We call them the \emph{Buchstaber--Shorina operators}. In~\cite{BS} it was shown the close connection of these operators with the Korteweg--de Vries hierarchy~\eqref{KdVh} and the higher stationary $g$-KdV equation \eqref{gKdV}. In \S \ref{OBS} we define the Buchstaber--Shorina operators and present the results of papers~\cite{BS}--\cite{BS2}, which we will make essential use of in the current work.

In this paper, we succeeded in obtaining a generalisation of the results from the paper~\cite{BS}. In theorem \ref{fullhd}, we show that the function involved in the definition of the Buchstaber--Shorina operators is a solution of the parametric Korteweg--de Vries hierarchy \eqref{KdVhld} and find all relations between the constants defined by the Buchstaber--Shorina operators and the parameters of the Korteweg--de Vries hierar-\break chy. In theorem \ref{tgreat}, we obtain a generalisation of the key result of the paper~\cite{BS} and the theorem~\ref{fullhd} to the case of an infinite family of Buchstaber--Shorina operators. Such a family is constructed on the basis of the solution of the parametric Korteweg--\break de Vries hierarchy \eqref{KdVhld}.

It was noted in~\cite{BS1} that the Buchstaber--Shorina operators were obtained by studying the properties of \emph{hyperelliptic Klein functions} (see \cite{Baker}--\cite{BEL-12} and \cite{WW} for elliptic functions). We recall definitions of the theory of hyperelliptic functions that are important for us in~\S\ref{Sh}. In~\S\ref{1BS} we demonstrate the applicability of results on Buchstaber--Shorina operators to such functions.

Let $\sigma(t, \lambda)$ be a hyperelliptic sigma function, where the variables $t = (t_1, t_3,\dots,\break t_{2g-1})$ and the parameters $\lambda = (\lambda_4, \lambda_6,\dots, \lambda_{4 g + 2})$ are introduced in~\S \ref{Sh}. Let us set
\begin{equation}
\zeta_{k} = \partial_k \ln \sigma(t, \lambda),\quad
\wp_{k_1,\dots, k_n} = - \partial_{k_1} \cdots \partial_{k_n} \ln \sigma(t, \lambda), \label{pijdef}
\end{equation}
where $n \geqslant 2$ and $k_s \in \{ 1, 3,\dots, 2 g - 1\}$. For a hyperelliptic sigma function $\sigma(t, \lambda)$ of a nondegenerate hyperelliptic curve of genus $g$, the function
\begin{equation} \label{e1}
u = 2 \wp_{1,1} = - 2 \partial_1^2 \ln \sigma(t, \lambda)
\end{equation}
defines a solution of the higher stationary $g$-KdV equation~\eqref{gKdV} with respect to the variable $z=t_1$; see Theorem~4.12 in~\cite{BEL-97} and corollary \ref{cor52} in \S \ref {1BS}. In theorem~4.12 in~\cite{BEL-97}, it is shown that function $u = 2 \wp_{1,1}$, given by~\eqref{e1}, is a solution to the Korteweg--de Vries hierarchy and its definition is based on the Lenard operator \eqref{Len}. In this paper, in Theorem \ref{mainthm} , we prove that function $u = 2 \wp_{1,1}$ given by \eqref{e1} is a solution to the parametric Korteweg--de Vries hierarchy \eqref{KdVhld} , and we explicitly specify all parameters $a_4, a_6,\dots$, $a_{2k},\dots$ of the Korteweg--de Vries \eqref{KdVhld} hierarchy in terms of the parameters $\lambda$ of the sigma function.

In order to prove Theorems \ref{fullhd}, \ref{mainthm} and \ref{tgreat}, we use the construction developed in~\cite{B2} and~\cite{B3} and described in \S\ref{s3}, which relates the universal bundle $\pi\colon\mathcal{U}_g \to \Lambda_g \subset \mathbb{C}^{2g}$, whose fiber is the Jacobian of a nonsingular hyperelliptic curve of genus $g$, with the polynomial mapping $\rho_g \colon \mathbb{C}^{3g}\to \mathbb{C}^{2g}$. This construction corresponds to the transition from the field of hyperelliptic functions to the ring of polynomials from generators of the subfield of this field. We use the results presented in \S\ref{OBS} and \S\ref{1BS} to obtain the structure of the polynomial map $\rho_g$. We prove Theorem 6.5, according to which the set of mappings $\{ \rho_g,\, g=1,2,\dots \}$, consistent with the growth of $g$, defines the limit mapping $\rho\colon \mathbb{C}^{3 \infty}\to \mathbb{C}^{2\infty}$ and allows us to express the operators of our parametric Korteweg--de Vries hierarchy as polynomial operators in the graded coordinates of the space $\mathbb{C}^{3\infty}$. The corresponding results are presented in \S\ref{s8} and \S\ref{Sinf}. Note that the set of mappings $ \pi$ is not consistent with the growth of $g$; see Remark 6.6.

In \S \ref{s8}, we describe the polynomial differential operators obtained from the construc-\break tion in \S \ref{s3} and introduce the \emph{Buchstaber--Shorina polynomial differential operators}. In \S \ref{Sinf}, we generalise these results to the infinite-dimensional case, we introduce the concept of a \emph{polynomial parametric Korteweg--de Vries hierarchy} and obtain the proofs of Theorems \ref{fullhd}, \ref{mainthm} and~\ref{tgreat}.

\section{Definition of operators $A_{2k+1}$} \label{Ssss}

\begin{lem} \label{lemA}
In the $\mathbb{C}$-algebra $\mathcal{H}$, where the function $u$ is treated as a free variable, for each $k \in \mathbb{N}$, there is a unique operator $A_{2k+1} $ of the form \eqref{Ak}, where $\alpha_{2j}^{2k+1}[u] \in \mathcal{H}$ and $\wt \alpha_{2j}^{2k+1}[u] = 2j$, such that $[L, A_{2k+1}] \in \mathcal{H}$.
\end{lem}
\begin{proof}
The proof of the existence of operators $A_{2k+1}$ of this form is given, for example, in \S 6 of the paper~\cite{BM}.

To prove the uniqueness of such operators, we use a method derived from the corresponding proof in \cite{Lax}. For the operators~$L$ and $A_{2k+1}$ given by \eqref{L0} and~\eqref{Ak}, the commutator $[L, A_{2k+1}]$ is a differential operator of order not higher than $2k$, namely,
\[
[L,A_{2k+1}] = r_{2k+3}[u] + \sum_{j=1}^{2k} \beta_{2k+3-j}^{2k+1}[u] \partial^j,
\]
where $\beta_{2k+3-j}^{2k+1}[u] \in \mathcal{H}$, $\wt \beta_{2k+3-j}^{2k+1}[u] = 2k+3-j$. For the $j$-th term in \eqref{Ak} we have
\[
[L, \alpha_{2j}^{2k+1}[u] \partial_z^{2k-2j+1} + \partial_z^{2k-2j+1} \alpha_{2j}^{2k+1}[ u]] = 4 (\alpha_{2j}^{2k+1}[u])' \partial_z^{2k-2j+2} + M_{2j}^{2k+1},
\]
where $M_{2j}^{2k+1}$ is a differential operator of order at most ${2k-2j+1}$. Thus, the condition $\beta_{2s+1}^{2k+1}[u] = 0$, for $s = 1,\dots, k$, gives an expression for~$(\alpha_{2s}^{2k+ 1}[u])'$ in terms of $u, u', u'',\dots$ and $\alpha_{2i}^{2k+1}[u], (\alpha_{2i}^{2k+ 1}[u])', (\alpha_{2i}^{2k+1}[u])'',\dots$, where~$i<s$. For the problem of finding the polynomial $\alpha_{2s}^{2k+1}[u]$, where $\wt \alpha_{2s}^{2k+1}[u] = 2s > 0$, for a given~$ (\alpha_{2s}^{2k+1}[u])'$, the solution, if it exists, is unique in~$\mathcal{H}$. Since such a solution exists according to \cite{BM}, this allows us to uniquely construct the polynomials $\alpha_{2s}^{2k+1}[u] \in \mathcal{H}$ recursively for $s=1,\dots,k $.\qed
\end{proof}

According to Lemma 44 in \cite{BM}, there are differential polynomials $\nu_{2k+2}[u]$ in $u$ such that
\[
[L, A_{2k+1}] = r_{2k+3}[u] = \partial_z (\nu_{2k+2}[u]).
\]
The Korteweg--de Vries hierarchy \eqref{KdVh} turns into an infinite sequence of partial differential equations of the form
\begin{equation} \label{nu}
\frac{\partial u}{\partial t_{2k+1}} = \frac{\partial \nu_{2k+2}[u]}{\partial t_{1}}, \qquad k \in \mathbb{N}.
\end{equation}

\begin{ex}
Explicit expressions for operators $A_{2k+1}$, for $k = 0,1,2,3,4$:
\begin{align*}
A_1 &= \partial_z,\\
A_3 &= \partial_z^3 - \frac34 (u \partial_z + \partial_z u),\\
A_5 &= \partial_z^5 - \frac54 (u \partial_z^3 + \partial_z^3 u) + \frac5{16} ((u'' + 3 u^2) \partial_z + \partial_z (u'' + 3 u^2)),\\
A_7 &= \partial_z^7 - \frac74 (u \partial_z^5 + \partial_z^5 u) + \frac{35}{16} ( \alpha_4^7[u] \partial_z^3 + \partial_z^3 \alpha_4^7[u]) - \frac7{64}( \alpha_6^7[u] \partial_z + \partial_z \alpha_6^7[u]),\\
A_9 &= \partial_z^9 - \frac94 (u \partial_z^7 + \partial_z^7 u) + \frac{21}{16}( \alpha_4^9[u] \partial_z^5 + \partial_z^5 \alpha_4^9[u])\\
& \qquad-\frac{21}{64} ( \alpha_6^9[u] \partial_z^3 + \partial_z^3 \alpha_6^9[u])+ \frac3{256}( \alpha_8^9[u] \partial_z + \partial_z \alpha_8^9[u]),
\end{align*}
where
\begin{gather*}
\alpha_4^7[u] = u''+ u^2, \qquad \alpha_6^7[u] = 13 u^{(4)} + 10 u u'' + 25 u'^2 + 10 u^3, \\
\alpha_4^9[u] = 5 u'' + 3 u^2, \qquad \alpha_6^9[u] = 41 u^{(4)} + 30 u u'' + 55 u'^2 + 10 u^3, \\
\alpha_8^9[u] = 695 u^{(6)} + 546 u u^{(4)} + 2856 u' u''' + 2527 u''^2 + 210 u^2 u'' + 1050 u u'^2 + 105 u^4.
\end{gather*}
\end{ex}

\begin{ex}
Explicit expressions for differential polynomials $r_{2k+3}[u]$, for $k = 0,1,2,3,4$:
\begin{align*}
r_{3}[u] &= u',\\
r_{5}[u] &= \frac14 ( u''' - 6 u u'),\\
r_{7}[u] &= \frac1{16}(u^{(5)} - 10 u u''' - 20 u' u'' + 30 u^2 u'),\\
r_{9}[u] &= \frac1{64} ( u^{(7)} - 14 u u^{(5)} - 42 u' u^{(4)} - 70 u'' u'''\\
&\qquad + 70 u^2 u''' + 280 u u' u'' - 140 u^3 u' + 70 u'^3 ),\\
r_{11}[u] &= \frac1{256} ( u^{(9)} - 18 u u^{(7)} - 72 u' u^{(6)} - 168 u'' u^{(5)} + 126 u^2 u^{(5)} - 252 u''' u^{(4)} + 756 u u' u^{(4)}) + \\
&\qquad + \frac{21}{128} ( 30 u u'' u''' + 23 u'^2 u''' - 10 u^3 u''' + 31 u' u''^2 - 60 u^2 u' u'' - 30 u u'^3 + 15 u^4 u').
\end{align*}
\end{ex}

\section{Commuting Buchstaber--Shorina differential operators} \label{OBS}

Let $g \in \mathbb{N}$. Let also $v_2,\dots, v_{2g}$ be a sequence of differentiable functions of complex variables~$t = (t_1, t_3, t_5,\dots, t_{2g-1})$. Denote by $\partial_i$ the operator of differentiation with respect to the variable~$t_i$. By prime we denote the derivative with respect to $t_1$. We set
\begin{equation}
L = \partial_1^2 - v_2. \label{L}
\end{equation}

We call the following operators
\begin{align}
B_{2k+1} &= \partial_{2k-1} \partial_1^2 - \frac12 (v_2 \partial_{2k-1} + \partial_{2k-1} v_2) - \frac14 (v_{2k} \partial_1 + \partial_1 v_{2k}) - \partial_{2k+1}, \label{U}
\end{align}
where $k = 1,\dots, g$ and $\partial_{2g+1} = 0$, the \emph{Buchstaber--Shorina operators}.

Note that, for $k=1$, $t_1 = z$ and $v_2 = u$, we obtain $B_{3} = U_3$ (see \S \ref{S0}).

\begin{thm}[\rm(see Theorem 4.2 in \cite{BS} and Theorem 1 in \cite{BS1})] \label{TBS}
The following conditions are equivalent:

\begin{itemize}
\item for a sequence of functions $v_2, \dots, v_{2g}$, operators $L, B_{3}, B_{5},\dots,\break B_{2g+1}$, defined by the expressions \eqref{L} and~\eqref{U}, commute;

\item for a sequence of functions $v_2, \dots, v_{2g}$, there exists a set of constants $\mu=(\mu_4, \mu_6,\dots,$ $\mu_{4g+2})$ such that, for generating  functions
\begin{equation} \label{bx}
\mathbf{b}_1(\xi) = \frac12 \sum_{i=1}^g v_{2i} \xi^i, \quad \mathbf{b}_2(\xi) = \frac12 \sum_{i =1}^g v_{2i}' \xi^i, \quad\mathbf{b}_3(\xi) = \frac12 \sum_{i=1}^g v_{2i}'' \xi^i
\end{equation}
and $\mathbf{m}(\xi) = \xi^{-1} + \sum_{i=1}^{2g} \mu_{2i+2} \xi^i$, we have
\begin{equation} \label{mx}
4 \mathbf{m}(\xi) = \mathbf{b}_2(\xi)^2 + 2 \mathbf{b}_3(\xi) (1 - \mathbf{b}_1(\xi)) + 4 (\xi^{-1} + v_2) (1 - \mathbf{b}_1(\xi))^2.
\end{equation}
\end{itemize}

\end{thm}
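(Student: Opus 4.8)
The plan is to prove Theorem \ref{TBS} by directly computing the pairwise commutators $[L, B_{2k+1}]$ and $[B_{2k+1}, B_{2l+1}]$ and extracting the conditions under which they all vanish. Since $L$ and the $B_{2k+1}$ are explicit differential operators with coefficients built from the functions $v_{2i}$ and their $t_1$-derivatives, each commutator is again a differential operator whose coefficients are differential polynomials in the $v_{2i}$. The commutation of the whole family is equivalent to the simultaneous vanishing of all these coefficient functions, so the strategy is to identify exactly which relations among the $v_{2i}$ this forces, and then to show that these relations are precisely encoded in the single generating-function identity \eqref{mx} for a suitable constant sequence $\mu$.

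First I would treat the commutators $[L, B_{2k+1}]$ for $k = 1, \dots, g$. Because $L = \partial_1^2 - v_2$ involves only $t_1$ and $v_2$, while $B_{2k+1}$ mixes $\partial_{2k-1}$, $\partial_{2k+1}$, $\partial_1$ and the coefficients $v_2, v_{2k}$, this commutator is a differential operator in $\partial_1$ and the higher $\partial_{2j\pm1}$ of bounded order. I would collect its terms by the differential monomials in which they appear and set each coefficient to zero; the top-order terms involving $\partial_{2k-1}$ will produce heat-type relations linking $\partial_{2k-1} v_2$ to $t_1$-derivatives of $v_{2k}$, and the lower-order terms will give further constraints. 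The analogous computation for the mutual commutators $[B_{2k+1}, B_{2l+1}]$ yields the cross-relations among different $v_{2i}$. The key organizational device is to package all the resulting scalar relations into the three generating functions $\mathbf b_1, \mathbf b_2, \mathbf b_3$ of \eqref{bx}, noting that $\mathbf b_2 = \mathbf b_1'$ and $\mathbf b_3 = \mathbf b_1''$, so that the whole system becomes a small number of identities among functions of the auxiliary variable $\xi$.

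The decisive step is to show that the entire collection of coefficient equations is equivalent to the existence of a constant (i.e.\ $t_1$-independent) sequence $\mu_4, \dots, \mu_{4g+2}$ making \eqref{mx} hold. In one direction, I would substitute the generating-function form of the relations obtained above into the right-hand side of \eqref{mx} and verify that $\partial_1$ of that right-hand side vanishes identically as a consequence of the commutation relations; this forces the right-hand side to be a $t_1$-independent Laurent polynomial in $\xi$, whose coefficients define the constants $\mu_{2i+2}$, and one checks the polarity $\xi^{-1} + \sum \mu_{2i+2}\xi^i$ by tracking the lowest- and highest-order powers of $\xi$. Conversely, differentiating \eqref{mx} in $t_1$ and separating powers of $\xi$ recovers exactly the differential relations needed for all commutators to vanish, giving the reverse implication.

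The main obstacle I anticipate is the bookkeeping in the second direction: ensuring that the $\xi$-polarity structure of $\mathbf m(\xi)$ is forced rather than assumed, and that no hidden relation is lost when passing between the scalar coefficient equations and the compact generating-function identity. In particular one must verify that the coefficient of $\xi^{-1}$ in $4\mathbf m(\xi)$ is genuinely the constant $4$ (not merely $t_1$-independent) and that the truncation of $\mathbf b_1$ at degree $g$ is compatible with the closure condition $\partial_{2g+1} = 0$ used in defining $B_{2g+1}$. Handling the top index $k = g$ carefully, where the $\partial_{2g+1}$ term drops out, is where the argument is most delicate, and I would isolate that case to confirm that the generating-function identity \eqref{mx} still captures the correct boundary relation.
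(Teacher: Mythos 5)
You should first note that the paper itself does not prove Theorem \ref{TBS}: it is imported verbatim from Theorem 4.2 of \cite{BS} and Theorem 1 of \cite{BS1}, so there is no internal proof to measure your attempt against. That said, your outline follows essentially the same route as the original argument, as far as it can be reconstructed from the auxiliary facts the paper does reproduce: the commutators $[L,B_{2k+1}]$ and $[B_{2i+1},B_{2j+1}]$ reduce to the scalar relations quoted inside the proof of Lemma \ref{l51} (Lemmas 2.1 and 2.4 of \cite{BS}), and the $t_1$-derivative of the right-hand side of \eqref{mx} factors as $2(1-\mathbf{b}_1(\xi))$ times a series whose coefficientwise vanishing is precisely the relation $v_{2k}'''=4v_2v_{2k}'+2v_2'v_{2k}+4v_{2k+2}'$ of Lemma \ref{vppp}; this is exactly the computation your ``decisive step'' calls for, and it does close. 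Two places where your plan is thinner than it needs to be. First, you only argue $t_1$-independence of the putative coefficients $\mu_{2i+2}$, whereas ``constants'' in the statement means independence of all of $t_1,\dots,t_{2g-1}$; the independence in the higher times has to be extracted from the mutual commutators $[B_{2i+1},B_{2j+1}]$ (equivalently from the relations $\partial_{2i-1}v_{2j}=\partial_{2j-1}v_{2i}$ and $v_{2k}'v_{2i}-v_{2i}'v_{2k}+2\partial_{2i+1}v_{2k}-2\partial_{2k+1}v_{2i}=0$), which you mention in passing but never connect back to the generating-function identity. Second, the proposal remains a plan: the commutator expansions, the identification of the coefficient equations, and the factorization of $\partial_1$ of the right-hand side of \eqref{mx} are all asserted rather than performed. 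Neither point is a wrong turn, but both would have to be carried out for this to count as a proof rather than a correct strategy.
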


\begin{thm}[\rm(see Theorems 4.2 and 4.3 in \cite{BS}, Theorems 1 and 2 in \cite{BS1})]\label{TBS2}
Under the conditions of Theorem \ref{TBS}, the function $v_2$ is a solution of the higher stationary $g$-KdV equation~\eqref{gKdV} for $z=t_1$. The constants $\mu_4, \mu_6,\dots, \mu_{2g}$ and the parameters of the higher stationary $g$-KdV equation $a_4, a_6,\dots, a_{2g}$ satisfy the following relations for $k = 1, \dots,g-1$:
\begin{equation} \label{44}
\mu_{2k+2} = 2 a_{2k+2} + \sum_{i=1}^{k-2} a_{2i+2} a_{2k-2i}.
\end{equation}
\end{thm}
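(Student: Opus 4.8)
The statement to prove, Theorem \ref{TBS2}, asserts two things under the hypothesis that the operators $L, B_3, \dots, B_{2g+1}$ commute pairwise: first, that $v_2$ solves the higher stationary $g$-KdV equation \eqref{gKdV}, and second, that the constants $\mu_{2k+2}$ from Theorem \ref{TBS} are related to the KdV parameters $a_{2k}$ by \eqref{44}. The plan is to leverage Theorem \ref{TBS}, which gives the equivalent algebraic characterization \eqref{mx} in terms of the generating functions \eqref{bx}, so that the commutation hypothesis is already translated into a single polynomial identity in $\xi$.

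First I would analyze the commutation relation $[L, B_{2k+1}] = 0$ directly, or equivalently work with the generating-function identity \eqref{mx}. The key observation is that the operator $B_{2k+1}$ in \eqref{U} has the same structural shape as the Buchstaber--Shorina family, and when $k$ runs from $1$ to $g$ with $\partial_{2g+1} = 0$, the top relation $[L, B_{2g+1}] = 0$ (where the $\partial_{2g+1}$ term drops out) should reduce to a purely $t_1$-differential constraint on $v_2$. My expectation is that extracting the coefficient of each power of $\xi$ in \eqref{mx}, and using that $\mathbf{m}(\xi)$ has only the finite tail $\xi^{-1} + \sum_{i=1}^{2g} \mu_{2i+2}\xi^i$, forces the vanishing of all sufficiently high coefficients; the coefficient of $\xi^{g+1}$ (just above the top term recorded in $\mathbf{m}$) should yield exactly the combination $r_{2g+3}[v_2] + \sum_{j=0}^{g-2} a_{2g-2j} r_{2j+3}[v_2]$, identifying it with zero and thereby recovering \eqref{gKdV}. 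The identification of the parameters $a_{2k}$ is where the relation \eqref{44} enters, since the $a$'s are not free but determined by the recursion encoded in the generating functions.

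To establish \eqref{44}, I would compare two expansions. On one side, the higher stationary $g$-KdV equation \eqref{gKdV} is built from the differential polynomials $r_{2j+3}[u]$ weighted by the parameters $a_{2k}$. On the other side, the constants $\mu_{2i+2}$ appear as the coefficients in $\mathbf{m}(\xi)$, i.e. as the expansion of the right-hand side of \eqref{mx}. The plan is to substitute the generating functions \eqref{bx} into \eqref{mx}, expand both sides as Laurent series in $\xi$, and match coefficients. Since $\mathbf{b}_1, \mathbf{b}_2, \mathbf{b}_3$ are the generating functions for $v_{2i}, v_{2i}', v_{2i}''$, and since the $v_{2i}$ are themselves expressible through $v_2$ and its $t_1$-derivatives via the recursion underlying the Buchstaber--Shorina construction, the coefficient of $\xi^{k}$ on the right of \eqref{mx} produces a quadratic expression in the $a$'s. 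Tracking the linear term $2a_{2k+2}$ and the convolution sum $\sum_{i=1}^{k-2} a_{2i+2} a_{2k-2i}$ should reproduce \eqref{44} exactly.

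The main obstacle I anticipate is the bookkeeping in the quadratic term $(1 - \mathbf{b}_1(\xi))^2$ and the cross term $\mathbf{b}_2(\xi)^2$ in \eqref{mx}: these generate convolutions of the coefficient sequences, and I must verify that, after using the stationary equation to eliminate the highest $v_{2i}$ in terms of lower ones, precisely the bilinear combination $\sum_{i=1}^{k-2} a_{2i+2} a_{2k-2i}$ survives with the stated index range. The delicate point is getting the summation limits right — in particular confirming that the endpoints $i=1$ and $i = k-2$ are correct and that no boundary contribution from the $\xi^{-1}$ term or the top-degree $\mu_{4g+2}$ term spoils the pattern. This requires careful attention to the weights $\wt v_{2i} = 2i$ and the grading $\wt a_{2k} = 2k$, using homogeneity to constrain which products can appear at each order. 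Once the grading is used to isolate the admissible terms, the identity \eqref{44} should follow from matching the coefficient of $\xi^{k+1}$ on both sides.
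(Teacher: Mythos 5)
First, a point of reference: the paper does not prove Theorem \ref{TBS2} at all --- it is imported from \cite{BS} (Theorems 4.2, 4.3) and \cite{BS1} (Theorems 1, 2) and used as a black box --- so there is no internal proof to compare yours against. Judged on its own, your proposal correctly identifies the right machinery: pass from the commutation hypothesis to the generating-function identity \eqref{mx} via Theorem \ref{TBS}, then read off coefficients of powers of $\xi$; this is exactly the apparatus the paper records after the theorem in \eqref{thateps} and \eqref{vmu}, and your alternative remark that $[L,B_{2g+1}]=0$ with $\partial_{2g+1}=0$ yields a pure $t_1$-constraint is also correct (it is \eqref{prov} for $k=g$ combined with Lemma \ref{vppp} and $v_{2g+2}=0$). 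However, the two decisive identifications are asserted rather than carried out, and one of your structural claims is wrong.

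The main gap is that the parameters $a_4,\dots,a_{2g}$ never actually enter your argument except as something the expansion ``should'' produce. They are defined as the coefficients of the stationary equation \eqref{gKdV}, which is the very thing you are constructing, so ``tracking the linear term $2a_{2k+2}$ and the convolution sum'' in the expansion of \eqref{mx} is circular until you build a bridge between the two recursions in play: the $\mu$-driven recursion $v_{2k}=\frac14\varepsilon_{2k}-\mu_{2k}$ of \eqref{vmu}, and the Lenard/Gelfand--Dickey recursion $\mathcal{R}(r_{2k+1}[u])=r_{2k+3}[u]$ of \eqref{len} in which the $a$'s arise as integration constants of $\partial_z^{-1}$ (see the discussion around \eqref{Len} in the introduction). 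The missing step is an induction on $k$ showing $\partial_1 v_{2k+2}=r_{2k+3}[v_2]+\sum_{j}a_{2k-2j}\,r_{2j+3}[v_2]$, where at each stage the constant freed by $\partial_z^{-1}$ is pinned down by $\mu_{2k+2}$; the convolution $\sum_{i=1}^{k-2}a_{2i+2}a_{2k-2i}$ in \eqref{44} is then generated by the quadratic terms $\mathbf{b}_2(\xi)^2$ and $\mathbf{b}_1(\xi)^2$ of \eqref{thateps} acting on the constant parts of the lower $v_{2i}$, and it does not ``follow from matching the coefficient of $\xi^{k+1}$'' without this comparison. Note also that \eqref{44} involves only $\mu_4,\dots,\mu_{2g}$, i.e.\ precisely the constants entering \eqref{vmu} for $k\leqslant g$; the higher constants $\mu_{2g+4},\dots,\mu_{4g+2}$ play no role in \eqref{44}.

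Second, your description of where the stationary equation comes from is inaccurate: nothing is ``forced to vanish'' at high order, since $\mathbf{m}(\xi)$ runs up to $\xi^{2g}$ (its top term is $\mu_{4g+2}\xi^{2g}$, not the term at $\xi^{g}$), and the coefficients beyond order $g$ simply define the remaining $\mu$'s. The equation appears at the single order where $v_{2g+2}$ would be defined: since $v_{2g+2}=0$, relation \eqref{vmu} degenerates there to $\varepsilon_{2g+2}=4\mu_{2g+2}$, an identity equating a weight-$(2g+2)$ differential polynomial in $v_2$ with a constant. One must then apply $\partial_1$ and identify the resulting weight-$(2g+3)$ expression with the combination in \eqref{gKdV} --- which again requires the inductive identification above. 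So the skeleton of your plan is sound and consistent with how \cite{BS} proceeds, but as written it is an outline with the two essential verifications left open.
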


In \cite{BS} it is shown that functions $v_{2k}$ from the Theorem \ref{TBS}, for $k=2,3,\dots,g$, are recursively expressed in terms of function $v_2$, its derivatives with respect to $z = t_1$ and constants $\mu_{2i}$, namely (see equation (4.3) in \cite{BS}),
\begin{equation} \label{43}
v_{2k} = T_k(v_2, v_2',\dots, v_2^{(2k-2)}, \mu_{4},\dots, \mu_{2k}),
\end{equation}
where $T_k$ are polynomials. The recursive expression for polynomials $T_k$ can be found using \eqref{mx}. Namely, for $\mathbf{b}_i(\xi)$ given by the equalities \eqref{bx}, we set
\begin{equation} \label{thateps}
\sum_{k=1}^\infty \varepsilon_{2k} \xi^k = \xi \mathbf{b}_2(\xi)^2 + 2 \xi \mathbf{b}_3(\xi) (1 - \mathbf{b}_1(\xi)) + 4 \mathbf{b}_1(\xi)^2 + 4 v_2 \xi (1 - \mathbf{b}_1(\xi))^2.
\end{equation}
We have $\varepsilon_{2} = 4 v_{2}$. For $k \geqslant 2$, equation \eqref{thateps} defines $\varepsilon_{2k}$ as a homogeneous polynomial in $v_{2i}$, $v_{2i}'$, and $v_{2i}''$, where $i \in \{1,\dots,k-1\}$. Relations
\begin{equation} \label{vmu}
v_{2k} = \frac14 \varepsilon_{2k} - \mu_{2k}
\end{equation}
define the required recursive expression.

Note that, for $u = v_2$ and for the constants~$\mu_4, \mu_6,\dots, \mu_{2g}$ given by the relation
\eqref{44}, for $k = 1,\dots,g-1$, we obtain $v_{2k} \in \mathcal{H}_A$ and $\wt v_{2k} = 2 k$.

\begin{lem}[\rm(Lemma 2.1 in \cite{BS} and Lemma 1 in \cite{BS1})]\label{lem33}
Under the conditions of the Theorem \ref{TBS}, for $k = 1,\dots,g-1$, we have
\begin{equation} \label{tpar}
\frac{\partial v_2}{\partial t_{2k+1}} = \frac{\partial v_{2k+2}}{\partial t_{1}}\,.
\end{equation}
\end{lem}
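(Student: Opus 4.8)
The plan is to read off relation \eqref{tpar} as the order-zero part of the commutation $[L,B_{2k+1}]=0$ guaranteed by Theorem \ref{TBS}, and then to recognise that order-zero part as $\partial_1 v_{2k+2}$ via the Lenard recursion. First I would split the operator \eqref{U} as $B_{2k+1}=C_{2k+1}-\partial_{2k+1}$, where $C_{2k+1}=\partial_{2k-1}\partial_1^2-\tfrac12(v_2\partial_{2k-1}+\partial_{2k-1}v_2)-\tfrac14(v_{2k}\partial_1+\partial_1 v_{2k})$ is the genuine differential part. Since the variables $t_1$ and $t_{2k+1}$ are independent, $[\partial_1^2,\partial_{2k+1}]=0$ while $[v_2,\partial_{2k+1}]=-\partial_{2k+1}v_2$ as a multiplication operator, so $[L,\partial_{2k+1}]=\partial_{2k+1}v_2$. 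Hence $[L,B_{2k+1}]=0$ is equivalent to $[L,C_{2k+1}]=\partial_{2k+1}v_2$. As the right-hand side is a multiplication operator, every positive-order term of $[L,C_{2k+1}]$ must cancel (these cancellations are precisely the remaining conditions of Theorem \ref{TBS}, which also recursively express $v_{2k}$ in terms of $v_2$), and the order-zero part gives
\[
\partial_{2k+1}v_2=[L,C_{2k+1}]_{(0)}.
\]

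Next I would compute $[L,C_{2k+1}]_{(0)}$ by the Leibniz rule, treating $v_2$ and $v_{2k}$ as multiplication operators and repeatedly using $[\partial_1^2,f]=2f'\partial_1+f''$ together with its $\partial_{2k-1}$-analogue. A term-by-term calculation yields
\[
[L,C_{2k+1}]_{(0)}=\tfrac12\,\partial_{2k-1}\partial_1^2 v_2-v_2\,\partial_{2k-1}v_2-\tfrac14\,\partial_1^3 v_{2k}-\tfrac12\,v_{2k}\,\partial_1 v_2 .
\]

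Then I would argue by induction on $k$, the base case $k=1$ requiring only the trivial identity $\partial_1 v_2=\partial_1 v_2$. Using the lower flow $\partial_{2k-1}v_2=\partial_1 v_{2k}$, i.e.\ the already-established instance of \eqref{tpar}, to substitute $\partial_{2k-1}\partial_1^2 v_2=\partial_1^3 v_{2k}$ and $v_2\,\partial_{2k-1}v_2=v_2\,\partial_1 v_{2k}$, the displayed expression collapses to $\tfrac14\,\partial_1^3 v_{2k}-v_2\,\partial_1 v_{2k}-\tfrac12\,(\partial_1 v_2)\,v_{2k}$, which is exactly $\mathcal{R}(\partial_1 v_{2k})$ for the Lenard operator \eqref{Len} (with $\partial_1^{-1}(\partial_1 v_{2k})=v_{2k}$). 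The Lenard recursion \eqref{len}, transported to the functions $v_{2i}$ defined by the generating relation \eqref{mx} (equivalently \eqref{43}, \eqref{vmu}), then gives $\mathcal{R}(\partial_1 v_{2k})=\partial_1 v_{2k+2}$, so that $\partial_{2k+1}v_2=\partial_1 v_{2k+2}$ and the induction closes.

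The main obstacle is this last identification $\mathcal{R}(\partial_1 v_{2k})=\partial_1 v_{2k+2}$ for the functions produced by \eqref{mx}: it is here that the integration constant of $\partial_1^{-1}$ enters, and one must check that it matches the constants $\mu$ (equivalently the parameters $a$) occurring in \eqref{mx} and in Theorem \ref{TBS2}. Concretely, I would differentiate \eqref{mx} with respect to $t_1$, compare coefficients of powers of $\xi$, and verify that the resulting recursion among the $v_{2i}$ coincides with the Lenard recursion up to precisely these constants; this is the only step where the parametric structure, rather than routine commutator algebra, does real work.
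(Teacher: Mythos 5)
The paper itself gives no proof of this lemma: it is imported verbatim from \cite{BS} and \cite{BS1}. Still, the way the paper invokes the cited result inside the proof of Lemma \ref{l51} makes the intended argument clear, and it is considerably shorter than yours: expanding $[L,B_{2k+3}]$ for the operators \eqref{L}, \eqref{U} and requiring it to vanish, the coefficient of $\partial_1^2$ is exactly $\partial_{2k+1}v_2-\partial_1 v_{2k+2}$, so \eqref{tpar} is read off from a single positive-order coefficient of a single commutator (this is the first relation in \eqref{prov}); no induction, no Lenard operator, and no use of the generating relation \eqref{mx} are needed. Your argument is nevertheless correct: I checked your order-zero coefficient $[L,C_{2k+1}]_{(0)}=\tfrac12\partial_{2k-1}\partial_1^2v_2-v_2\,\partial_{2k-1}v_2-\tfrac14\partial_1^3v_{2k}-\tfrac12 v_{2k}\,\partial_1v_2$, the collapse to $\mathcal{R}(\partial_1 v_{2k})$, and the final identification $\mathcal{R}(\partial_1 v_{2k})=\partial_1 v_{2k+2}$; the ``main obstacle'' you flag is precisely Lemma \ref{vppp} (Theorem 3.1 in \cite{BS}), which follows by differentiating \eqref{mx} with respect to $t_1$ and comparing coefficients of $\xi^k$ --- the constants $\mu$ are annihilated by the differentiation, so no matching of integration constants is actually required. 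Two points of redundancy, though. First, the ``lower flow'' $\partial_{2k-1}v_2=\partial_1 v_{2k}$ that you feed into your induction at step $k$ is itself the $\partial_1^2$-coefficient condition of the very commutator $[L,B_{2k+1}]=0$ you are expanding, so it is available without any induction; and for $k=2,\dots,g$ that condition already \emph{is} the lemma, which makes the whole order-zero/Lenard detour unnecessary. Second, your parenthetical claim that the positive-order cancellations ``are precisely the remaining conditions of Theorem \ref{TBS}, which also recursively express $v_{2k}$ in terms of $v_2$'' is inaccurate: those cancellations yield the differential relations $\partial_{2k-1}v_2=v_{2k}'$, not the algebraic recursion \eqref{43}, \eqref{vmu} encoded by \eqref{mx}. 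What your longer route does buy is an explicit exhibition of the Lenard structure $\partial_{2k+1}v_2=\mathcal{R}(\partial_{2k-1}v_2)$ behind the hierarchy, which the direct one-line extraction keeps hidden.
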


Note that equation \eqref{tpar}, for $k = 1,\dots,g-1$, defines a hierarchy for the function $v_2$, where the right-hand sides of equations of the hierarchy lie in $\mathcal{H}_A$. The higher stationary $g$-KdV equation from Theorem \ref{TBS2} can be considered as the $g$-th equation of this hierarchy.

\begin{lem}[\rm(Theorem 3.1 in \cite{BS})] \label{vppp}
Under the conditions of Theorem \ref{TBS}, for $k = 1,\dots,g$, where $v_{2g+2} = 0$, we have
\[
v_{2k}''' = 4 v_2 v_{2k}' + 2 v_2' v_{2k} + 4 v_{2k+2}'.
\]
\end{lem}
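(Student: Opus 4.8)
The plan is to work directly with the generating-function characterisation \eqref{mx} from Theorem~\ref{TBS}, differentiating it once in $t_1$. The crucial structural observation is that the three generating functions in \eqref{bx} are related by $t_1$-differentiation: since primes denote $\partial_1$, we have $\mathbf{b}_2(\xi) = \mathbf{b}_1(\xi)'$ and $\mathbf{b}_3(\xi) = \mathbf{b}_1(\xi)''$. Moreover the coefficients $\mu_{2i+2}$ of $\mathbf{m}(\xi)$ are constants, so $\mathbf{m}(\xi)' = 0$. Hence differentiating \eqref{mx} in $t_1$ annihilates the left-hand side and produces a single relation among $\mathbf{b}_1$, its $t_1$-derivatives, and $v_2$.

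First I would carry out this differentiation. Writing $F$ for the right-hand side of \eqref{mx}, the two contributions $2\mathbf{b}_1'\mathbf{b}_1''$ arising from $(\mathbf{b}_1')^2$ and from $2\mathbf{b}_3(1-\mathbf{b}_1)$ cancel, and the surviving terms all carry a factor $(1-\mathbf{b}_1)$. Collecting them gives
\[
0 = F' = 2(1-\mathbf{b}_1)\bigl[\mathbf{b}_1''' - 4(\xi^{-1}+v_2)\mathbf{b}_1' + 2 v_2'(1-\mathbf{b}_1)\bigr].
\]
Since $\mathbf{b}_1(\xi) = \tfrac12\sum_{i=1}^g v_{2i}\xi^i$ has no constant term, the factor $1-\mathbf{b}_1(\xi)$ is an invertible formal power series in $\xi$, so the bracket must vanish, yielding
\[
\mathbf{b}_1''' = 4(\xi^{-1}+v_2)\mathbf{b}_1' - 2 v_2'(1-\mathbf{b}_1).
\]

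Finally I would expand this identity in powers of $\xi$ and read off the coefficient of $\xi^k$. On the left it is $\tfrac12 v_{2k}'''$. On the right, the term $4\xi^{-1}\mathbf{b}_1'$ shifts the index and contributes $2 v_{2k+2}'$ for $1\le k\le g-1$, and nothing when $k=g$ (which is exactly the content of the convention $v_{2g+2}=0$); the term $4 v_2\mathbf{b}_1'$ contributes $2 v_2 v_{2k}'$, the term $2 v_2'\mathbf{b}_1$ contributes $v_2' v_{2k}$, and the lone $-2v_2'$ affects only the $\xi^0$ coefficient. Matching the coefficient of $\xi^k$ and multiplying by $2$ gives precisely $v_{2k}''' = 4 v_2 v_{2k}' + 2 v_2' v_{2k} + 4 v_{2k+2}'$ for each $k=1,\dots,g$.

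The computation is essentially routine once the right manoeuvre is identified; the step I expect to be the main obstacle is recognising that the full $t_1$-derivative $F'$ factors through $(1-\mathbf{b}_1)$, so that this nonvanishing factor can be cancelled to isolate the bracketed relation. A secondary point requiring care is the index shift coming from the $\xi^{-1}$ term, which must be tracked correctly so that the boundary case $k=g$ with $v_{2g+2}=0$ emerges automatically rather than having to be argued separately.
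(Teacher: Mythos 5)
Your proof is correct. The paper does not actually prove this lemma; it imports it verbatim as Theorem~3.1 of \cite{BS}, so there is no internal argument to compare against. Your derivation --- differentiate the generating-function identity \eqref{mx} in $t_1$ using $\mathbf{b}_2=\mathbf{b}_1'$, $\mathbf{b}_3=\mathbf{b}_1''$ and the constancy of $\mathbf{m}(\xi)$, observe the cancellation of the $2\mathbf{b}_1'\mathbf{b}_1''$ terms, cancel the unit factor $1-\mathbf{b}_1(\xi)$ (constant term $1$, so invertible, since $\mathbb{C}[\xi,\xi^{-1}]$ is a domain), and read off the coefficient of $\xi^k$ --- checks out line by line, including the boundary case $k=g$ where the index shift from $4\xi^{-1}\mathbf{b}_1'$ produces nothing, matching the convention $v_{2g+2}=0$. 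It is also entirely in the spirit of the paper's own use of \eqref{mx}: the recursion \eqref{thateps}--\eqref{vmu} for the $v_{2k}$ is extracted from the same identity by coefficient comparison, so your argument supplies a self-contained proof of the cited fact using only data already present in Theorem~\ref{TBS}. The one point worth stating explicitly rather than in passing is that ``under the conditions of Theorem~\ref{TBS}'' you are invoking the second (generating-function) form of the equivalence, which is legitimate since the theorem asserts the two conditions are equivalent.
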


\begin{thm}[\rm(Theorem 4.3 in~\cite{BS} and Theorem 2 in \cite{BS1})] \label{t42}
Under the conditions of Theorem~\ref{TBS}, we have 
\[
\sum_{k=1}^g B_{2k+1} L^{g-k} = A_{2g+1} + \sum_{k=0}^{g-2} a_{2g-2k} A_{2k +1},
\]
where the parameters $a_4, a_6,\dots, a_{2g}$ are related to the constants~$\mu_4, \mu_6,\dots, \mu_{2g}$ via \eqref{44}.
\end{thm}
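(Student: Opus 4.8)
The plan is to reduce the left-hand side $\tilde B:=\sum_{k=1}^g B_{2k+1}L^{g-k}$ to a differential operator in $\partial_1$ alone and then to identify it with the right-hand side through the characterizing property of the operators $A_{2m+1}$ from Lemma \ref{lemA}. First I would rewrite each Buchstaber--Shorina operator \eqref{U} in the factored form
\[
B_{2k+1} = L\partial_{2k-1} - \partial_{2k+1} - \tfrac12(\partial_{2k-1}v_2) - \tfrac14(v_{2k}\partial_1 + \partial_1 v_{2k}),
\]
using $\partial_{2k-1}\partial_1^2 - v_2\partial_{2k-1} = L\partial_{2k-1}$ and the operator relation $\partial_{2k-1}v_2 = v_2\partial_{2k-1} + (\partial_{2k-1}v_2)$, where the last term is multiplication by the function $\partial_{2k-1}v_2$.

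Next I would sum over $k$ and watch the genuine ``time'' derivatives cancel. In $\sum_k(L\partial_{2k-1} - \partial_{2k+1})L^{g-k}$ each $\partial_m$ with $m=3,5,\dots,2g-1$ occurs exactly twice: once from $L\partial_{2k-1}L^{g-k}$ with $k=(m+1)/2$ and once from $-\partial_{2k+1}L^{g-k}$ with $k=(m-1)/2$, and the two combine into $[L,\partial_m]L^{g-(m+1)/2} = (\partial_m v_2)L^{g-(m+1)/2}$, which contains no $\partial_m$. The summand $\partial_{2g+1}$ vanishes by convention, while $m=1$ leaves the genuine operator $\partial_1 L^g$. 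Hence every derivative except $\partial_1$ disappears, and applying Lemma \ref{lem33} in the form $\partial_{2k-1}v_2 = v_{2k}'$ I would present $\tilde B$ as a differential operator in $\partial_1$ only, with leading term $\partial_1^{2g+1}$ coming from $\partial_1 L^g$; Lemma \ref{vppp} supplies the relations among the $v_{2k}$ needed, if desired, to check directly that the result is already in the symmetric form \eqref{Ak}.

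The two structural facts I would then exploit are that both sides commute with $L$ and are formally skew-adjoint. On the left, each $B_{2k+1}$ commutes with $L$ by Theorem \ref{TBS}, so does each $B_{2k+1}L^{g-k}$; moreover $B_{2k+1}^{*}=-B_{2k+1}$ and $L^{*}=L$, whence $(B_{2k+1}L^{g-k})^{*}=L^{g-k}(-B_{2k+1})=-B_{2k+1}L^{g-k}$ by commutativity, giving $\tilde B^{*}=-\tilde B$. On the right, writing $C:=A_{2g+1}+\sum_{k=0}^{g-2}a_{2g-2k}A_{2k+1}$, each $A_{2m+1}$ is skew-adjoint and $[L,C]=r_{2g+3}[v_2]+\sum_{k=0}^{g-2}a_{2g-2k}r_{2k+3}[v_2]$ (substituting $u=v_2$) vanishes precisely because $v_2$ solves the higher stationary $g$-KdV equation \eqref{gKdV} by Theorem \ref{TBS2}; thus $C$ also commutes with $L$, is skew-adjoint, and has order $2g+1$ with leading term $\partial_1^{2g+1}$.

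Finally I would conclude $\tilde B=C$ by examining $E:=\tilde B-C$. This is a formally skew-adjoint differential operator in $\partial_1$ commuting with $L$; skew-adjointness forces its order to be odd and, from the top symbol of $[L,E]=0$, its leading coefficient to be a $\partial_1$-constant, while the matching leading terms give $\ord E\leqslant 2g-1$. For a nondegenerate curve of genus $g$ the centralizer of $L$ among differential operators in $\partial_1$ contains no nonzero operator of odd order below $2g+1$, so $E=0$; equivalently, staying inside $\mathcal{H}_A$, one argues as in Lemma \ref{lemA} that $[L,D]\in\mathcal{H}_A$ determines the symmetric coefficients of a skew-adjoint $D$ of the form \eqref{Ak} recursively up to $\partial_1$-integration constants, whose freedom is exactly $\sum c_{2g-2k}A_{2k+1}$, and matching these constants for $\tilde B$ reproduces the relations \eqref{44} already furnished by Theorem \ref{TBS2}. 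I expect this last step to be the main obstacle: the telescoping in the first step, though the computational heart that turns $\tilde B$ into a pure $\partial_1$-operator, is routine, whereas proving that the skew-adjoint $E$ commuting with $L$ must vanish is where the nondegeneracy of the curve (equivalently, the uniqueness in Lemma \ref{lemA} combined with the stationary equation) is genuinely used.
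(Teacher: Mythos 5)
The paper does not prove Theorem \ref{t42}: it is imported verbatim from Theorem 4.3 in \cite{BS} and Theorem 2 in \cite{BS1}, so there is no in-paper argument to measure yours against; I therefore assess your proposal on its own terms. Your opening reduction is correct and is the right computational core: writing $B_{2k+1}=L\partial_{2k-1}-\partial_{2k+1}-\tfrac12(\partial_{2k-1}v_2)-\tfrac14(v_{2k}\partial_1+\partial_1 v_{2k})$ and telescoping does eliminate every $\partial_m$ with $m\geqslant 3$, and Lemma \ref{lem33} converts the surviving coefficients into $t_1$-derivatives of the $v_{2k}$, yielding the pure $\partial_1$-operator $\tilde B=\partial_1L^{g}-\sum_{k=1}^{g}\bigl(\tfrac12 v_{2k}\partial_1-\tfrac14 v_{2k}'\bigr)L^{g-k}$ of order $2g+1$ with unit leading coefficient (your phrase ``$m=1$ leaves $\partial_1L^g$'' should read $L\partial_1L^{g-1}=\partial_1L^{g}+v_2'L^{g-1}$, a harmless slip). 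The skew-adjointness observations are also correct.

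The gap is in the identification $\tilde B=C$. Your primary route rests on the claim that the centralizer of $L$ contains no odd-order operator of order below $2g+1$ ``for a nondegenerate curve of genus $g$''; but nondegeneracy is not among the hypotheses of Theorem \ref{TBS} (the constants $\mu$ are arbitrary, and the associated curve $y^2=x^{2g}\mathbf{m}(1/x)$ may be singular), and the centralizer statement is itself Burchnall--Chaundy theory that you would have to establish. Your fallback route --- uniqueness as in Lemma \ref{lemA} --- is the correct idea but, as written, is applied \emph{on shell}: you know $[L,\tilde B]=0$ only for functions $v_2$ satisfying the stationary $g$-KdV equation, whereas the recursive integration in Lemma \ref{lemA} takes place in the free differential algebra, where the solution variety of \eqref{gKdV} carries nonconstant first integrals (elements of $\mathcal{H}_A$ annihilated by $\partial_1$ on solutions); an identity valid only on solutions cannot be integrated coefficient-by-coefficient there. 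The repair uses ingredients you already name: substitute \eqref{43} so that the coefficients of $\tilde B$ become elements of $\mathcal{H}_A$ with $u=v_2$ treated as a free variable, and compute $[L,\tilde B]$ off shell using Lemma \ref{vppp} for $k\leqslant g-1$ only (these cases are identities in $\mathcal{H}_A$, being the $t_1$-derivative of \eqref{mx}, whereas the case $k=g$ is the stationary equation itself). One finds $[L,\tilde B]=\tfrac14 v_{2g}'''-v_2v_{2g}'-\tfrac12 v_2'v_{2g}$, a multiplication operator in $\mathcal{H}_A$. Only then does the uniqueness argument give $\tilde B=A_{2g+1}+\sum_k c_{2g-2k}A_{2k+1}$ with constant $c$'s, and comparing $[L,\,\cdot\,]$ of both sides with Theorem \ref{TBS2} identifies $c_{2j}=a_{2j}$ and yields \eqref{44}.
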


In~\cite{BS} and \cite{BS2} a problem closely related to the problem of expressing the sigma-fun-\break ction in terms of tau functions was solved (see \cite{Nak-10-AJM}):

\begin{prb}[\rm(problem 1 in \cite{BS})]
Let $v_2$ be the function from Theorem  \ref{TBS}. Consider the following equation for $w$:
\begin{equation} \label{91}
v_2 = - 2 \partial_1^2 \ln w
\end{equation}
with initial conditions $w(0) = 1$, $\partial_k w(0) = 0$, $k=1, 3,\dots, 2g-1$. Find natural additional conditions so that this equation has a unique solution.
\end{prb}

\begin{thm}[\rm(Theorem 9.1 in \cite{BS} and Theorem 1 in \cite{BS2})]\label{T91}
There exists a differen-\break tiable solution $w$ of equation \eqref{91} such that 
\[
v_{2k} = - 2 \partial_1 \partial_{2k-1} \ln w,
\]
for $k=1,\dots, g$, are functions of $v_2, v_4,\dots, v_{2g}$ from Theorem \ref{TBS}.
\end{thm}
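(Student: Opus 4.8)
The plan is to pass to $F = \ln w$ and to observe that the two assertions of the theorem merge into a single system. Since $\partial_k w = w\,\partial_k F$, equation~\eqref{91} reads $\partial_1^2 F = -\tfrac12 v_2$, while the desired relations $v_{2k} = -2\partial_1\partial_{2k-1}\ln w$ read $\partial_1\partial_{2k-1} F = -\tfrac12 v_{2k}$ for $k = 1,\dots,g$, the case $k=1$ being exactly~\eqref{91}; the prescribed initial conditions become $F(0) = 0$ and $\partial_{2k-1} F(0) = 0$. First I would reduce this to a first-order problem: set $P = \partial_1 F$. Then it suffices to produce a single function $P$ with
\[
\partial_{2k-1} P = -\tfrac12 v_{2k}, \qquad k = 1,\dots,g,
\]
because for any $F$ with $\partial_1 F = P$ one automatically gets $\partial_1\partial_{2k-1} F = \partial_{2k-1} P = -\tfrac12 v_{2k}$.

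The existence of such a $P$ is governed by the closedness of the $1$-form $-\tfrac12\sum_{k=1}^g v_{2k}\,dt_{2k-1}$, that is, by the symmetry relations
\[
\partial_{2i-1} v_{2j} = \partial_{2j-1} v_{2i}, \qquad 1 \leqslant i,j \leqslant g.
\]
This is the heart of the matter, and I expect it to be the main obstacle. The case $i = 1$ (equivalently $j=1$) is precisely Lemma~\ref{lem33}, which gives $\partial_{2j-1} v_2 = \partial_1 v_{2j}$. For the general case I would argue that these relations are exactly the compatibility (flatness) conditions built into the commutativity of the Buchstaber--Shorina operators from Theorem~\ref{TBS}: after rewriting $\partial_{2i-1} v_2 = v_{2i}'$ by Lemma~\ref{lem33}, comparison of the coefficients of the appropriate operator monomials in the identity $[B_{2i+1}, B_{2j+1}] = 0$ produces the vanishing of $\partial_{2i-1} v_{2j} - \partial_{2j-1} v_{2i}$; the auxiliary identities of Lemma~\ref{vppp}, namely $v_{2k+2}' = \tfrac14 v_{2k}''' - v_2 v_{2k}' - \tfrac12 v_2' v_{2k}$, serve to eliminate the higher $t_1$-derivatives that arise. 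The delicate point is that the leading coefficients of the commutator cancel on their own (they are already controlled by $[L,B_{2k+1}] = 0$), so the symmetry surfaces only in the lower-order terms and requires careful bookkeeping.

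Granting the symmetry, the remainder is routine. By the Poincar\'e lemma there is a function $P$, unique up to an additive constant, with $\partial_{2k-1} P = -\tfrac12 v_{2k}$ for all $k$, and I normalise it by $P(0) = 0$. I then set
\[
F(t_1, t_3,\dots, t_{2g-1}) = \int_0^{t_1} P(s, t_3,\dots, t_{2g-1})\,ds, \qquad w = e^{F}.
\]
By construction $\partial_1 F = P$, whence $\partial_1^2 F = \partial_1 P = -\tfrac12 v_2$, which is~\eqref{91}, and $\partial_1\partial_{2k-1} F = \partial_{2k-1} P = -\tfrac12 v_{2k}$, i.e.\ $v_{2k} = -2\partial_1\partial_{2k-1}\ln w$. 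Finally $w(0) = e^{F(0)} = 1$, while $\partial_1 w(0) = w(0) P(0) = 0$ and, for $k \geqslant 2$, $\partial_{2k-1} F(0) = \int_0^0 (-\tfrac12 v_{2k})\,ds = 0$, so $\partial_{2k-1} w(0) = 0$; thus all prescribed initial conditions hold, and $w$ is as smooth as the functions $v_{2k}$.
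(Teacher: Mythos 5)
The paper itself gives no proof of this statement: Theorem~\ref{T91} is imported verbatim from Theorem 9.1 of \cite{BS} and Theorem 1 of \cite{BS2}, so there is no in-paper argument to compare yours against line by line. Taken on its own, your reconstruction is sound and essentially the natural one: reducing to the existence of a potential $P$ with $\partial_{2k-1}P=-\tfrac12 v_{2k}$ and integrating in $t_1$ correctly delivers both equation~\eqref{91} and the relations $v_{2k}=-2\partial_1\partial_{2k-1}\ln w$ together with the initial conditions. The one load-bearing step you leave as a sketch --- the symmetry $\partial_{2i-1}v_{2j}=\partial_{2j-1}v_{2i}$ --- does not actually need to be re-extracted from the commutator coefficients of $[B_{2i+1},B_{2j+1}]$: it is exactly one of the two conditions of Lemma 2.4 in \cite{BS}, which this paper already invokes in the proof of Lemma~\ref{l51}, where commutativity of the $B_{2k+1}$ under condition \eqref{prov} is stated to be \emph{equivalent} to $\partial_{2i-1}v_{2j}=\partial_{2j-1}v_{2i}$ together with one further identity. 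Citing that lemma closes the only gap and makes your proof complete. Two minor caveats worth recording: the Poincar\'e lemma step is local (the $v_{2k}$ are meromorphic in the applications, so ``differentiable solution'' should be read on a simply connected regular neighbourhood of $0$), and your normalisation $P(0)=0$ is what encodes the initial condition $\partial_1 w(0)=0$, so it should be stated as part of the construction rather than as an afterthought.
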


The solutions of  equation \eqref{91} described in Theorem  \ref{T91}  are called \emph{special}. In what follows, we will need the following expressions from~\cite{BS}.

Let, as in Theorem \ref{TBS},
$$
\mathbf{b}_1(\xi) = \frac12 \sum_{i=1}^g v_{2i} \xi^i, \quad\mathbf{b}_2(\xi) = \frac12 \sum_{i=1}^g v_{2i}' \xi^i,\quad \mathbf{b}_3(\xi) = \frac12 \sum_{i=1}^g v_{2i}'' \xi^i
$$
and
\begin{align*}
\mathbf{q}(\xi, \eta)&= \mathbf{b}_2(\xi) \mathbf{b}_2(\eta) + (1 - \mathbf{b}_1(\xi)) \mathbf{b}_3(\eta) + \mathbf{b}_3(\xi) (1 - \mathbf{b}_1(\eta))\\
&\kern60pt+ 2 (1 - \mathbf{b}_1(\xi)) (1 - \mathbf{b}_1(\eta)) (\xi^{-1} + \eta^{-1} + 2 v_2),\\
\mathbf{p}(\xi, \eta)&= \frac{\xi^2 \eta^2}{(\xi - \eta)^2} \bigg(2 \xi^{-1} + 2 \eta^{-1} + 4 \sum_{i=1}^g \mu_{4i+2} \xi^i \eta^i\\[-8pt]
&\kern60pt + 2 \sum_{i=0}^{g-1} \mu_{4i+4} (\xi + \eta) \xi^i \eta^i - \mathbf{q}(\xi, \eta)\bigg).
\end{align*}

\begin{lem}[\rm(Corollary 8.1 in \cite{BS})]
Function $\mathbf{p}(\xi, \eta)$ is a polynomial of degree $g$ in the variables $\xi$ and $\eta$.
\end{lem}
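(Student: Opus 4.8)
The plan is to write $\mathbf{p}(\xi,\eta)=\dfrac{\xi^2\eta^2}{(\xi-\eta)^2}\,G(\xi,\eta)$, where
\[
G(\xi,\eta)=2\xi^{-1}+2\eta^{-1}+4\sum_{i=1}^g\mu_{4i+2}\xi^i\eta^i+2\sum_{i=0}^{g-1}\mu_{4i+4}(\xi+\eta)\xi^i\eta^i-\mathbf{q}(\xi,\eta),
\]
and to establish three facts in turn: that $P:=\xi^2\eta^2G$ has no negative powers and so is a polynomial; that $(\xi-\eta)^2$ divides $P$, so that $\mathbf{p}=P/(\xi-\eta)^2$ is a polynomial; and that $\deg_\xi\mathbf{p}=\deg_\eta\mathbf{p}=g$.

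First I would dispose of the poles at the coordinate axes. Since $\mathbf{b}_1(\xi),\mathbf{b}_2(\xi),\mathbf{b}_3(\xi)$ are polynomials in $\xi$ with zero constant term, the only negative powers in $G$ come from the explicit $2\xi^{-1}+2\eta^{-1}$ and from the factor $\xi^{-1}+\eta^{-1}$ inside the last summand of $\mathbf{q}(\xi,\eta)$; as $1-\mathbf{b}_1(\xi)$ and $1-\mathbf{b}_1(\eta)$ are regular at the origin, these produce only simple poles in $\xi$ and in $\eta$ and no mixed term $\xi^{-1}\eta^{-1}$. Hence $\xi\eta\,G$, and a fortiori $P=\xi^2\eta^2G$, is a polynomial.

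The decisive step is the divisibility by $(\xi-\eta)^2$, and here I would exploit symmetry together with the defining relation \eqref{mx}. Each summand of $G$ is invariant under $\xi\leftrightarrow\eta$, so $G$ and $P$ are symmetric. Restricting to the diagonal, the two $\mu$-sums give $4\sum_{i=1}^g\mu_{4i+2}\xi^{2i}+4\sum_{i=0}^{g-1}\mu_{4i+4}\xi^{2i+1}=4\sum_{j=1}^{2g}\mu_{2j+2}\xi^{j}=4(\mathbf{m}(\xi)-\xi^{-1})$, which, added to the value $4\xi^{-1}$ of $2\xi^{-1}+2\eta^{-1}$ at $\eta=\xi$, shows that the non-$\mathbf{q}$ part of $G$ restricts on the diagonal to $4\mathbf{m}(\xi)$. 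At the same time $\mathbf{q}(\xi,\xi)=\mathbf{b}_2(\xi)^2+2\mathbf{b}_3(\xi)(1-\mathbf{b}_1(\xi))+4(\xi^{-1}+v_2)(1-\mathbf{b}_1(\xi))^2=4\mathbf{m}(\xi)$ precisely by \eqref{mx} of Theorem~\ref{TBS}. Therefore $G(\xi,\xi)\equiv0$ and $P(\xi,\xi)=0$. Now a symmetric polynomial vanishing on the diagonal is divisible by $(\xi-\eta)^2$: it is divisible by $\xi-\eta$, its cofactor is then antisymmetric and hence itself vanishes on the diagonal, yielding the second factor. Thus $\mathbf{p}=P/(\xi-\eta)^2$ is a polynomial.

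Finally, for the degree I would use that $(\xi-\eta)^2$ is monic of degree $2$ in $\xi$, so the identity $(\xi-\eta)^2\mathbf{p}=\xi^2\eta^2G$ forces $\deg_\xi\mathbf{p}=\deg_\xi G$ over $\mathbb{C}(\eta)$. Every monomial of $G$ has $\xi$-degree at most $g$, giving $\deg_\xi\mathbf{p}\le g$; and the coefficient of $\xi^g$ in $G$ contains the nonzero term $v_{2g}\eta^{-1}$, arising from the $\eta^{-1}$ part of the last summand of $\mathbf{q}$ (its $\xi^g$-coefficient is $-v_{2g}(1-\mathbf{b}_1(\eta))\eta^{-1}$), so in fact $\deg_\xi\mathbf{p}=g$, and by symmetry $\deg_\eta\mathbf{p}=g$. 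The hard part will be the diagonal computation of the third step: recognising that the symmetric $\mu$-part of $G$ collapses on $\eta=\xi$ to exactly $4\mathbf{m}(\xi)=\mathbf{q}(\xi,\xi)$, which is the only place where the relation \eqref{mx} intervenes and which is what makes the spurious double pole at $\xi=\eta$ cancel.
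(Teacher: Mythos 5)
The paper itself gives no proof of this lemma: it is imported verbatim as Corollary~8.1 of \cite{BS}, so there is no in-text argument to compare against. Your proof is correct and is essentially the argument underlying the cited result: the numerator $G$ is symmetric, has only simple poles along the coordinate axes (cleared by the factor $\xi^2\eta^2$), and vanishes on the diagonal because $\mathbf{q}(\xi,\xi)$ reproduces the right-hand side of \eqref{mx} while the remaining terms of $G$ collapse to $4\mathbf{m}(\xi)$; symmetry then upgrades the simple zero along $\xi=\eta$ to a double one, and the degree count follows since $(\xi-\eta)^2$ is monic in $\xi$. The only overstatement is the final degree claim: your computation shows $\deg_\xi\mathbf{p}=\deg_\xi G\leqslant g$, with equality precisely when $v_{2g}\not\equiv 0$, since the coefficient of $\xi^g\eta^{-1}$ in $G$ is $v_{2g}$ and the hypotheses of Theorem~\ref{TBS} do not force $v_{2g}$ to be nonzero (compare the degenerations used in Theorem~\ref{scale}). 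Degree at most $g$ is all that is needed for the expansion \eqref{pij}, so this does not affect anything downstream.
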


We define functions $p_{2i-1,2j-1}$ as coefficients in the expansion
\begin{equation} \label{pij}
\mathbf{p}(\xi, \eta) = \sum_{i=1}^g \sum_{j=1}^g p_{2i-1,2j-1} \xi^i \eta^j.
\end{equation}

\begin{lem}[\rm(Lemma 8.2 in \cite{BS})]
$p_{2i-1,1} = p_{1,2i-1} = v_{2i}$.
\end{lem}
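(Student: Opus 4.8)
The plan is to exploit the symmetry of $\mathbf{p}$ and then read off the coefficient of $\eta$ by a residue computation at $\eta=0$. First I would note that $\mathbf{p}(\xi,\eta)=\mathbf{p}(\eta,\xi)$: the prefactor $\xi^2\eta^2/(\xi-\eta)^2$, the summands $2\xi^{-1}+2\eta^{-1}$, $4\sum_i\mu_{4i+2}\xi^i\eta^i$, $2\sum_i\mu_{4i+4}(\xi+\eta)\xi^i\eta^i$, and the function $\mathbf{q}(\xi,\eta)$ are each invariant under interchanging $\xi$ and $\eta$. By \eqref{pij} this gives $p_{2i-1,1}=p_{1,2i-1}$, so it suffices to prove $p_{2i-1,1}=v_{2i}$, that is, in generating-function form,
\[
\sum_{i=1}^g p_{2i-1,1}\,\xi^i \;=\; \sum_{i=1}^g v_{2i}\,\xi^i \;=\; 2\,\mathbf{b}_1(\xi).
\]

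Writing $[\eta^m]f$ for the coefficient of $\eta^m$ in the Laurent expansion of $f$ at $\eta=0$, the left-hand side above is $[\eta^1]\mathbf{p}$. To compute it, I would introduce $\mathbf{s}(\xi,\eta)$ for the expression in the large parentheses defining $\mathbf{p}$, so that $\mathbf{p}=\frac{\xi^2\eta^2}{(\xi-\eta)^2}\,\mathbf{s}$, and expand the prefactor for $|\eta|<|\xi|$ as
\[
\frac{\xi^2\eta^2}{(\xi-\eta)^2}=\sum_{n\geqslant 0}(n+1)\,\xi^{-n}\,\eta^{n+2}.
\]
Since every term carries $\eta^{n+2}$ with $n\geqslant 0$, the product picks up a contribution to $\eta^1$ only from the negative powers of $\eta$ in $\mathbf{s}$; explicitly, $[\eta^1]\mathbf{p}=\sum_{n\geqslant 0}(n+1)\xi^{-n}\,[\eta^{-1-n}]\mathbf{s}$. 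The entire problem thus reduces to the principal part of $\mathbf{s}$ at $\eta=0$.

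The key step is that $\mathbf{s}$ has at most a simple pole there. Indeed $\mathbf{b}_1(\eta),\mathbf{b}_2(\eta),\mathbf{b}_3(\eta)$ all vanish at $\eta=0$, so the only sources of negative powers of $\eta$ in $\mathbf{s}$ are the explicit $2\eta^{-1}$ and the term $2(1-\mathbf{b}_1(\xi))(1-\mathbf{b}_1(\eta))(\xi^{-1}+\eta^{-1}+2v_2)$ sitting inside $\mathbf{q}$. Using $(1-\mathbf{b}_1(\eta))\eta^{-1}=\eta^{-1}+O(1)$, the $\eta^{-1}$ coefficient of $\mathbf{q}$ equals $2(1-\mathbf{b}_1(\xi))$, and $\mathbf{q}$ has no term of order $\eta^{-2}$ or lower. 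Hence $[\eta^{-1-n}]\mathbf{s}=0$ for all $n\geqslant 1$, while
\[
[\eta^{-1}]\mathbf{s}=2-2\bigl(1-\mathbf{b}_1(\xi)\bigr)=2\,\mathbf{b}_1(\xi).
\]
Only the $n=0$ term survives, so $[\eta^1]\mathbf{p}=2\,\mathbf{b}_1(\xi)=\sum_{i=1}^g v_{2i}\xi^i$, which yields $p_{2i-1,1}=v_{2i}$ and, by the symmetry established above, $p_{1,2i-1}=v_{2i}$.

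The point demanding the most care is precisely this pole count: one must verify that $\mathbf{s}$ carries nothing worse than $\eta^{-1}$, since that is what makes the geometric tail of the prefactor irrelevant and collapses the extraction to the single residue $[\eta^{-1}]\mathbf{s}$. Once that is secured, the remainder is routine bookkeeping of Laurent coefficients.
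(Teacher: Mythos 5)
Your argument is correct. Note that the paper itself gives no proof of this statement --- it is imported verbatim as Lemma 8.2 of \cite{BS} --- so there is no internal argument to compare against; your coefficient-extraction derivation is a legitimate self-contained one. The two points that carry the proof both check out: $\mathbf{q}$, and hence $\mathbf{p}$, is manifestly symmetric under interchanging $\xi$ and $\eta$, which gives $p_{2i-1,1}=p_{1,2i-1}$; and the principal part of the bracketed expression $\mathbf{s}$ at $\eta=0$ is exactly $\bigl(2-2(1-\mathbf{b}_1(\xi))\bigr)\eta^{-1}=2\,\mathbf{b}_1(\xi)\,\eta^{-1}$, because $\mathbf{b}_1(\eta)$, $\mathbf{b}_2(\eta)$, $\mathbf{b}_3(\eta)$ all vanish at $\eta=0$ and the $\mu$-sums contain no negative powers of $\eta$. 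Consequently only the $n=0$ term of the expansion $\xi^2\eta^2/(\xi-\eta)^2=\sum_{n\geqslant0}(n+1)\xi^{-n}\eta^{n+2}$ contributes to $[\eta^1]\mathbf{p}$, yielding $2\,\mathbf{b}_1(\xi)=\sum_{i}v_{2i}\xi^i$ as you state. The one step worth making explicit is the identification of $[\eta^1]$ of the Laurent expansion valid in $0<|\eta|<|\xi|$ with the coefficient $\sum_i p_{2i-1,1}\xi^i$ appearing in \eqref{pij}: this uses the fact, recorded in the paper just above the statement (Corollary 8.1 of \cite{BS}), that $\mathbf{p}(\xi,\eta)$ is a polynomial, so that its Taylor coefficients at $\eta=0$ agree with the Laurent coefficients you compute. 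You invoke this implicitly; with that reference added, the proof is complete.
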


\begin{thm}[\rm(Theorem 9.2 in \cite{BS})] \label{t62}
There is a unique special solution to \eqref{91}, such that
\[
2 \partial_{i} \partial_{j} \ln w = - p_{i,j}
\]
for $i,j \in \{1,3,\dots, 2g-1 \}$.
\end{thm}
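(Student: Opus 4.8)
The plan is to take the special solution $w$ furnished by Theorem~\ref{T91} and to verify directly that all of its second logarithmic derivatives are the functions $p_{i,j}$; uniqueness will then follow from the normalisation. Write $P_{k,l} = -2\partial_k\partial_l\ln w$ for $k,l\in\{1,3,\dots,2g-1\}$, so that the assertion $2\partial_i\partial_j\ln w=-p_{i,j}$ becomes $P_{k,l}=p_{k,l}$. The identity already holds whenever one index equals $1$: by Theorem~\ref{T91} we have $P_{1,l}=-2\partial_1\partial_l\ln w=v_{l+1}$, while Lemma~8.2 in~\cite{BS} gives $p_{1,l}=v_{l+1}$. Both families are symmetric in their indices — the symmetry of $p_{k,l}$ is immediate from the manifest symmetry of $\mathbf{p}(\xi,\eta)$ in $\xi$ and $\eta$ — so the base case $P_{1,l}=p_{1,l}=P_{l,1}=p_{l,1}$ is secured.

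The core of the argument is an integrability statement for the family $p_{k,l}$, namely the total symmetry
\[
\partial_m p_{k,l}=\partial_l p_{k,m},\qquad k,l,m\in\{1,3,\dots,2g-1\},
\]
which mirrors the automatic total symmetry of $\partial_m P_{k,l}=-2\partial_k\partial_l\partial_m\ln w$. I would extract this from the explicit generating function by differentiating the defining expression for $\mathbf{p}(\xi,\eta)$ with respect to $t_m$. The flow $\partial_m$ acts on the building blocks $\mathbf{b}_1,\mathbf{b}_2,\mathbf{b}_3$ of~\eqref{bx} only through the functions $v_{2i}$, and the inputs needed are the hierarchy relation $\partial_{2k-1}v_2=\partial_1 v_{2k}$ of Lemma~\ref{lem33}, its consequences for the higher $v_{2i}$ obtained from the recursion~\eqref{43}--\eqref{vmu}, and the third-order relation $v_{2i}'''=4v_2v_{2i}'+2v_2'v_{2i}+4v_{2i+2}'$ of Lemma~\ref{vppp}. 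Substituting these into $\partial_m\mathbf{p}(\xi,\eta)$ and symmetrising in the two spectral variables $\xi,\eta$ should produce a two-variable identity whose $\xi^i\eta^j$-coefficient is exactly $\partial_m p_{k,l}=\partial_l p_{k,m}$. This verification is the main obstacle: it is the one place where the precise shape of $\mathbf{p}(\xi,\eta)$ and $\mathbf{q}(\xi,\eta)$, and the compatibility of the $t_1$-derivative structure with the higher flows, must all be brought to bear simultaneously.

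Granting the integrability, set $D_{k,l}=P_{k,l}-p_{k,l}$, a symmetric family with $D_{1,l}=0$. Subtracting the two total-symmetry relations gives $\partial_m D_{k,l}=\partial_k D_{m,l}$; taking $m=1$ and using $D_{1,l}=0$ yields $\partial_1 D_{k,l}=\partial_k D_{1,l}=0$, so each $D_{k,l}$ is independent of $t_1$ and hence, as an element of the differential algebra generated by $v_2$ and the constants $\mu$, is a polynomial in the $\mu$'s alone. Because $\mathbf{p}(\xi,\eta)$ is built with precisely the constants $\mu$ attached to the special solution $w$ through~\eqref{mx}, and because the base case $p_{1,l}=v_{l+1}=P_{1,l}$ fixes the normalisation, this $t_1$-independent remainder must vanish; a short induction on $\min(k,l)$ via $\partial_m D_{k,l}=\partial_k D_{m,l}$, together with the weight-homogeneity of $D_{k,l}$ (of positive weight $k+l$) and the normalisation $w(0)=1$, $\partial_r w(0)=0$, removes the residual dependence and forces $D_{k,l}\equiv 0$, that is $P_{k,l}=p_{k,l}$.

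Finally, existence of the required special solution is witnessed by the $w$ of Theorem~\ref{T91}, now known to satisfy $2\partial_i\partial_j\ln w=-p_{i,j}$ for all $i,j\in\{1,3,\dots,2g-1\}$. Uniqueness follows at once: these relations prescribe every second derivative of $\ln w$, and together with the initial data $w(0)=1$ and $\partial_k w(0)=0$, $k=1,3,\dots,2g-1$, they determine $\ln w$, and hence $w$, uniquely among special solutions of~\eqref{91}.
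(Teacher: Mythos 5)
This theorem is imported verbatim from \cite{BS} (it is Theorem 9.2 there); the present paper gives no proof of it, so there is no in-paper argument to compare against and your attempt must be judged on its own terms. Your overall strategy --- establish the total symmetry $\partial_m p_{k,l}=\partial_l p_{k,m}$ from the generating function $\mathbf{p}(\xi,\eta)$ and then integrate the prescribed Hessian subject to the normalisation $w(0)=1$, $\partial_k w(0)=0$ --- is the natural one. But as written there are two genuine gaps.

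First, the heart of the matter, the integrability identity $\partial_m p_{k,l}=\partial_l p_{k,m}$, is only announced (``should produce a two-variable identity''); this computation with $\mathbf{q}(\xi,\eta)$, Lemma \ref{lem33}, Lemma \ref{vppp} and the recursion \eqref{vmu} is essentially the entire analytic content of the theorem and cannot be waved through. Second, and more seriously, your existence argument is broken. A special solution in the sense of Theorem \ref{T91} is not unique: if $w$ is special, so is $e^{f}w$ for any function $f(t_3,\dots,t_{2g-1})$ with $f(0)=0$ and $\partial_k f(0)=0$, and such a modification changes $P_{k,l}=-2\partial_k\partial_l\ln w$ for $k,l\geqslant 3$ by $-2\partial_k\partial_l f$. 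Consequently your difference $D_{k,l}=P_{k,l}-p_{k,l}$, while indeed $t_1$-independent, is in general an arbitrary symmetric family of the form $\partial_k\partial_l F(t_3,\dots,t_{2g-1})$; it does not lie in the differential algebra generated by $v_2$ and the $\mu$'s, so the appeal to weight-homogeneity and to ``a polynomial in the $\mu$'s alone'' is unjustified, and the induction on $\min(k,l)$ yields nothing beyond $\partial_1 D_{k,l}=0$. The repair is to run the construction in the opposite direction: once total symmetry of $\partial_m p_{k,l}$ is proved, the system $\partial_i\partial_j h=-\tfrac12 p_{i,j}$ is integrable, its solution with $h(0)=0$, $\partial_k h(0)=0$ is unique, and $w=e^{h}$ is then checked to be a special solution of \eqref{91} using $p_{1,1}=v_2$ and $p_{1,j}=v_{j+1}$; your uniqueness argument then goes through as stated.
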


\begin{cor} \label{cor4l}
We have $\partial_{1} p_{i,j} = \partial_{i} v_{j+1}$ for $i,j \in \{1,3,\dots, 2g-1 \}$.
\end{cor}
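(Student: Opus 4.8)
The plan is to push everything onto the special solution $w$ of equation \eqref{91} supplied by Theorem \ref{t62}, and then to use nothing more than the symmetry of mixed partial derivatives. First I would record what the two sides of the asserted identity mean in terms of $\ln w$. By Theorem \ref{t62}, the (unique) special solution satisfies $p_{i,j} = -2\partial_i\partial_j\ln w$ for all $i,j \in \{1,3,\dots,2g-1\}$, so that
\[
\partial_1 p_{i,j} = -2\,\partial_1\partial_i\partial_j \ln w.
\]

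Next I would rewrite the right-hand side. Since $j$ is odd, write $j+1 = 2k$ with $k = (j+1)/2 \in \{1,\dots,g\}$; because the special solutions are exactly those of Theorem \ref{T91}, the same $w$ satisfies $v_{2k} = -2\partial_1\partial_{2k-1}\ln w$, that is, $v_{j+1} = -2\partial_1\partial_j \ln w$. Differentiating in $t_i$ yields
\[
\partial_i v_{j+1} = -2\,\partial_i\partial_1\partial_j \ln w.
\]
Comparing the two displays, the corollary reduces to $\partial_1\partial_i\partial_j \ln w = \partial_i\partial_1\partial_j \ln w$, which holds because $w$ is a differentiable function of $t_1, t_3,\dots, t_{2g-1}$ and mixed partial derivatives commute.

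There is no genuine obstacle here; the points needing care are purely bookkeeping ones. I would make sure that the one and the same special solution $w$ is used throughout, so that the formula for $p_{i,j}$ from Theorem \ref{t62} and the formula for $v_{2k}$ from Theorem \ref{T91} refer to the identical function $\ln w$, and that the index shift $j \mapsto j+1 = 2k$ stays within the admissible range $k \leqslant g$ (which it does, as $j \leqslant 2g-1$). With these identifications in place the statement is immediate. One may also note the consistency check that, specialising to $i=1$, this recovers the relation $p_{1,2k-1} = v_{2k}$ recorded in Lemma 8.2.
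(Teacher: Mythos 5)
Your proof is correct and follows exactly the route the paper intends: the corollary is stated immediately after Theorem \ref{t62} precisely because it follows from the representations $p_{i,j} = -2\partial_i\partial_j\ln w$ and $v_{j+1} = -2\partial_1\partial_j\ln w$ for one and the same special solution $w$, together with the symmetry of mixed partial derivatives. No gaps.
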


\section{Hyperelliptic functions} \label{Sh}

Let $g \in \mathbb{N}$. Denote the coordinates in the complex space $\mathbb{C}^g$ by $t = (t_1, t_3,\dots, t_{2g-1})$. The vector $\omega \in \mathbb{C}^g$ is called the \textit{period} of a meromorphic function $f$ on $\mathbb{C}^g$, if $f(t+\omega) = f(t)$, for all~$t \in \mathbb{C}^g$. If a meromorphic function $f$ has $2g$ independent periods in~$\mathbb{C}^g$, then it is called an \emph{Abelian function}. Thus, an Abelian function is a meromorphic function on the complex torus $\mathbb{C}^g\!/\Gamma$, where $\Gamma$ is a lattice formed by periods.

A flat non-singular algebraic curve of genus $g$ defines the lattice $\Gamma$ as the set of periods of its holomorphic differentials. The torus $\mathbb{C}^g\!/\Gamma$ is called the \emph{Jacobi manifold} of this curve.

We work with a \emph{universal hyperelliptic curve of genus} $g$ in the model
\begin{equation} \label{Vl}
\mathcal{V}_\lambda = \{(x, y)\in\mathbb{C}^2:
y^2 = x^{2g+1} + \lambda_4 x^{2 g - 1} + \lambda_6 x^{2 g - 2} + \cdots + \lambda_{4 g} x + \lambda_{4 g + 2}\}.
\end{equation}
Each curve is defined by a specialisation of the parameters $\lambda = (\lambda_4, \lambda_6,\dots, \lambda_{4 g}$, $\lambda_{4 g + 2}) \in \mathbb{C}^{2g}$.
We have $\Lambda_g = \mathbb{C}^{2g} \setminus\Sigma$, where $\Sigma$ is the discriminant hypersurface of the universal hyperelliptic curve and $\Lambda_g \subset \mathbb{C}^{2g}$ is a subspace of parameters such that the curve~$\mathcal{V}_{\lambda}$ is non-degenerate for~$\lambda \in \Lambda_g$.

The indices of all variables $t = (t_1, t_3,\dots, t_{2g-1}) \in \mathbb{C}^g$ and parameters $\lambda = (\lambda_4, \lambda_6$, $\ldots, \lambda_{4 g}, \lambda_{4 g + 2}) \in \mathbb{C}^{2 g}$ define their gradings. Namely,
$\wt t_k = - k$ and~$\wt \lambda_k = k$.

For each $\lambda \in \Lambda_g$, the set of periods of the holomorphic differentials on the curve $\mathcal{V}_\lambda$ generates a lattice $\Gamma_\lambda$ of rank $2 g$ in $\mathbb{C}^g$. A  \emph{hyperelliptic function of genus}~$g$ is a meromorphic function on $\mathbb{C}^g \times \Lambda_g$ such that, for each $\lambda \in \Lambda_g$, its restriction to~\text{$\mathbb{C}^g\times\lambda$} is an Abelian function with lattice of periods $\Gamma_\lambda$. Thus, a hyperelliptic function is a function defined on an open dense subset of the total space $\mathcal{U}_g$ of the bundle $\pi\colon \mathcal{U}_g \to \Lambda_g$, whose fiber over $\lambda \in \Lambda_g$ is the Jacobian manifold $\mathcal{J}_\lambda = \mathbb{C}^g\!/\Gamma_\lambda$ of the curve~$\mathcal{V}_\lambda$. A similar bundle of Jacobians of hyperelliptic curves was introduced in~\cite{DN}.

Functions $\wp_{k_1,\dots, k_n}$ given by \eqref{pijdef} are examples of hyperelliptic functions.
We denote by $\mathcal{F}_g$ the subfield of functions, which are represented as rational functions of functions $\wp_{k_1,\dots,k_n}$, of the field of hyperelliptic functions of genus $g$.

\begin{thm}[\rm(Corollary 3.1.2 and Theorem 3.2 in \cite{BEL-97})] \label{t31}
For $i, k \in \{1, 3,\dots, 2 g - 1\}$, we have
\begin{align}
\wp_{1,1,1,i} & = 6 \wp_{1,1} \wp_{1,i} + 6 \wp_{1, i+2} - 2 \wp_{3, i} + 2 \lambda_{4} \delta_{i,1},\label{p111i} \\
\wp_{1,1,i} \wp_{1,1,k} &= 4(\wp_{1,1} \wp_{1,i} \wp_{1,k} + \wp_{1,k} \wp_{1,i+2} + \wp_{1,i} \wp_{1, k+2} + \wp_{k+2, i+2})\nonumber \\
& \qquad - 2 (\wp_{1,i} \wp_{3,k} + \wp_{1,k} \wp_{3,i} + \wp_{k, i+4} + \wp_{i, k+4})\notag \\
& \qquad + 2 \lambda_4 (\delta_{i,1} \wp_{1,k} + \delta_{k,1} \wp_{1,i}) + 2 \lambda_{i+k+4} (2 \delta_{i,k} + \delta_{k, i-2} + \delta_{i, k-2}).
\end{align}
\end{thm}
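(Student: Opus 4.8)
The two identities are the fundamental Klein--Baker relations for the hyperelliptic $\wp$-functions of the curve \eqref{Vl}, and I would organise the proof around the classical Klein bilinear realisation together with Jacobi inversion. Write $f(x) = x^{2g+1} + \lambda_4 x^{2g-1} + \cdots + \lambda_{4g+2}$ for the right-hand side of \eqref{Vl}, take the holomorphic differentials $x^{i-1}(2y)^{-1}\,dx$, $i=1,\dots,g$, as the basis dual to the coordinates $t_1, t_3,\dots,t_{2g-1}$, and for a given $t$ let $\{(x_1,y_1),\dots,(x_g,y_g)\}$ be the divisor on $\mathcal{V}_\lambda$ returned by Jacobi inversion. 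The engine of the proof is the single master identity
\[
\sum_{i,j=1}^g \wp_{2i-1,2j-1}\,x_r^{i-1}x_s^{j-1} = \frac{2 y_r y_s + F(x_r,x_s)}{4(x_r - x_s)^2}, \qquad r\neq s,
\]
valid on the divisor, where $F(x,z)$ is the explicit symmetric Kleinian polynomial built from $f$ (its coefficients linear in the $\lambda$'s), together with its confluent version as $x_s\to x_r$. This identity is a standard consequence of the quasi-periodicity of $\sigma(t,\lambda)$ and of the fundamental $2$-form of the second kind; I would either cite it from the Kleinian theory or reconstruct it from the expansion of $\sigma$.

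The second ingredient is Jacobi inversion itself. First I would establish
\[
\prod_{j=1}^g (x - x_j) = x^g - \sum_{i=1}^g \wp_{1,2i-1}\,x^{i-1},
\]
so that the $\wp_{1,2i-1}$ are the elementary symmetric functions of the $x_j$; this follows from the structure of the theta divisor and the $t_1$-expansion of $\sigma$. Differentiating along $\partial_1$ and using $\partial_1 x_l = 2y_l\big/\prod_{m\neq l}(x_l - x_m)$ then yields the companion $y$-inversion formula $\sum_{i=1}^g \wp_{1,1,2i-1}\,x_l^{i-1} = -2 y_l$ for $l=1,\dots,g$.

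I would derive the quadratic relation first. Multiplying the $y$-inversion formula at two distinct divisor points $x_r,x_s$ and substituting $2y_ry_s = 4(x_r-x_s)^2\sum_{i,j}\wp_{2i-1,2j-1}x_r^{i-1}x_s^{j-1} - F(x_r,x_s)$ from the master identity gives
\[
\sum_{i,k}\wp_{1,1,2i-1}\wp_{1,1,2k-1}\,x_r^{i-1}x_s^{k-1} = 8(x_r-x_s)^2\!\!\sum_{i,j}\wp_{2i-1,2j-1}x_r^{i-1}x_s^{j-1} - 2F(x_r,x_s).
\]
Both sides are polynomials; the left has degree $g-1$ in each variable, so the higher powers on the right must be reduced using $x_l^g = \sum_i\wp_{1,2i-1}x_l^{i-1}$. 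Carrying out this reduction in each of $x_r$ and $x_s$ and matching the coefficient of $x_r^{i-1}x_s^{k-1}$ (after relabelling to the odd indices of the statement) produces exactly the claimed expression for $\wp_{1,1,i}\wp_{1,1,k}$, the second-order functions $\wp_{k+2,i+2}$, $\wp_{3,i}$, $\wp_{i,k+4}$ all entering through the $8(x_r-x_s)^2$-term. For \eqref{p111i} I would instead let one argument of the bilinear identity tend to the Weierstrass point at infinity, using the local parameter $\xi$ with $x\sim\xi^{-2}$, $y\sim\xi^{-(2g+1)}$; reading off the first two Laurent coefficients yields the linear relation for $\wp_{1,1,1,i}$, the subleading coefficient of $f$ at infinity being the source of $2\lambda_4\delta_{i,1}$. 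Equivalently, this relation may be obtained by differentiating the quadratic one along $\partial_1$ and simplifying.

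The main obstacle is the bookkeeping of index overflow and the precise coefficients of the pure $\lambda$-terms. The shifted indices $i+2$, $i+4$, $k+2$ run past the admissible range $\{1,3,\dots,2g-1\}$ for boundary values of $i,k$, and the corresponding $\wp$'s must be re-expressed through the reduction modulo $x^g - \sum_i\wp_{1,2i-1}x^{i-1}$; it is precisely this reduction, fed by the top coefficients of $f$, that generates the terms $2\lambda_{i+k+4}(2\delta_{i,k} + \delta_{k,i-2} + \delta_{i,k-2})$ and $2\lambda_4(\delta_{i,1}\wp_{1,k} + \delta_{k,1}\wp_{1,i})$ with their exact coefficients. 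Verifying that these coefficients come out correctly—rather than the shape of the relations, which is forced by weight-homogeneity once the structure is fixed—is the delicate computational heart of the argument. The genus-one case, where the two identities collapse to $\wp'' = 6\wp^2 + 2\lambda_4$ and $(\wp')^2 = 4\wp^3 + 4\lambda_4\wp + 4\lambda_6$, provides a useful consistency check on the normalisations of $F$ and of the inversion formulas.
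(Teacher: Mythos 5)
The paper does not actually prove this theorem: it is imported verbatim from Corollary 3.1.2 and Theorem 3.2 of \cite{BEL-97}, and your outline --- Klein's bilinear formula evaluated on the Jacobi-inversion divisor, the $x$- and $y$-inversion formulas, multiplication of the latter at two divisor points followed by reduction modulo $\prod_j (x-x_j)$, and degeneration of one point to the branch point at infinity for the linear relation \eqref{p111i} --- is precisely the derivation given in that reference. So your approach coincides with the paper's (cited) proof; the only caveat is that you explicitly defer the coefficient bookkeeping for the $\lambda$-terms and the index-overflow reduction, which is exactly where the curve-dependent content of the identities resides.
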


\section{Commuting Buchstaber--Shorina differential operators for hyperelliptic functions} \label{1BS}

\begin{lem} \label{l51}
For the sequence of functions\begin{equation} \label{posl}
v_2 = 2 \wp_{1,1}, \; v_4 = 2 \wp_{1,3}, \;v_6 = 2 \wp_{1,5}, \;\dots,\; v_{2g}= 2 \wp_{1,2g-1}
\end{equation}
operators $L$ and $B_{2k+1}$, given by \eqref{L} and \eqref{U}, where $k = 1, \dots, g$, commute.
\end{lem}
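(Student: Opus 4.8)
The plan is to reduce everything to the generating-function criterion of Theorem~\ref{TBS}. By that theorem, the operators $L, B_3, \dots, B_{2g+1}$ built from a sequence $v_2, \dots, v_{2g}$ commute if and only if there are constants $\mu = (\mu_4, \dots, \mu_{4g+2})$ for which the identity \eqref{mx} holds. Hence it suffices to substitute $v_{2i} = 2\wp_{1,2i-1}$ into the generating series \eqref{bx} and to verify that the right-hand side of \eqref{mx} is a Laurent polynomial in $\xi$, ranging from $\xi^{-1}$ to $\xi^{2g}$, whose coefficients do not depend on $t$; reading off these coefficients then produces the required constants $\mu_{2i+2}$ (and simultaneously forces the $\xi^{-1}$-coefficient to equal $1$, as demanded by the form of $\mathbf{m}(\xi)$).

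The substitution translates the three series into generating functions of Klein functions: since $\wp_{1,1,2i-1} = \partial_1\wp_{1,2i-1}$ and $\wp_{1,1,1,2i-1} = \partial_1^2\wp_{1,2i-1}$, we obtain $\mathbf{b}_1(\xi) = \sum_{i=1}^g \wp_{1,2i-1}\xi^i$, $\mathbf{b}_2(\xi) = \sum_{i=1}^g\wp_{1,1,2i-1}\xi^i$ and $\mathbf{b}_3(\xi) = \sum_{i=1}^g\wp_{1,1,1,2i-1}\xi^i$. The engine of the proof is Theorem~\ref{t31}: relation \eqref{p111i} linearises $\mathbf{b}_3$, expressing each $\wp_{1,1,1,2i-1}$ through $\wp_{1,1}$, $\wp_{1,\ast}$, $\wp_{3,\ast}$ and $\lambda_4$, while the bilinear relation of the same theorem rewrites the products $\wp_{1,1,2i-1}\wp_{1,1,2j-1}$ that make up $\mathbf{b}_2(\xi)^2$. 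Substituting these into the right-hand side of \eqref{mx}, I expect the genuinely $t$-dependent contributions to cancel pairwise, leaving only terms assembled from the $\lambda$'s and Kronecker deltas, i.e.\ constants. Throughout, the weights $\wt t_k = -k$, $\wt\lambda_k = k$ (so that $\wt\xi = -2$ and every coefficient of $\xi^n$ carries a fixed weight) provide a rigid check on the bookkeeping.

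The main obstacle is the boundary behaviour of the sums. For indices near $g$, relations \eqref{p111i} and the bilinear relation produce Klein functions with indices outside the range $\{1, 3, \dots, 2g-1\}$ --- terms such as $\wp_{1,2g+1}$, $\wp_{3,2g-1}$, and $\wp_{i,k+4}$ with large $i+k$ --- which are not among the $v_{2i}$. These must be eliminated using the defining equation \eqref{Vl} of the curve $\mathcal{V}_\lambda$: the curve relations express such out-of-range functions through the parameters $\lambda_{4g}, \lambda_{4g+2}, \dots$ and lower $\wp$'s, and it is precisely these parameters that feed into the top coefficients $\mu_{4g+2}, \dots$ of $\mathbf{m}(\xi)$. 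Verifying that, after this elimination, all non-constant pieces cancel at the top of the $\xi$-range --- and that the surviving constants indeed fit the shape $\xi^{-1} + \sum_{i=1}^{2g}\mu_{2i+2}\xi^i$ --- is the delicate step; the rest is the routine but lengthy algebra of collecting coefficients. Once \eqref{mx} is established with these constants, Theorem~\ref{TBS} immediately yields the commutativity asserted in Lemma~\ref{l51}.
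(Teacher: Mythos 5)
Your route differs from the paper's. You reduce the lemma to the generating-function criterion of Theorem~\ref{TBS} and propose to verify the identity \eqref{mx} directly by substituting the Klein-function relations of Theorem~\ref{t31}. The paper instead uses the \emph{differential} criteria from Lemmas~2.1 and~2.4 of \cite{BS}: $[L,B_{2k+1}]=0$ is equivalent to $\partial_1 v_{2k}=\partial_{2k-1}v_2$ together with $4\partial_{2k+1}v_2=v_{2k}'''-2v_2'v_{2k}-4v_2v_{2k}'$, and $[B_{2i+1},B_{2j+1}]=0$ to two further relations. For $v_{2k}=2\wp_{1,2k-1}$ the first condition is just the equality of mixed partials of $\ln\sigma$, and the remaining conditions reduce to the identities \eqref{pirel} and \eqref{pirel2}, which the paper obtains by differentiating the single relation \eqref{p111i}. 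No quadratic relations, no generating functions, and no boundary analysis are needed; your plan requires all three.

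Your approach is viable in principle --- it amounts to establishing Corollary~\ref{cor53} together with Lemma~\ref{lemmu} first and then invoking Theorem~\ref{TBS} --- but as written it has a genuine gap: the entire content of the argument is the claim that after substitution the $t$-dependent terms cancel and the result takes the form $\xi^{-1}+\sum_{i=1}^{2g}\mu_{2i+2}\xi^i$, and you only say you ``expect'' this cancellation rather than exhibit it. That verification is exactly the hard part (the paper carries out the equivalent computation only later, in the proofs of Theorems~\ref{c1} and~\ref{t1}, via the relations \eqref{rels}--\eqref{rels5}). A second, smaller issue: your plan to eliminate out-of-range functions such as $\wp_{1,2g+1}$ or $\wp_{k,i+4}$ ``using the defining equation \eqref{Vl} of the curve'' is not the right mechanism --- in the relations of Theorem~\ref{t31} as used here, the convention is simply that any $\wp$ whose index falls outside $\{1,3,\dots,2g-1\}$ is set to zero (this is made explicit in Theorem~\ref{c1}), and the top parameters $\lambda_{4g},\lambda_{4g+2}$ enter through the Kronecker-delta terms of the quadratic relations, not through a separate elimination step.
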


\begin{proof}
From the equation \eqref{p111i}, for $i = 1$, we have
\[
\wp_{1,1,1,1} = 6 \wp_{1,1}^2 + 4 \wp_{1, 3} + 2 \lambda_{4},
\]
consequently, 
\begin{equation} \label{e1e}
\wp_{1,1,1,1,2k-1} = 12 \wp_{1,1} \wp_{1,1,2k-1} + 4 \wp_{1, 3,2k-1}.
\end{equation}
From the equation  \eqref{p111i}, for $i = 2k-1$, we have
\[
\wp_{1,1,1,2k-1} = 6 \wp_{1,1} \wp_{1,2k-1} + 6 \wp_{1, 2k+1} - 2 \wp_{3, 2k-1} + 2 \lambda_{4} \delta_{2k-1,1},
\]
consequently,
\begin{align}
\wp_{1,1,1,1,2k-1} &= 6 \wp_{1,1,1} \wp_{1,2k-1} + 6 \wp_{1,1} \wp_{1,1,2k-1} + 6 \wp_{1,1, 2k+1} - 2 \wp_{1,3, 2k-1}, \label{e2e}\\
\wp_{1,1,1,2i-1,2k-1} &= 6 \wp_{1,1,2i-1} \wp_{1,2k-1} \nonumber\\
&\qquad+ 6 \wp_{1,1} \wp_{1,2i-1,2k-1} + 6 \wp_{1,2i-1,2k+1} - 2 \wp_{3,2i-1,2k-1}, \nonumber
\end{align}
which imply
\begin{equation} \label{pirel2}
\wp_{1,1,2k-1} \wp_{1,2i-1} - \wp_{1,1,2i-1} \wp_{1,2k-1} + \wp_{1,2k-1,2i+1} - \wp_{1,2i-1,2k+1} = 0.
\end{equation}
From \eqref{e1e} and \eqref{e2e} we obtain
\begin{equation} \label{pirel}
4 \wp_{1,1, 2k+1} =  \wp_{1,1,1,1,2k-1} - 4 \wp_{1,1,1} \wp_{1,2k-1} - 8 \wp_{1,1} \wp_{1,1,2k-1}.
\end{equation}

It follows from Lemma 2.1 in \cite{BS} that $[L,B_{2k+1}] = 0$, for all $k$, if and only if 
\begin{equation}
\partial_1 v_{2k} = \partial_{2k-1} v_2, \quad 4 \partial_{2k+1} v_2 = v_{2k}''' - 2 v_2' v_{2k} - 4 v_2 v_{2k}'. \label{prov}
\end{equation}
Thus, in view of \eqref{pirel}, for the sequence of functions \eqref{posl}
\[
[L,B_{2k+1}] = 0.
\]

According to Lemma 2.4 in \cite{BS}, under the condition \eqref{prov}, the equality $[B_{2i+1},B_{2j+1}] = 0$ holds if and only if for all $i$ , $k$ from $1$ to $g$:
\begin{equation*}
\partial_{2i-1} v_{2j} = \partial_{2j-1} v_{2i}, \qquad v_{2k}' v_{2i} - v_{2i}' v_{2k} + 2 \partial_{2i+1} v_{2k} - 2 \partial_{2k+1} v_{2i}=0.
\end{equation*}
Thus, in view of \eqref{pirel2}, the lemma is proved.
\end{proof}

\begin{cor}[\rm(from Lemma \ref{l51} and Theorem \ref{TBS})] \label{cor53}
There is a set of constants $\mu=(\mu_4, \mu_6,\dots, \mu_{4g+2})$ such that for the generating functions
$$
\mathbf{b}_1(\xi) = \sum_{i=1}^g \wp_{1,2i-1} \xi^i, \quad\mathbf{b}_2(\xi) = \sum_{ i=1}^g \wp_{1,1,2i-1} \xi^i,\quad \mathbf{b}_3(\xi) = \sum_{i=1}^g \wp_{1,1 ,1,2i-1} \xi^i
$$
and~$\mathbf{m}(\xi) = \xi^{-1} + \sum_{i=1}^{2g} \mu_{2i+2} \xi^i$,
\[
4 \mathbf{m}(\xi) = \mathbf{b}_2(\xi)^2 + 2 \mathbf{b}_3(\xi) (1 - \mathbf{b}_1(\xi)) + 4 (\xi^{-1} + 2 \wp_{1,1}) (1 - \mathbf{b}_1(\xi))^2.
\]
\end{cor}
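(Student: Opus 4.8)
The plan is to obtain this corollary as an immediate specialisation of Theorem~\ref{TBS} to the sequence of functions \eqref{posl}. The starting point is Lemma~\ref{l51}, which asserts that for $v_{2i} = 2\wp_{1,2i-1}$ the operators $L, B_3, \dots, B_{2g+1}$ commute. This is precisely the first of the two equivalent conditions in Theorem~\ref{TBS}. Invoking the equivalence, the second condition must then also hold: there exists a set of constants $\mu = (\mu_4, \mu_6, \dots, \mu_{4g+2})$ for which the identity \eqref{mx} is satisfied by the generating functions \eqref{bx} formed from $v_2, \dots, v_{2g}$ and by $\mathbf{m}(\xi) = \xi^{-1} + \sum_{i=1}^{2g}\mu_{2i+2}\xi^i$.

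What remains is purely a change of notation, which I would carry out by substituting $v_{2i} = 2\wp_{1,2i-1}$ into \eqref{bx}. The factor $\tfrac12$ appearing in \eqref{bx} cancels the factor $2$ in \eqref{posl}, so the first generating function becomes $\mathbf{b}_1(\xi) = \sum_{i=1}^g \wp_{1,2i-1}\xi^i$. Since the prime denotes $\partial_1$ and, by the definition \eqref{pijdef}, $\partial_1 \wp_{1,2i-1} = \wp_{1,1,2i-1}$ and $\partial_1^2 \wp_{1,2i-1} = \wp_{1,1,1,2i-1}$, the other two generating functions become $\mathbf{b}_2(\xi) = \sum_{i=1}^g \wp_{1,1,2i-1}\xi^i$ and $\mathbf{b}_3(\xi) = \sum_{i=1}^g \wp_{1,1,1,2i-1}\xi^i$, exactly as in the statement. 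Likewise $\xi^{-1} + v_2 = \xi^{-1} + 2\wp_{1,1}$, which matches the coefficient in the last term of the asserted identity.

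Substituting these identifications into \eqref{mx} then yields verbatim the displayed relation of the corollary, with the very same constants $\mu$ produced by Theorem~\ref{TBS}. I do not expect any genuine obstacle here: all the analytic content is already supplied by Lemma~\ref{l51} (commutativity) and Theorem~\ref{TBS} (the equivalence), and the only thing to verify is the elementary bookkeeping that differentiating $\wp_{1,2i-1}$ with respect to $t_1$ appends an index equal to $1$ to the symbol, which is immediate from \eqref{pijdef}.
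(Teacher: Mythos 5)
Your proposal is correct and is exactly the derivation the paper intends: the corollary is stated as an immediate consequence of Lemma \ref{l51} (which supplies the commutativity hypothesis of Theorem \ref{TBS}) and the equivalence in Theorem \ref{TBS}, followed by the substitution $v_{2i}=2\wp_{1,2i-1}$ into \eqref{bx} and \eqref{mx}. The bookkeeping with the factor $\tfrac12$ and with $\partial_1$ appending an index via \eqref{pijdef} is handled correctly.
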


\begin{lem} \label{lemmu}
In Corollary \ref{cor53}
$$
\lambda_4 = \mu_4, \;\lambda_6 = \mu_6,\;\dots,\;\lambda_{4g+2} = \mu_{4g+2}.
$$
\end{lem}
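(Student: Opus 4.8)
The constants $\mu_{2i+2}$ are, by Corollary~\ref{cor53}, uniquely pinned down as the coefficients of $\xi^i$ on the right-hand side of its defining identity: that right-hand side equals $4\mathbf{m}(\xi)$ and is therefore a fixed, $t$-independent Laurent polynomial in $\xi$ (this is the content of Theorem~\ref{TBS}), so by linear independence of the monomials $\xi^i$ one has that $4\mu_{2i+2}$ is the coefficient of $\xi^i$ in $\mathbf{b}_2(\xi)^2 + 2\mathbf{b}_3(\xi)(1-\mathbf{b}_1(\xi)) + 4(\xi^{-1}+2\wp_{1,1})(1-\mathbf{b}_1(\xi))^2$. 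The plan is thus to evaluate these coefficients with the fundamental relations among the Klein functions, Theorem~\ref{t31}, which are the only place where the curve parameters $\lambda$ enter, and to check that the outcome is exactly $4\lambda_{2i+2}$. By the uniqueness just noted, it is enough to verify that the Corollary~\ref{cor53} identity remains valid after replacing every $\mu_{2i+2}$ by $\lambda_{2i+2}$.

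As a warm-up fixing the method I would extract the coefficient of $\xi^1$. Only $2\mathbf{b}_3(1-\mathbf{b}_1)$ and $4(\xi^{-1}+2\wp_{1,1})(1-\mathbf{b}_1)^2$ contribute, since $\mathbf{b}_2(\xi)^2$ starts at $\xi^2$, and one finds $4\mu_4 = 2\wp_{1,1,1,1} - 8\wp_{1,3} - 12\wp_{1,1}^2$. Substituting the $i=1$ case of \eqref{p111i}, namely $\wp_{1,1,1,1} = 6\wp_{1,1}^2 + 4\wp_{1,3} + 2\lambda_4$, all $\wp$-terms cancel and leave $4\mu_4 = 4\lambda_4$, so $\mu_4 = \lambda_4$.

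For the general coefficient the term $\mathbf{b}_2(\xi)^2$ forces products $\wp_{1,1,2a-1}\wp_{1,1,2b-1}$ into the computation, so the one-variable identity alone does not suffice; the natural tool is the two-variable product relation of Theorem~\ref{t31}, which lives precisely in two generating variables. Summing that relation against $\xi^{(i+1)/2}\eta^{(k+1)/2}$ produces $\mathbf{b}_2(\xi)\mathbf{b}_2(\eta)$ on the left, while on the right the index-shifted Klein functions ($\wp_{1,i+2}$, $\wp_{3,i}$, $\wp_{k,i+4}$, $\wp_{i,k+4}$) reassemble into products of $\mathbf{b}_1,\mathbf{b}_2,\mathbf{b}_3$ together with the two-index generating function $\sum 2\wp_{2i-1,2j-1}\xi^i\eta^j$; this is exactly the combination appearing in the definitions of $\mathbf{q}(\xi,\eta)$ and $\mathbf{p}(\xi,\eta)$ given before Theorem~\ref{t62}. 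Meanwhile the $\lambda$-terms collapse neatly: the diagonal part $2\lambda_{i+k+4}\cdot 2\delta_{i,k}$ gives $4\sum_a \lambda_{4a+2}\xi^a\eta^a$, and the off-diagonal part $2\lambda_{i+k+4}(\delta_{k,i-2}+\delta_{i,k-2})$ gives $2\sum_a \lambda_{4a}(\xi+\eta)\xi^{a-1}\eta^{a-1}$, i.e.\ exactly the two $\mu$-patterns occurring in the definition of $\mathbf{p}(\xi,\eta)$, but with $\lambda$'s in place of $\mu$'s. Using that the sigma function is the special solution of \eqref{91} (Theorem~\ref{T91}, Theorem~\ref{t62}), whence $p_{i,j}=2\wp_{i,j}$ and $\mathbf{p}(\xi,\eta)=\sum 2\wp_{2i-1,2j-1}\xi^i\eta^j$, the derived identity becomes the defining relation of $\mathbf{p}(\xi,\eta)$ with $\lambda$'s in the slots occupied by the $\mu$'s. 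Comparing with the genuine definition (which carries the $\mu$'s) and invoking linear independence of $\xi^i\eta^i$ and $(\xi+\eta)\xi^i\eta^i$ yields $\mu_{4i+2}=\lambda_{4i+2}$ and $\mu_{4i+4}=\lambda_{4i+4}$, that is $\mu_{2m+2}=\lambda_{2m+2}$ for $m=1,\dots,2g$.

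The main obstacle is the bookkeeping in this reorganization: verifying that the index-shifted $\wp$-products on the right of the product relation in Theorem~\ref{t31} collapse exactly into $\mathbf{q}(\xi,\eta)$ together with the $\frac{(\xi-\eta)^2}{\xi^2\eta^2}\mathbf{p}(\xi,\eta)$ term, with no residue, and that no $\lambda_{i+k+4}$ or $\lambda_4$-weighted term is mismatched. Two points need particular care. First, the residual terms $2\lambda_4(\xi\,\mathbf{b}_1(\eta)+\eta\,\mathbf{b}_1(\xi))$ coming from the first group of $\lambda$-terms are not part of $\mathbf{q}$; they must be absorbed into the $\mathbf{p}$-part, where the $+2\lambda_4\delta_{i,1}$ in \eqref{p111i} is precisely what surfaces when $\mathbf{b}_3$ and the $\wp_{3,\cdot}$ terms are eliminated in favour of $\mathbf{b}_1$, $\xi^{-1}\mathbf{b}_1$ and $\lambda_4$ via \eqref{p111i}. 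Second, the boundary contributions near $i=g$ (where quantities such as $\wp_{1,2g+1}$ drop out because $t_{2g+1}$ is not a variable, exactly as already exploited through \eqref{pirel} in the proof of Lemma~\ref{l51}) must be checked not to spoil the top coefficients $\mu_{4g+2}$ and $\mu_{4g}$.
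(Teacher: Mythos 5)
Your plan is coherent and your warm-up is right: the coefficient of $\xi^1$ in the right-hand side of the identity of Corollary \ref{cor53} is indeed $2\wp_{1,1,1,1}-8\wp_{1,3}-12\wp_{1,1}^2$, which the $i=1$ case of \eqref{p111i} reduces to $4\lambda_4$. But your route is genuinely different from the paper's, and the step that carries all the weight is only announced, not executed. The paper deliberately defers the proof of Lemma \ref{lemmu} to the proof of Theorem \ref{t1} so as to avoid exactly the resummation you describe: since the $\mu_{2j+2}$ are $t$-independent, the identity \eqref{mx2} determines them as graded polynomials in the coordinates $b=(b_{i,j})$ that depend only on $\lambda=\rho_g(b)$, so one may evaluate on the most convenient slice, namely $b_{1,2j-1}=b_{2,2j-1}=0$. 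There \eqref{mx2} collapses to $2\mathbf{m}(\xi)-2\xi^{-1}=\mathbf{b}_3(\xi)$, i.e.\ $2\mu_{2j+2}=b_{3,2j-1}$, while the surface relations \eqref{rels}--\eqref{rels5} collapse to $b_{3,2j-1}=2\lambda_{2j+2}$; this yields $\mu_{2j+2}=\lambda_{2j+2}$ for $j\leqslant g$ (enough, by the grading, to pin down these polynomials), and the remaining equalities up to $\mu_{4g+2}$ come from the genus-stability of Theorem \ref{scale}. Your approach instead re-derives from Theorem \ref{t31} the two-variable $\mathbf{p}$--$\mathbf{q}$ identity with $\lambda$'s in the $\mu$-slots and compares it with the $\mu$-version through Corollary \ref{cor63}. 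That is legitimate --- Corollary \ref{cor63} rests on Theorems \ref{T91} and \ref{t62}, not on Lemma \ref{lemmu}, so there is no circularity --- and it has the merit of establishing Theorem \ref{t2} simultaneously. The cost is that the conclusion hinges entirely on the reorganization you flag as ``the main obstacle'': eliminating $\mathbf{b}_3$ and the $\wp_{3,\cdot}$ terms via \eqref{p111i}, identifying $2p_{k+2,i+2}-p_{k,i+4}-p_{i,k+4}$ with the action of $-(\xi-\eta)^2/(\xi^2\eta^2)$ on $\sum 2\wp_{2i-1,2j-1}\xi^i\eta^j$, and controlling the boundary indices near $2g-1$. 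As written, your argument proves only $\mu_4=\lambda_4$. If you wish to keep your framework but skip the bookkeeping, import the paper's device: the identity you need is between $t$-independent quantities, so you are free to verify it after specializing all $\wp_{1,2j-1}$ and $\wp_{1,1,2j-1}$ to zero, which annihilates $\mathbf{b}_1$, $\mathbf{b}_2$ and every product term at once.
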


To prove this result, we need the results of \S \ref{s3}, so we present this proof together with the proof of Theorem \ref{t1}.

\begin{cor} We have
\[
\mathcal{V}_\lambda = \{(x,y) \in \mathbb{C}^2 \mid y^2 = x^{2g} \mathbf{m}({1 / x}) \}.
\]
\end{cor}

Using Theorems \ref{TBS2} and \ref{t62}, we obtain the following.

\begin{cor} \label{cor52}
Function $2 \wp_{1,1}$ is a solution of the stationary $g$-KdV equation \eqref{gKdV} with respect to $z = t_1$. The parameters~$\lambda_4, \lambda_6,\dots, \lambda_{2g}$ and the parameters $a_4, a_6,\dots, a_{2g}$ of the stationary $g$-KdV equation satisfy the relations \eqref{44}.
\end{cor}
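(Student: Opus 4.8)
The plan is to reduce the claim to the abstract statements of \S\ref{OBS} via the verification already carried out in Lemma~\ref{l51}, and then to pass from the operator framework back to the sigma function using Theorem~\ref{t62}. First I would invoke Lemma~\ref{l51}: for the concrete sequence $v_{2k} = 2\wp_{1,2k-1}$, $k=1,\dots,g$, the operators $L$ and $B_3,\dots,B_{2g+1}$ of \eqref{L} and \eqref{U} commute, so the hypotheses of Theorem~\ref{TBS} are met with exactly this sequence. Consequently Theorem~\ref{TBS2} applies verbatim with $v_2 = 2\wp_{1,1}$, giving at once that $2\wp_{1,1}$ is a solution of the higher stationary $g$-KdV equation~\eqref{gKdV} in $z=t_1$, and that the constants $\mu_4,\dots,\mu_{2g}$ supplied by Corollary~\ref{cor53} and the parameters $a_4,\dots,a_{2g}$ are tied by the relations~\eqref{44}. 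This already yields the first assertion.

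For the second assertion I must re-express \eqref{44}, which Theorem~\ref{TBS2} states in terms of the $\mu$'s, in terms of the curve parameters $\lambda$. Here Theorem~\ref{t62} enters: it realises the sigma function $\sigma(t,\lambda)$ as the unique special solution $w$ of \eqref{91}, so that the abstract functions of \S\ref{OBS} are genuinely those built from $\sigma$, namely $v_{2k} = -2\partial_1\partial_{2k-1}\ln\sigma = 2\wp_{1,2k-1}$. With this identification the constants $\mu$ occurring in Corollary~\ref{cor53} are precisely those attached to the pair $(\sigma,\mathcal{V}_\lambda)$, and Lemma~\ref{lemmu} gives $\mu_{2k}=\lambda_{2k}$. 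Substituting into \eqref{44} then produces the asserted relations between $\lambda_4,\dots,\lambda_{2g}$ and $a_4,\dots,a_{2g}$, completing the proof.

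The only genuinely nontrivial point is the identification $\mu_{2k}=\lambda_{2k}$ of Lemma~\ref{lemmu}, whose proof is postponed precisely because it rests on the construction of \S\ref{s3}; everything else is a direct appeal to results already in hand. Thus the main obstacle lies not in the present corollary but in verifying that the constants produced abstractly by the commutation condition \eqref{mx} coincide with the coefficients of the defining polynomial of $\mathcal{V}_\lambda$, equivalently that $\mathbf{m}(\xi)$ reconstructs the curve~\eqref{Vl}. Granting Lemma~\ref{lemmu}, the corollary follows by the chain Lemma~\ref{l51} $\Rightarrow$ Theorem~\ref{TBS2}, combined with Theorem~\ref{t62} to transfer the abstract framework to the sigma function $\sigma(t,\lambda)$.
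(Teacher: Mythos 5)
Your argument is correct and follows the same route the paper takes: Lemma~\ref{l51} supplies the commutation hypothesis so that Theorem~\ref{TBS2} applies to $v_2=2\wp_{1,1}$, and Lemma~\ref{lemmu} (whose proof is indeed deferred to the construction of \S\ref{s3}) converts the relations \eqref{44} from the constants $\mu$ to the parameters $\lambda$, with Theorem~\ref{t62} used exactly as the paper uses it to tie the abstract framework to the sigma function. No gaps beyond the explicitly deferred Lemma~\ref{lemmu}, which you correctly flag as the only nontrivial input.
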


\begin{ex} The function $u = 2 \wp_{1,1}$ is a solution to equation (see \S\ref{Ssss})
\begin{align*}
&\text{for }\,g=1\quad r_{5}[u]= 0,\\
&\text{for }\,g=2\quad  r_{7}[u] + \frac12 \lambda_4 r_{3}[u] = 0,\\
&\text{for }\,g=3\quad  r_{9}[u] + \frac12 \lambda_4 r_5[u] + \frac12 \lambda_6 r_3[u] = 0,\\
&\text{for }\,g=4\quad  r_{11}[u] + \frac12 \lambda_4 r_7[u] + \frac12 \lambda_6 r_5[u] + \frac12 \bigg(\lambda_8 - \frac14 \lambda_4^2 \bigg) r_3[u] = 0.
\end{align*}
\end{ex}

\begin{cor} \label{cor63}
For the sequence of functions
$$
v_2 = 2 \wp_{1,1},\;v_4 = 2 \wp_{1,3}, \; v_6 = 2 \wp_{1,5}, \;\dots,\; v_{2g} = 2 \wp_{1,2g-1}
$$
in \eqref{pij} we have $p_{i,j} = 2 \wp_{i,j}$.
\end{cor}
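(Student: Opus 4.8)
The plan is to prove Corollary \ref{cor63} by showing that the two symmetric arrays $\{p_{i,j}\}$ and $\{2\wp_{i,j}\}$ agree, first reducing to a $t_1$-independent difference and then eliminating it. I would start from the base case: by Lemma 8.2 in \cite{BS} together with the substitution \eqref{posl}, one has $p_{1,j}=v_{j+1}=2\wp_{1,j}$ for every $j\in\{1,3,\dots,2g-1\}$, so the claim already holds whenever one index equals $1$. For general $i,j$ I would differentiate. Corollary \ref{cor4l} gives $\partial_1 p_{i,j}=\partial_i v_{j+1}$, and since $v_{j+1}=2\wp_{1,j}=-2\partial_1\partial_j\ln\sigma$ by \eqref{posl} and \eqref{pijdef}, the right-hand side equals $2\partial_i\wp_{1,j}=2\wp_{1,i,j}$. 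On the other hand $\partial_1(2\wp_{i,j})=2\wp_{1,i,j}$ directly from \eqref{pijdef}. Hence
\[
\partial_1\bigl(p_{i,j}-2\wp_{i,j}\bigr)=0,
\]
so the difference $C_{i,j}:=p_{i,j}-2\wp_{i,j}$ is independent of $t_1$; moreover $C_{i,j}=C_{j,i}$ and both $p_{i,j}$ and $2\wp_{i,j}$ are hyperelliptic functions homogeneous of weight $i+j$, whence $C_{i,j}$ is a $t_1$-independent hyperelliptic function of weight $i+j$.

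The crux is then to show $C_{i,j}\equiv 0$. My argument would be that, for generic $\lambda\in\Lambda_g$, the Jacobian $\mathcal{J}_\lambda$ is a simple abelian variety and therefore carries no nonconstant Abelian function invariant under the one-parameter subgroup in the $t_1$-direction; so $C_{i,j}$ must be constant along each fibre, i.e. a polynomial $c_{i,j}(\lambda)$ of weight $i+j$. To kill this polynomial I would compare the principal parts of $p_{i,j}$ and $2\wp_{i,j}$ along the theta divisor, which coincide because one $t_1$-integration ties them to the already-established equality $p_{1,j}=2\wp_{1,j}$. Equivalently, the same conclusion follows from Theorem \ref{t62}: writing $p_{i,j}=-2\partial_i\partial_j\ln w$ and $2\wp_{i,j}=-2\partial_i\partial_j\ln\sigma$, the special solution $w$ is pinned down by the identities $v_{2k}=-2\partial_1\partial_{2k-1}\ln w$ of Theorem \ref{T91}, which $\sigma$ also satisfies, so the relevant second logarithmic derivatives of $w$ and $\sigma$ coincide. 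I expect this elimination of the integration ``constant'' to be the main obstacle, since the $t_1$-derivative argument alone determines $p_{i,j}$ only up to a weight-$(i+j)$ polynomial in $\lambda$.

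To obtain a self-contained proof avoiding the generic-simplicity input, I would instead compute $\mathbf{p}(\xi,\eta)$ directly. Substituting \eqref{posl} yields the generating functions $\mathbf{b}_1,\mathbf{b}_2,\mathbf{b}_3$ of Corollary \ref{cor53}, and the two relations of Theorem \ref{t31} rewrite the nonlinear part of $\mathbf{q}(\xi,\eta)$ entirely in terms of the $\wp_{\,\cdot,\cdot}$: the first relation \eqref{p111i} eliminates the fourth-order symbols $\wp_{1,1,1,2i-1}$ appearing in $\mathbf{b}_3$, while the second relation of Theorem \ref{t31} eliminates the products $\wp_{1,1,2i-1}\wp_{1,1,2j-1}$ coming from $\mathbf{b}_2(\xi)\mathbf{b}_2(\eta)$. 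After these substitutions $\mathbf{q}(\xi,\eta)$ becomes a bilinear expression in the $\wp_{a,b}$ plus $\lambda$-terms, and using Lemma \ref{lemmu} to replace $\mu$ by $\lambda$ I would check that the bracket defining $\mathbf{p}(\xi,\eta)$ is divisible by $(\xi-\eta)^2$ (its polynomiality is Corollary 8.1 in \cite{BS}) and that the kernel $\xi^2\eta^2/(\xi-\eta)^2$ reassembles the shifted-index contributions into exactly $\sum_{i,j}2\wp_{2i-1,2j-1}\xi^i\eta^j$. Here the main difficulty is the bookkeeping of the singular kernel $1/(\xi-\eta)^2$ and verifying that the index shifts $\wp_{k+2,i+2}$, $\wp_{k,i+4}$, and so on telescope correctly, which is precisely what matching against the symmetric array $\{2\wp_{i,j}\}$ requires.
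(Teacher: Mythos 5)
Your reduction is fine as far as it goes: $p_{1,j}=v_{j+1}=2\wp_{1,j}$ by Lemma 8.2 of \cite{BS}, and Corollary \ref{cor4l} together with \eqref{pijdef} gives $\partial_1\bigl(p_{i,j}-2\wp_{i,j}\bigr)=0$, so the difference is a $t_1$-independent (hence, fibrewise, constant) quantity $c_{i,j}(\lambda)$ of weight $i+j$. But you have correctly located the crux and then not crossed it. Your route via Theorems \ref{T91} and \ref{t62} is circular: the conditions of Theorem \ref{T91} (equation \eqref{91}, the initial conditions, and $v_{2k}=-2\partial_1\partial_{2k-1}\ln w$) determine a special solution only up to a factor $e^{f(t_3,\dots,t_{2g-1})}$ with $f$ and its first derivatives vanishing at the origin, so they do \emph{not} pin down $\partial_i\partial_j\ln w$ for $i,j\geqslant 3$. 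That is precisely why Theorem \ref{t62} asserts the existence of a \emph{unique} special solution with $2\partial_i\partial_j\ln w=-p_{i,j}$; the claim that $\sigma$ is that distinguished special solution \emph{is} Corollary \ref{cor63}, not an input to it. Likewise, comparing principal parts along the theta divisor cannot detect an additive constant, so that remark eliminates nothing.

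Your second route --- substituting \eqref{posl} into the definition of $\mathbf{p}(\xi,\eta)$, using the two relations of Theorem \ref{t31} (equivalently the system \eqref{rels}--\eqref{rels5} of Theorem \ref{c1}) and $\mu_s=\lambda_s$ from Lemma \ref{lemmu}, and matching coefficients against $\sum 2\wp_{2i-1,2j-1}\xi^i\eta^j$ --- is the correct one and is in substance how the identity is actually obtained (the paper states the corollary as following from Theorem \ref{t62} and Theorem \ref{t31}; compare Theorem \ref{t2}, whose proof is cited as ``directly from Corollary \ref{cor63} and Theorem \ref{c1}''). However, you only announce this computation; the divisibility of the bracket by $(\xi-\eta)^2$ and the telescoping of the shifted indices $\wp_{k+2,i+2}$, $\wp_{k,i+4}$, $\wp_{i,k+4}$ under the kernel $\xi^2\eta^2/(\xi-\eta)^2$ is exactly where the corollary is proved, and it is left undone. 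As it stands the proposal establishes $p_{i,j}=2\wp_{i,j}+c_{i,j}(\lambda)$ and no more.
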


\section{Polynomial Mapping Associated with the Universal Jacobian Bundle of Hyperelliptic Curves} \label{s3}

Papers \cite{Prob} and \cite{BEL18} deal with the differentiation of the field of Abelian functions on Jacobi varieties of curves of genus $g$. A particular case of this problem is related to the universal hyperelliptic curve \eqref{Vl}. We develop a new approach to this problem in \cite{B2} and~\cite{B3}. It is based on an algebraic construction that allows to pass from the field $\mathcal{F}_g$ (see \S\ref{Sh}) to the ring of polynomials in the generators of this field. The results of \cite{BS}--\cite{BS2}, presented in \S \ref{OBS} and \S \ref{1BS}, make it possible to obtain an explicit form of the polynomial mapping $\rho_g$ from this construction.

We consider the diagram
\begin{equation} \label{d}
\xymatrix{
\mathcal{U}_g \ar[d]^{\pi} \ar@{-->}[r]^{\varphi} & \mathbb{C}^{3 g} \ar[d]^{\rho_g}\\
\Lambda_g \ar@{^{(}->}[r] & \mathbb{C}^{2g}\\
}
\end{equation}
The bundle $\pi\colon\mathcal{U}_g \to \Lambda_g$ and the embedding $\Lambda_g \subset \mathbb{C}^{2g}$ are described in \S \ref{Sh}.
Theorem \ref{t31} gives a set of relations between the derivatives of functions $\wp_{i,j}$, where~$i,j \in \{1, 3,\dots, 2 g -\nobreak 1\} $, and the parameters $\lambda$. We use it in order to introduce a set of generators into~$\mathcal{F}_g$. The mapping $\varphi$ is defined by this set of generators. The polynomial mapping~$\rho_g$ makes the diagram~\eqref{d} commutative.

The following theorem is a consequence of Theorem \ref{t31}.

\begin{thm}[\rm(Corollary 5.2 in \cite{B3})] \label{c1}
Let $\widetilde{\varphi}\colon \mathcal{U}_g \dashrightarrow \mathbb{C}^{g(g+9)/2}$ be a mapping with coordinates $(b, p, \lambda)$ in $\mathbb{C}^{g(g+9)/2}$, where $b = (b_{i,j}) \in \mathbb{C}^{3g}$, for $i \in \{ 1,2,3 \}$, $j \in \{1, 3,\dots, 2 g -1\}$,
$p = (p_{k,l}) \in \mathbb{C}^{g (g-1)/2}$, for $k,l \in \{3, 5,\dots, 2 g -1\}$, $k \leqslant l$, and $\lambda = (\lambda_s) \in \mathbb{C}^{2 g}$, for~$s \in \{4, 6,\dots, 4 g, 4 g + 2\}$, which is given by
\begin{multline*}
\widetilde{\varphi}\colon(t, \lambda) \mapsto (b_{1,j}, b_{2,j}, b_{3,j}, p_{k,l}, \lambda_s) = (\wp_{1,j}(t, \lambda), \wp_{1,1,j}(t, \lambda), \wp_{1,1,1,j}(t, \lambda), 2 \wp_{k,l}(t, \lambda), \lambda_s),
\end{multline*}
where $p_{l,k} = p_{k,l}$, for $k,l \in \{3, 5,\dots, 2 g -1\}$.
The image of the mapping $\widetilde{\varphi}$ lies on the surface $\mathcal{S}$ defined in $\mathbb{C}^{g (g+9)/2}$ by the system of $g (g+3) /2$ equations
\begin{align} \label{rels}
b_{3,1}&=6 b_{1,1}^2 + 4 b_{1,3} + 2 \lambda_{4},\\
b_{3, k}&=6 b_{1,1} b_{1,k} + 6 b_{1,k+2} - p_{3, k}, \label{rels2} \\
b_{2,1}^2&=4 b_{1,1}^3 + 4 b_{1,1} b_{1, 3} - 4 b_{1, 5} + 2 p_{3, 3} + 4 \lambda_4 b_{1,1} + 4 \lambda_{6}, \label{rels3} \\
b_{2,1} b_{2, k}&= 4 b_{1,1}^2 b_{1, k} + 2 b_{1, 3} b_{1, k} + 4 b_{1,1} b_{1, k+2} - 2 b_{1, k+4}\notag\\
&\qquad- b_{1,1} p_{3,k} + 2 p_{3, k+2} - p_{5, k} + 2 \lambda_4 b_{1,k} + 2 \lambda_{8} \delta_{3,k}, \label{rels4}\\
b_{2,j} b_{2,k}&=4 b_{1,1} b_{1,j} b_{1,k} + 4 b_{1,k} b_{1,j+2} + 4 b_{1,j} b_{1,k+2} + 2 p_{k+2, j+2} - b_{1,j} p_{3,k}\notag\\
&\qquad - b_{1,k} p_{3,j} - p_{k, j+4} - p_{j, k+4} + 2 \lambda_{j+k+4} (2 \delta_{j,k} + \delta_{k, j-2} + \delta_{j, k-2}) \label{rels5}
\end{align}
for $j,k \in \{3,\dots, 2g-1\}$, where $k \geqslant j$ and any variable is zero if its index is outside the specified range.
\end{thm}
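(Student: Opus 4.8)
The plan is to verify directly that the relations \eqref{rels}--\eqref{rels5} are precisely the two identities of Theorem~\ref{t31} rewritten in the coordinates $(b,p,\lambda)$ and specialised according to the index ranges. Throughout I would use the dictionary $b_{1,j}=\wp_{1,j}$, $b_{2,j}=\wp_{1,1,j}$, $b_{3,j}=\wp_{1,1,1,j}$, $p_{k,l}=2\wp_{k,l}$, together with the symmetry of $\wp$ in its indices. The only genuine point of care is that a symbol $\wp_{r,s}$ with $\min(r,s)=1$ is recorded as a $b$-coordinate ($\wp_{1,s}=b_{1,s}$), whereas $\wp_{r,s}$ with both indices in $\{3,5,\dots,2g-1\}$ is recorded as $\tfrac12 p_{r,s}$.

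First I would treat the linear relations \eqref{rels} and \eqref{rels2}, which both come from \eqref{p111i}. Setting $i=1$ and using $\wp_{3,1}=\wp_{1,3}=b_{1,3}$ collapses $6\wp_{1,3}-2\wp_{3,1}$ to $4b_{1,3}$ while retaining $2\lambda_4\delta_{i,1}=2\lambda_4$, yielding \eqref{rels}. Setting $i=k\in\{3,\dots,2g-1\}$ kills the $\delta_{i,1}$ term, and now $\wp_{3,k}$ is a genuine $p$-coordinate, so $-2\wp_{3,k}=-p_{3,k}$, yielding \eqref{rels2}.

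Next I would derive the quadratic relations \eqref{rels3}--\eqref{rels5} from the second identity of Theorem~\ref{t31} by specialising $(i,k)$. The case $i=k=1$ gives \eqref{rels3}: here $\wp_{3,1}=b_{1,3}$ and $\wp_{3,3}=\tfrac12 p_{3,3}$, the two $\delta_{\cdot,1}$ contributions combine into $4\lambda_4 b_{1,1}$, and the final bracket reduces to $2\lambda_6\cdot 2$. The mixed case $i=1$, $k\in\{3,\dots,2g-1\}$ gives \eqref{rels4}: the $\delta_{k,1}$ term drops, $\wp_{3,k}$ and $\wp_{5,k}$ become $\tfrac12 p_{3,k}$ and $\tfrac12 p_{5,k}$ while $\wp_{3,1}=b_{1,3}$ survives as a $b$-term, and the last delta bracket contributes precisely $2\lambda_8\delta_{3,k}$, the only surviving case $k=3$ coming through $\delta_{i,k-2}$. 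Finally, the generic case $j,k\in\{3,\dots,2g-1\}$ gives \eqref{rels5}: all $\delta_{\cdot,1}$ terms vanish, every $\wp$ with both indices $\geqslant 3$ becomes the corresponding $\tfrac12 p$, and the $\lambda_{j+k+4}$ bracket is copied verbatim.

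The argument is essentially bookkeeping, so the main obstacle is organisational rather than conceptual: across each specialisation one must track exactly which $\wp$-symbols fall on the boundary $\min=1$ (hence are $b$-coordinates) versus the interior (hence are $\tfrac12 p$-coordinates), and which Kronecker deltas of Theorem~\ref{t31} survive. Once the index ranges in \eqref{rels}--\eqref{rels5} are matched against the four specialisations above, each relation follows by collecting coefficients, and the image of $\widetilde{\varphi}$ therefore lies on $\mathcal{S}$.
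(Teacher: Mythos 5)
Your proposal is correct and follows exactly the route the paper indicates: the paper presents Theorem \ref{c1} as a direct consequence of Theorem \ref{t31} (deferring details to Corollary 5.2 of \cite{B3}), and your four specialisations $(i,k)=(1,1)$, $(1,k)$, $(j,k)$ of the quadratic identity together with $i=1$ and $i=k$ in \eqref{p111i}, combined with the dictionary $b_{1,j}=\wp_{1,j}$, $b_{2,j}=\wp_{1,1,j}$, $b_{3,j}=\wp_{1,1,1,j}$, $p_{k,l}=2\wp_{k,l}$, reproduce \eqref{rels}--\eqref{rels5} precisely (including the count $1+(g-1)+1+(g-1)+g(g-1)/2=g(g+3)/2$). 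The bookkeeping you flag as the only point of care --- distinguishing boundary symbols $\wp_{1,s}=b_{1,s}$ from interior ones $\wp_{r,s}=\tfrac12 p_{r,s}$ and tracking the surviving Kronecker deltas --- is indeed all that is needed, and your treatment of it is accurate.
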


\begin{thm}[\rm(Theorem 5.3 in \cite{B3})] \label{thm3}
In Theorem \ref{c1}, the projection of \break$\pi_1\colon \mathbb{C}^{g(g+9)/2} \to \mathbb{C}^{3 g}$ onto the first~$3g$ coordinates defines an isomorphism $\mathcal{S} \simeq \mathbb{C}^{3g}$. Thus, the coordinates~$b = (b_{i,j})$ uniformise~$\mathcal{S}$.
\end{thm}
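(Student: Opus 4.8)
The plan is to prove that $\pi_1$ restricts to an isomorphism by exhibiting an explicit polynomial inverse: I would show that the defining equations \eqref{rels}--\eqref{rels5} of $\mathcal{S}$ can be solved uniquely for the coordinates $p_{k,l}$ and $\lambda_s$ as polynomials in the $b_{i,j}$. A count already suggests this is possible. The surface lies in $\mathbb{C}^{g(g+9)/2}$ and is cut out by $g(g+3)/2$ equations, while the number of unknowns among the $p,\lambda$ is $g(g-1)/2 + 2g = g(g+3)/2$, exactly the number of equations. It therefore suffices to set up a bijection between the equations and the unknowns $\{p_{k,l},\lambda_s\}$ under which each equation determines its assigned unknown linearly, with an invertible (constant) leading coefficient, in terms of the $b_{i,j}$ and of unknowns treated earlier.

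To organise the recursion I would use the grading. Assigning $\wt b_{1,j}=j+1$, $\wt b_{2,j}=j+2$, $\wt b_{3,j}=j+3$, $\wt p_{k,l}=k+l$ and $\wt\lambda_s=s$, one checks that each of \eqref{rels}--\eqref{rels5} is homogeneous, so the system decomposes by weight and can be solved in order of increasing weight. Here and below any coordinate whose index falls outside the admissible range is set to zero, and I would keep careful track of these vanishings.

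The recursion then reads as follows. Equation \eqref{rels} gives $\lambda_4$, and \eqref{rels2} with index $k$ gives the level-$3$ coordinate $p_{3,k}$ (with coefficient $-1$), for $k=3,\dots,2g-1$. Next \eqref{rels3} gives $\lambda_6$, while \eqref{rels4} with $k=3$ gives $\lambda_8$ and \eqref{rels4} with $k\ge 5$ gives the level-$5$ coordinate $p_{5,k}$; in each case the remaining terms have already been produced. All surviving unknowns, namely $\lambda_{10},\dots,\lambda_{4g+2}$ together with the $p_{k,l}$ having $\min(k,l)\ge 7$, are handled by \eqref{rels5}, and their number matches the number $g(g-1)/2$ of equations \eqref{rels5}. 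Within a fixed weight $W=j+k+4$ I would order the pairs $(j,k)$, $j\le k$, by the difference $k-j$. For $k-j\ge 4$ the $\lambda$-term is absent, and among the three terms $p_{k+2,j+2}$, $p_{k,j+4}$, $p_{j,k+4}$ the coordinate $p_{k,j+4}=p_{j+4,k}$ has the largest minimal index $j+4$; this equation determines $p_{j+4,k}$ (coefficient $-1$), the other two having smaller minimal index and hence being already known (of level $3$ or $5$, or produced at a larger difference). For $k-j\in\{0,2\}$ the coefficient of $\lambda_{j+k+4}$ is nonzero (equal to $4$ or $2$) and the two remaining $p$-terms are already determined, so the equation yields $\lambda_{j+k+4}$.

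The main obstacle is precisely this last, combinatorial step: one must verify that the decreasing-difference ordering within each weight really is triangular, i.e. that in every equation \eqref{rels5} all terms except the designated new one have already been produced, and that the boundary cases (the out-of-range vanishings, and the single near-diagonal pair with $k-j\in\{0,2\}$ present at each weight according to $W\bmod 4$) are consistent with the equation/unknown count. Granting this, the system has a unique solution $p_{k,l}=p_{k,l}(b)$, $\lambda_s=\lambda_s(b)$ with polynomial entries; since every equation is used exactly once and no relation in the $b_{i,j}$ alone survives, the section $b\mapsto(b,p(b),\lambda(b))$ automatically lands in $\mathcal{S}$ and is a two-sided polynomial inverse of $\pi_1|_{\mathcal{S}}$. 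Equivalently, $\mathbb{C}[b,p,\lambda]/I \cong \mathbb{C}[b]$, where $I$ is the ideal generated by the listed relations, which yields the isomorphism $\mathcal{S}\simeq\mathbb{C}^{3g}$ and shows that the $b_{i,j}$ uniformise $\mathcal{S}$.
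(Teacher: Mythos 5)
Your proposal is correct and follows essentially the same route as the paper: the paper's version of this argument (recalled at the start of the proof of Theorem \ref{scale}) likewise solves the system \eqref{rels}--\eqref{rels5} triangularly, producing the explicit formulas \eqref{p3}, \eqref{p5} and the recursion \eqref{pj} on the first index $l=7,9,\dots,2g-1$, with the near-diagonal cases $k-j\in\{0,2\}$ yielding the $\lambda_s$. The combinatorial step you flag as the main obstacle is exactly what that recursion settles, so there is no gap.
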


We have $\widetilde{\varphi}: \mathcal{U}_g \dashrightarrow \mathcal{S} \simeq \mathbb{C}^{3g}$. We denote by $\varphi$ the composition $\pi_1 \circ \widetilde{\varphi}\colon\mathcal{U}_g \dashrightarrow \mathbb{C}^{3g}$.

We obtain the diagram
\begin{equation} \label{diagbig}
\xymatrix{
	 & \mathbb{C}^{g(g+9)/2} \ar@{<-_{)}}[d] \ar@{=}[r] & \mathbb{C}^{3 g} \times \mathbb{C}^{g(g-1)/2} \times \mathbb{C}^{2 g} \ar[dl]^{\pi_1} \ar@/^/[ddl]^{\pi_3}\\
	\mathcal{U}_g \ar[d]^{\pi} \ar@{-->}[r]^(.4){\varphi} \ar@{-->}[ur]^(.5){\widetilde{\varphi}}& \mathcal{S} \simeq \mathbb{C}^{3 g} \ar[d]^{\rho_g}\\
	\Lambda_g \ar@{^{(}->}[r] & \mathbb{C}^{2g}
	}
\end{equation}

\begin{lem}[\rm(Corollary 5.5 in \cite{B3})] \label{cor3}
Projection $\pi_3\colon \mathbb{C}^{g(g+9)/2} \to \mathbb{C}^{2 g}$ onto the last $2 g$ coordinates in Theorem \ref{c1}, bounded on $\mathcal{S} \simeq \mathbb{C}^{3 g}$, defines a polynomial mapping $\rho_g\colon \mathbb{C}^{3g} \to \mathbb{C}^{2g}$ . The coordinates $\lambda = (\lambda_s)$ in $\mathbb{C}^{2 g}$ are expressed as polynomials in the coordinates $b = (b_{i,j})$ in $\mathcal{S}$.
\end{lem}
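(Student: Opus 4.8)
The plan is to deduce the statement directly from Theorem~\ref{thm3}, treating the explicit solvability of the defining equations \eqref{rels}--\eqref{rels5} as its concrete underpinning. By Theorem~\ref{thm3} the projection $\pi_1|_{\mathcal{S}}\colon\mathcal{S}\to\mathbb{C}^{3g}$ is an isomorphism of affine varieties, so it admits a polynomial inverse $\iota\colon\mathbb{C}^{3g}\to\mathcal{S}\subset\mathbb{C}^{g(g+9)/2}$. Since $\pi_1\circ\iota=\mathrm{id}$, in coordinates $\iota$ has the form $b\mapsto(b,p(b),\lambda(b))$, where, being a morphism of affine varieties into $\mathbb{C}^{g(g+9)/2}$, its components $p(b)$ and $\lambda(b)$ are polynomial in $b$. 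Setting $\rho_g:=\pi_3\circ\iota=\pi_3\circ(\pi_1|_{\mathcal{S}})^{-1}$ exhibits $\rho_g$ as a composition of polynomial maps, hence polynomial, and its components are exactly the coordinates $\lambda_s=\lambda_s(b)$. This gives both assertions at once.

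To make the expressions $\lambda_s(b)$ explicit, and thereby verify the solvability constructively rather than merely cite it, I would solve the system \eqref{rels}--\eqref{rels5} recursively in order of increasing weight, using $\wt b_{i,j}=i+j$, $\wt p_{k,l}=k+l$, $\wt\lambda_s=s$. Equation \eqref{rels} gives $\lambda_4=\tfrac12(b_{3,1}-6b_{1,1}^2-4b_{1,3})$; equation \eqref{rels2} gives $p_{3,k}=6b_{1,1}b_{1,k}+6b_{1,k+2}-b_{3,k}$ for $k\in\{3,\dots,2g-1\}$, in particular determining $p_{3,3}$; and equation \eqref{rels3} then expresses $\lambda_6$ as a polynomial in $b$ once $p_{3,3}$ and $\lambda_4$ are substituted. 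Equations \eqref{rels4} and \eqref{rels5}, read in increasing weight, continue the recursion: at each weight the remaining unknowns $p_{k,l}$ and $\lambda_s$ enter linearly with nonzero constant coefficients, so each is expressed as a polynomial in $b$ and in already-determined quantities.

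The structural fact that makes this recursion work is a dimension count: $\mathcal{S}$ is cut out in $\mathbb{C}^{g(g+9)/2}$ by the $g(g+3)/2$ equations of Theorem~\ref{c1}, while the number of unknowns beyond the $3g$ coordinates $b$ is $g(g-1)/2$ (for the $p_{k,l}$, with $k,l\in\{3,\dots,2g-1\}$, $k\le l$) plus $2g$ (for the $\lambda_s$), again $g(g+3)/2$. Thus the system is square, and the claim reduces to its triangularity with respect to the grading. The main obstacle is precisely verifying this triangularity, that at every weight level each newly appearing $p_{k,l}$ or $\lambda_s$ occurs in exactly one equation with an invertible (constant, nonzero) coefficient, so that no obstruction arises and the solution is a genuine polynomial in $b$. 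This is exactly the content of Theorem~\ref{thm3}; granting that theorem, the present statement is a short corollary, and the recursion above simply records the resulting polynomial expressions for the components $\lambda_s$ of $\rho_g$.
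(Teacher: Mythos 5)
Your proposal is correct and follows essentially the same route as the paper: the statement is deduced from the uniformisation of $\mathcal{S}$ by the coordinates $b$ (Theorem \ref{thm3}), made concrete by solving \eqref{rels}--\eqref{rels5} recursively for the $p_{k,l}$ and $\lambda_s$, which is exactly the computation the paper recalls (equations \eqref{p3}, \eqref{p5}, \eqref{pj} and the ensuing formulas for $\lambda_{2j+4}$, $\lambda_{2j+6}$) in the proof of Theorem \ref{scale}. The only cosmetic difference is that the paper's recursion is organised by the first index of $p_{l,k}$ (first all $p_{3,\cdot}$, then $p_{5,\cdot}$, then $p_{j+4,\cdot}$ recursively) rather than strictly by weight, but this does not affect the validity of your argument since you defer the triangularity to Theorem \ref{thm3}.
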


\begin{lem}[\rm(Corollary 5.4 in \cite{B3})] \label{cor4}
Projection $\pi_2\colon \mathbb{C}^{g(g+9)/2} \to \mathbb{C}^{g(g-1)/2}$ onto the means $g (g-1) /2$ of coordinates in Theorem \ref{c1}, restricted to $\mathcal{S} \simeq \mathbb{C}^{3 g}$, defines a polynomial map $\mathbb{C}^{3g} \to \mathbb{C}^{g(g-1)/2}$. The coordinates $p = (p_{k,l})$ in $\mathbb{C}^{g(g-1)/2}$ are expressed as polynomials in the coordinates $b = (b_{i,j})$ in $\mathcal{S}$.
\end{lem}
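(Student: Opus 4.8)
The plan is to deduce the lemma directly from the uniformisation of Theorem \ref{thm3}, and then, if an explicit form is desired, to read off the polynomials by triangularising the defining relations of $\mathcal{S}$. By Theorem \ref{thm3} the projection onto the first $3g$ coordinates restricts to an isomorphism of affine varieties $\pi_1|_{\mathcal{S}}\colon \mathcal{S} \to \mathbb{C}^{3g}$, where $\mathcal{S}\subset \mathbb{C}^{g(g+9)/2}$ is the closed subvariety cut out by \eqref{rels}--\eqref{rels5}. Each coordinate $p_{k,l}$ is the restriction to $\mathcal{S}$ of an ambient coordinate function, hence an element of the coordinate ring $\mathcal{O}(\mathcal{S})$; under the isomorphism $(\pi_1|_{\mathcal{S}})^*\colon \mathcal{O}(\mathbb{C}^{3g}) \to \mathcal{O}(\mathcal{S})$ it is the image of a unique polynomial in $b = (b_{i,j})$, and writing $p_{k,l}$ as that polynomial gives the required expression. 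Thus $\pi_2|_{\mathcal{S}}$ followed by the inverse of $\pi_1|_{\mathcal{S}}$ is the asserted polynomial map $\mathbb{C}^{3g} \to \mathbb{C}^{g(g-1)/2}$; since the relations and all coordinates are homogeneous for the weights $\wt b_{i,j} = i+j$, $\wt p_{k,l} = k+l$, $\wt \lambda_s = s$, these polynomials are in fact graded.

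To make them explicit I would solve \eqref{rels}--\eqref{rels5} recursively. The system is square: it consists of $g(g+3)/2$ equations in the $g(g-1)/2 + 2g = g(g+3)/2$ unknowns $(p_{k,l}, \lambda_s)$. Relation \eqref{rels2} already gives $p_{3,k} = 6 b_{1,1} b_{1,k} + 6 b_{1,k+2} - b_{3,k}$ as a polynomial in $b$ for every $k$, while \eqref{rels} and \eqref{rels3} then express $\lambda_4$ and $\lambda_6$ polynomially, using $p_{3,3}$ just found. Proceeding through \eqref{rels4} and \eqref{rels5} in a suitable order on index pairs, each equation isolates, with a nonzero numerical coefficient, exactly one previously undetermined unknown: either a new $p_{k,l}$, or, when the $p$ obtained by raising both indices is already known through the symmetry $p_{k,l} = p_{l,k}$, a new $\lambda_s$. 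Every remaining term is a polynomial in $b$ by induction, and the $\lambda$-terms appearing in \eqref{rels4}--\eqref{rels5} are supplied in parallel by Lemma \ref{cor3}.

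The only genuine difficulty is the combinatorial bookkeeping of this triangularisation: choosing the order so that the symmetry of $p$ reassigns the surplus equations to the $\lambda_s$ rather than overdetermining the $p_{k,l}$, and checking that no step divides by a non-constant function of $b$. Both facts are guaranteed in advance by the unique solvability asserted in Theorem \ref{thm3}, so the abstract argument of the first paragraph already proves the lemma, while the recursion merely renders the resulting polynomials concrete.
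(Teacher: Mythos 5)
Your proposal is correct and follows essentially the same route as the paper: the explicit triangularisation you sketch (reading $p_{3,k}$ off \eqref{rels2}, then solving \eqref{rels4} and \eqref{rels5} recursively, with the surplus equations at $j=k$ and $k=j+2$ assigned to the $\lambda_s$) is exactly the recursion the authors recall from \cite{B3} in the proof of Theorem \ref{scale}. Your opening abstract argument via the coordinate-ring isomorphism of Theorem \ref{thm3} is a clean additional shortcut, but the substance coincides with the paper's treatment.
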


\begin{thm} \label{scale}
For $j \in \mathbb{N}$, $j < g$, the next diagram is commutative:
\begin{equation} \label{dnew}
\xymatrix{
	\mathbb{C}^{3 j} \ar[d]^{\rho_{j}} \ar@{^{(}->}[r] & \mathbb{C}^{3g} \ar[d]^{\rho_{g}} \\
	\mathbb{C}^{2j} \ar@{^{(}->}[r] & \mathbb{C}^{2g}\\
	}
\end{equation}
Here, the polynomial mapping $\rho_k$, for $k = j$ and $k = g$, is defined in Lemma \ref{cor3},
the embedding $\mathbb{C}^{3j} \subset \mathbb{C}^{3g}$ is defined by the equality of the $3j$ coordinates in $\mathbb{C}^{3j}$ to the first $3j$ coordinates in $\mathbb {C}^{3g}$, and the remaining $3(g-j)$ coordinates are set equal to zero. The embedding $\mathbb{C}^{2j} \subset \mathbb{C}^{2g}$ is defined by the equality of $2j$ coordinates in~$\mathbb{C}^{2j}$ to the first~$2j$ coordinates in~$\mathbb{C}^{2g}$, and the remaining $2(g-j)$ coordinates are set equal to zero.
\end{thm}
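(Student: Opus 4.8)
The plan is to use the explicit description of $\rho_g$ furnished by Lemmas \ref{cor3} and \ref{cor4}: on $\mathcal{S}\simeq\mathbb{C}^{3g}$ the coordinates $\lambda=(\lambda_s)$ and $p=(p_{k,l})$ are uniquely determined graded polynomials in the uniformising coordinates $b=(b_{i,l})$, obtained by solving the system \eqref{rels}--\eqref{rels5}. Write $\Delta_j\subset\mathbb{C}^{3g}$ for the image of the embedding $\mathbb{C}^{3j}\hookrightarrow\mathbb{C}^{3g}$, i.e.\ the locus $b_{i,l}=0$ for all $l>2j-1$. Since the $b_{i,l}$ with $l\leqslant 2j-1$ are exactly the coordinates of $\mathbb{C}^{3j}$, and the target embedding $\mathbb{C}^{2j}\hookrightarrow\mathbb{C}^{2g}$ is the locus $\lambda_s=0$ for $s>4j+2$, commutativity of \eqref{dnew} is equivalent to the following statement on $\Delta_j$: the induced values satisfy $\lambda_s=0$ for $s>4j+2$ and $p_{k,l}=0$ whenever $\max(k,l)>2j-1$, while the remaining $\lambda_s$ and $p_{k,l}$ equal the genus-$j$ functions $\lambda^{(j)}_s,p^{(j)}_{k,l}$ of the coordinates $b_{i,l}$, $l\leqslant 2j-1$.

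The proof rests on one structural observation. On $\Delta_j$ I define candidate values $\widehat\lambda,\widehat p$ by the genus-$j$ polynomials for the in-range indices and by $0$ for every index outside the genus-$j$ range. By Theorem \ref{thm3} the coordinates $b$ uniformise $\mathcal{S}$, so $\lambda$ and $p$ are the \emph{unique} functions of $b$ solving \eqref{rels}--\eqref{rels5}; hence it suffices to verify that $\widehat\lambda,\widehat p$ satisfy every genus-$g$ relation on $\Delta_j$, after which uniqueness forces them to be the genuine restriction of $\rho_g$. The key point is that the candidate rule ``extra index $\Rightarrow 0$'' matches \emph{exactly} the genus-$j$ convention ``index outside the range $\Rightarrow 0$''. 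Consequently, if the pair of indices labelling a relation lies within the genus-$j$ range, then on $\Delta_j$ every term carrying an out-of-range variable drops out -- the $b$'s by the defining equations of $\Delta_j$ and the $\widehat p,\widehat\lambda$ by construction -- and what remains is precisely the corresponding genus-$j$ relation, which holds because $\lambda^{(j)},p^{(j)}$ solve the genus-$j$ system by definition.

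The one real check, and the step I expect to be the main obstacle, is the relations whose labelling indices are out of range. For such a relation the left-hand side always contains a factor $b_{2,l}$ or $b_{3,l}$ with $l>2j-1$, hence vanishes on $\Delta_j$; on the right-hand side every monomial either carries a vanishing high-index $b$ or is a $\widehat p$-term of out-of-range type, again zero. The only terms that could survive are the parameters $\lambda$, and this is where the bookkeeping must be done with care, because in \eqref{rels5} several quantities of the \emph{same} weight are coupled. The saving fact is that a parameter $\lambda_{j'+k'+4}$ enters the relation labelled $(j',k')$ only on a thin diagonal -- with coefficient $2\delta_{j',k'}+\delta_{k',j'-2}+\delta_{j',k'-2}$ in \eqref{rels5}, and only on $k=3$ in \eqref{rels4} -- so it is present only when the two indices are equal or differ by $2$. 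A short weight count then shows that whenever the label is out of range and this coefficient is nonzero one has $j'+k'+4>4j+2$, so the surviving parameter is itself extra and equals $0$ in the candidate. Thus every out-of-range relation reduces to $0=0$, the candidate solves the whole system, and by uniqueness $\rho_g|_{\Delta_j}=(\lambda^{(j)}_4,\dots,\lambda^{(j)}_{4j+2},0,\dots,0)=\rho_j$ read in $\mathbb{C}^{2g}$. This proves the commutativity of \eqref{dnew}.

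Finally, I would record a conceptual shortcut as a remark. By Corollary \ref{cor53} and Lemma \ref{lemmu}, with $\mathbf{b}_1(\xi)=\sum_i b_{1,2i-1}\xi^i$, $\mathbf{b}_2(\xi)=\sum_i b_{2,2i-1}\xi^i$, $\mathbf{b}_3(\xi)=\sum_i b_{3,2i-1}\xi^i$, the map $\rho_g$ is obtained by reading off the coefficients in
\[
4\mathbf{m}(\xi)=\mathbf{b}_2(\xi)^2+2\mathbf{b}_3(\xi)(1-\mathbf{b}_1(\xi))+4(\xi^{-1}+2b_{1,1})(1-\mathbf{b}_1(\xi))^2,
\]
an identity of the form \eqref{mx} valid on all of $\mathcal{S}$ by density of the image of $\widetilde\varphi$. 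On $\Delta_j$ the series $\mathbf{b}_1,\mathbf{b}_2,\mathbf{b}_3$ truncate to their genus-$j$ counterparts, so the right-hand side becomes the genus-$j$ expression; since it then has $\xi$-degree at most $2j$, the coefficients of $\xi^i$ for $i\leqslant 2j$ reproduce $\lambda^{(j)}_{2i+2}$ while those for $i>2j$ -- that is, all $\lambda_s$ with $s>4j+2$ -- vanish automatically. I would nonetheless keep the relation-based argument as the primary proof, so as to avoid any dependence on the separately deferred proof of Lemma \ref{lemmu}.
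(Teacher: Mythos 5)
Your argument is correct, and it establishes the same combinatorial facts as the paper's proof --- that the high-index $p$'s and $\lambda$'s vanish once $b_{i,l}=0$ for $l>2j-1$, and that the surviving relations are literally the genus-$j$ ones --- but it gets there by a different decomposition. The paper re-opens the proofs of Theorem \ref{thm3} and Lemmas \ref{cor3}, \ref{cor4}: it rewrites \eqref{rels}--\eqref{rels5} as explicit recursive formulas \eqref{p3}, \eqref{p5}, \eqref{pj} expressing each $p_{k,l}$ and $\lambda_s$ as a polynomial in $b$, and then reads the two vanishing properties and the agreement with $\rho_j$ directly off those solved expressions. You instead treat Theorem \ref{thm3} as a black box (uniqueness of the fibre of $\pi_1$ over each $b$), pad the genus-$j$ solution with zeros, and verify that this candidate satisfies the unsolved genus-$g$ system on $\Delta_j$; uniqueness then forces it to be the actual restriction of $(\pi_2,\pi_3)$, hence of $\rho_g$. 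The delicate point is the same in both versions --- tracking which $\lambda_{j'+k'+4}$ can survive in an out-of-range instance of \eqref{rels4} or \eqref{rels5} --- and your weight count ($|j'-k'|\leqslant 2$ together with $\max(j',k')\geqslant 2j+1$ forces $j'+k'+4\geqslant 4j+4>4j+2$) settles it correctly. Your route buys independence from the explicit recursions, at the price of leaning on Theorem \ref{thm3}; the paper's route produces the explicit truncated expression for $\rho_g$ that it compares with $\rho_j$ in its final line. One confirmation of your closing caution: the generating-function shortcut via \eqref{mx2} cannot serve as the primary proof here, because in this paper Lemma \ref{lemmu} (and hence Theorem \ref{t1}) is proved \emph{using} Theorem \ref{scale}, so that route would be circular; relegating it to a remark, as you propose, is exactly right.
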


\begin{proof}
Recall the proof of Theorem \ref{thm3}, Lemma \ref{cor3}, and Lemma~\ref{cor4} presented in~\cite{B3}.

In Theorem \ref{c1}, equations \eqref{rels}, \eqref{rels2}, and \eqref{rels3} are equivalent to
\begin{align}
2 \lambda_{4} &= - 6 b_{1,1}^2 + b_{3,1} - 4 b_{1,3}, \nonumber\\
p_{3, k} &= 6 b_{1,1} b_{1,k} - b_{3, k} + 6 b_{1,k+2}, \label{p3} \\
4 \lambda_{6} &= 8 b_{1,1}^3 + b_{2,1}^2 - 2 b_{1,1} b_{3,1} - 8 b_{1,1} b_{1,3} + 2 b_{3, 3} - 8 b_{1,5}, \nonumber
\end{align}
where $k \in \{3, 5,\dots, 2g-1\}$ and $b_{1, 2g+1} = 0$. Recall that $p_{j,k} = p_{k,j}$. Taking these relations into account, equation~\eqref{rels4}, for $k=3$, takes the form
$$
2 \lambda_{8} = 8 b_{1,1}^2 b_{1, 3} - b_{3,1} b_{1,3} + b_{2,1} b_{2,3} - b_{1,1} b_{3,3} + 2 b_{1,3}^2 - 4 b_{1,1} b_{1,5} + b_{3, 5} - 4 b_{1,7}.
$$
For $k \in \{5,7,\dots, 2g-1\}$, equation \eqref{rels4} takes the form
\begin{align}
p_{5, k} &= b_{3,1} b_{1,k} - b_{2,1} b_{2, k} + b_{1,1} b_{3, k} - 2 (4 b_{1,1}^2 + b_{1, 3}) b_{1, k} \nonumber\\
& \qquad + 10 b_{1,1} b_{1, k+2} - 2 b_{3, k+2} + 10 b_{1,k+4}. \label{p5}
\end{align}
For $k \geqslant j+4$, $j, k \in \{3,5,\dots, 2g-1\}$, equation \eqref{rels5} becomes
\begin{align}
p_{j+4,k}&=  2 p_{j+2,k+2} - p_{j, k+4} - 8 b_{1,1} b_{1,j} b_{1,k} - 2 b_{1,k} b_{1,j+2} \nonumber \\
& \qquad- 2 b_{1,j} b_{1,k+2} + b_{1,j} b_{3, k} - b_{2,j} b_{2,k} + b_{3, j} b_{1,k}. \label{pj}
\end{align}
In these equations, any variable is equal to zero if its index is outside the range specified in Theorem  \ref{c1}.
Equations \eqref{p3}, \eqref{p5} and \eqref{pj} express the coordinates $p_{l,k}$, where $l \leqslant k$, $k, l \in \{3,5,\dots, 2g-1\}$, as polynomials in coordinates $b = (b_{i,j})$. Here, we consider equation \eqref{pj} as a recursive expression for $p_{l,k}$, where $l = 7, 9,\dots, 2g-1$.

$j = k \in \{3,5,\dots, 2g-1\}$, equation \eqref{rels5} takes the form
$$
4 \lambda_{2j+4} = b_{2,j} b_{2,j} + 8 b_{1,1} b_{1,j}^2 + 4 b_{1,j} b_{1,j+2}- 2 b_{1,j} b_{3, j} - 2 p_{j+2, j+2} + 2 p_{j,j+4}.
$$
$k = j+2$, $j \in \{3,5,\dots, 2g-3\}$, equation \eqref{rels5} becomes
\begin{align*}
2 \lambda_{2j+6} &= b_{2,j} b_{2,j+2} + 8 b_{1,1} b_{1,j} b_{1,j+2} + 2 b_{1,j+2}^2 + 2 b_{1,j} b_{1,j+4}\\
& \qquad - b_{1,j} b_{3, j+2} - b_{3, j} b_{1,j+2} - p_{j+2,j+4} + p_{j, j+6} .
\end{align*}
In these expressions $b_{i,j} = 0$, for $j > 2g-1$, and $p_{l,k} = 0$, for $l > 2g-1$ or $k > 2g-1$. Taking into account the expressions for $p_{l,k}$, we expressed $\lambda_s$, where~$s \in \{4, 6, 8,\dots, 4g, 4g+2 \}$, as polynomials in coordinates $b = (b_{i,j})$.

The resulting expressions for $p_{k,l}$, where $k \leqslant l$, have the following property: if $b_{i,j} = 0$, for $i \in \{1,2,3\}$ , $j \in \{l, l+2,\dots, 2g-1\}$, then $p_{k,l} = 0$.

The resulting expressions for $\lambda_s$ possess the following property: if $b_{i,j} = 0$, for $i \in \{1,2,3\}$, $j \in \{l, l+2, \dots, 2g-1\}$, then $\lambda_{2l+2} = 0$ and $\lambda_{2l+4} = 0$.

Setting $l = 2j+1$ and $b_{i,k} = 0$, for $i \in \{1,2,3\}$, $k \in \{l, l+2,\dots, 2g-1\}$, we obtain an expression which coincides with the expression for the polynomial mapping~$\rho_j$ (see the lemma \ref{cor3}), which completes the proof of the theorem.\qed
\end{proof}

\begin{rem}\label{rem-1}
Under the embedding $\mathbb{C}^{2j} \subset \mathbb{C}^{2g}$ used in Theorem~\ref{scale}, the image of the manifold $\Lambda_{j}$ does not intersect with the manifold $\Lambda_{g}$. This does not allow one to construct an analogue of diagram (40) for the bundles $\pi\colon\mathcal{U}_g \to \Lambda_g$. Thus, the transition from bundles $\pi$ to mappings $\rho_g \colon \mathbb{C}^{3g}\to \mathbb{C}^{2g}$ is a fundamentally important step in applications of the parametric Korteweg-de Vries hierarchy.
\end{rem}

Let us describe the polynomial mappings from Lemmas \ref{cor3} and \ref{cor4} in terms of the results of \S \ref{OBS} and \S \ref{1BS}.

\begin{thm} \label{t1}
The following relation
\begin{equation} \label{mx2}
4 \mathbf{m}(\xi) = \mathbf{b}_2(\xi)^2 + 2 \mathbf{b}_3(\xi) (1 - \mathbf{b}_1(\xi)) + 4 (\xi^{-1} + 2 b_{1,1}) (1 - \mathbf{b}_1(\xi))^2,
\end{equation}
where
\begin{align}
\label{bm1}
\mathbf{b}_i(\xi) &=\sum_{j=1}^g b_{i,2j-1} \xi^j \quad \text{for } i = 1,2,3, \\
\label{bm2}
\mathbf{m}(\xi) &= \xi^{-1} + \sum_{j=1}^{2g} \lambda_{2j+2} \xi^j,
\end{align}
defines the expressions for the polynomial mapping $\rho_g \colon \mathbb{C}^{3g} \to \mathbb{C}^{2g}$ from Lemma~\ref{cor3}.
\end{thm}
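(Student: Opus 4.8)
The plan is to recognise that \eqref{mx2} is, symbol for symbol, the generating-function identity of Corollary~\ref{cor53} under the substitution
\[
b_{1,2j-1}\mapsto\wp_{1,2j-1},\qquad b_{2,2j-1}\mapsto\wp_{1,1,2j-1},\qquad b_{3,2j-1}\mapsto\wp_{1,1,1,2j-1},\qquad \lambda_{2j+2}\mapsto\mu_{2j+2},
\]
so that $2b_{1,1}\mapsto 2\wp_{1,1}=v_2$ and the three series \eqref{bm1} go over into those of Corollary~\ref{cor53}. First I would expand both sides of \eqref{mx2} as Laurent series in $\xi$: the coefficients of $\xi^{-1}$ and $\xi^0$ are the trivial identities $4=4$ and $0=0$, while for each $j=1,\dots,2g$ the coefficient of $\xi^j$ solves for $\lambda_{2j+2}$ as a polynomial $\tfrac14 P_j(b)$ in the $3g$ coordinates $b=(b_{i,2j-1})$. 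Thus \eqref{mx2} determines a polynomial map $\mathbb{C}^{3g}\to\mathbb{C}^{2g}$, and the content of the theorem is that this map coincides with $\rho_g$ from Lemma~\ref{cor3}.

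The main step is to reduce this identification to Lemma~\ref{lemmu}. By Theorem~\ref{thm3} the coordinates $b$ uniformise $\mathcal{S}\simeq\mathbb{C}^{3g}$, and $\widetilde\varphi\colon\mathcal{U}_g\dashrightarrow\mathcal{S}$ is dominant, its components $\wp_{1,j},\wp_{1,1,j},\wp_{1,1,1,j}$ being the chosen algebraically independent generators of $\mathcal{F}_g$, so its image is Zariski dense. Hence two polynomials in $b$ coincide as soon as they agree after the substitution $b=\widetilde\varphi(t,\lambda)$. Pulling $\tfrac14 P_j(b)$ back along $\widetilde\varphi$ gives, by Corollary~\ref{cor53}, the constant $\mu_{2j+2}$, whereas pulling the $(2j+2)$-component of $\rho_g$ back gives the parameter $\lambda_{2j+2}$, since commutativity of diagram~\eqref{d} yields $\rho_g(\varphi(t,\lambda))=\lambda$. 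Therefore the two polynomials agree on the dense image, and hence everywhere, precisely when $\mu_{2j+2}=\lambda_{2j+2}$ for $j=1,\dots,2g$; this is exactly Lemma~\ref{lemmu}, which I would prove in tandem.

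To prove $\mu_{2j+2}=\lambda_{2j+2}$ I would match coefficients, using Theorem~\ref{t31} to eliminate the fourth logarithmic derivatives and the products of third ones. For $j=1$ the computation is immediate: in the $b$-variables the $\xi^1$-coefficient of \eqref{mx2} reads
\[
4\lambda_4 = 2 b_{3,1} - 12 b_{1,1}^2 - 8 b_{1,3},
\]
which is precisely the relation \eqref{rels}; substituting $b=\widetilde\varphi(t,\lambda)$ together with \eqref{p111i} at $i=1$, namely $\wp_{1,1,1,1}=6\wp_{1,1}^2+4\wp_{1,3}+2\lambda_4$, collapses the right-hand side to $4\lambda_4$, so $\mu_4=\lambda_4$. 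For general $j$ the only combinations in the $\xi^j$-coefficient of \eqref{mx2} that require Theorem~\ref{t31} are $\wp_{1,1,1,2i-1}$ (from $\mathbf{b}_3(1-\mathbf{b}_1)$) and $\wp_{1,1,2i-1}\wp_{1,1,2k-1}$ (from $\mathbf{b}_2^2$), which are exactly the left-hand sides of the two relations of Theorem~\ref{t31}; after substituting those relations every $\wp$-dependent term must cancel, because $\mu_{2j+2}$ is independent of $t$, and the surviving constant is $\lambda_{2j+2}$. Equivalently, one checks that the $\xi^j$-coefficient of \eqref{mx2} reproduces, after the elimination of the $p_{k,l}$ via \eqref{p3}, \eqref{p5}, \eqref{pj}, the very expression for $\lambda_{2j+2}$ read off from the surface relations \eqref{rels}--\eqref{rels5} in the proof of Theorem~\ref{scale}.

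The main obstacle is the bookkeeping in this coefficient comparison for general $j$: one must organise the reduction so that the cancellation of all non-constant $\wp$-terms in the $\xi^j$-coefficient is manifest, and so that it lines up with the elimination of the $p_{k,l}$ carried out in the proof of Theorem~\ref{scale}. It is here that the precise $\lambda$-terms of Theorem~\ref{t31}, including the Kronecker-delta contributions, are essential. The two-variable generating functions $\mathbf{q}(\xi,\eta)$ and $\mathbf{p}(\xi,\eta)$ introduced before Theorem~\ref{t62}, whose diagonal $\xi=\eta$ produces the right-hand side of \eqref{mx2} and whose polynomiality encodes the off-diagonal relations, give a convenient way to package this bookkeeping and thereby keep the argument structural rather than term-by-term.
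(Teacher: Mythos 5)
Your reduction is the same as the paper's: Theorem \ref{t1} is obtained from Corollary \ref{cor53} and Theorem \ref{c1} once one knows Lemma \ref{lemmu}, i.e.\ $\mu_{2j+2}=\lambda_{2j+2}$, and your density argument for transferring identities along $\widetilde\varphi$ is sound (your $\xi^{-1}$, $\xi^0$ and $\xi^1$ coefficient checks are all correct, and the $j=1$ case does reproduce \eqref{rels}). Where you diverge is in the proof of Lemma \ref{lemmu} itself. You propose a term-by-term comparison of the $\xi^j$-coefficient of \eqref{mx2} with the expression for $\lambda_{2j+2}$ extracted from \eqref{rels}--\eqref{rels5}, carried out only for $j=1$ and described for general $j$ as bookkeeping to be organised; as you yourself note, the cancellation of the $\wp$-dependent terms is guaranteed by Corollary \ref{cor53}, but the identification of the surviving constant with $\lambda_{2j+2}$ is exactly the content of the lemma and is not established by that remark. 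This is the one genuine gap: the general-$j$ verification is asserted, not done.

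The paper closes this gap without any bookkeeping, and the key observation is one you already have in hand: since the $\mu_{2j+2}$ are $t$-independent, the relation \eqref{mx2} together with \eqref{bm1} and \eqref{bm3} determines each $\mu_{2j+2}$ as a function of $\lambda$ alone, so it suffices to evaluate on any convenient slice of $\mathcal{S}$. The paper takes $b_{1,2j-1}=b_{2,2j-1}=0$ for all $j$; then \eqref{mx2} collapses to $2\mathbf{m}(\xi)-2\xi^{-1}=\mathbf{b}_3(\xi)$, giving $2\mu_{2j+2}=b_{3,2j-1}$, while the specialised surface relations \eqref{rels}--\eqref{rels5} give $2\lambda_{2j+2}=b_{3,2j-1}$ for $j\leqslant g$; the remaining identities $\mu_{2j+2}=\lambda_{2j+2}$ for $g<j\leqslant 2g$ are then deduced from the genus-stability of Theorem \ref{scale} rather than from a direct computation. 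If you replace your general-$j$ coefficient comparison by this evaluation on the slice (and invoke Theorem \ref{scale} for the top half of the $\lambda$'s, which your diagonal-of-$\mathbf{p}(\xi,\eta)$ packaging does not by itself supply), your argument becomes complete; as written, the hard half of the lemma is still open.
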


\begin{proof}
The proof follows from Corollary~\ref{cor53}, Lemma \ref{lemmu}, and Theorem~\ref{c1}. It remains for us to prove the result of Lemma \ref{lemmu}, namely prove that
$$
\lambda_4 = \mu_4, \;\lambda_6 = \mu_6, \;\dots, \;\lambda_{4g+2} = \mu_{4g+2}
$$
for
\begin{equation} \label{bm3}
\mathbf{m}(\xi) = \xi^{-1} + \sum_{j=1}^{2g} \mu_{2j+2} \xi^j.
\end{equation}
Note that, since in Corollary~\ref{cor53} the constants $\mu_4, \mu_6,\dots \mu_{4g+2}$ do not depend on the variables $t_1, t_3,\dots, t_{2g-1}$, the expressions \eqref{bm1}, \eqref{mx2} and \eqref{bm3} define $\mu_4, \mu_6,\dots,\mu_{4g+2}$ as functions of $\lambda_4, \lambda_6,\dots,\lambda_{4g+2}$.

We set $b_{1,2j-1} = b_{2,2j-1} = 0$, for $j \in \{1,\dots, g\}$. The expression \eqref{mx2} becomes $2 \mathbf{m}(\xi) - 2 \xi^{-1} = \mathbf{b}_3(\xi)$. In this case, we obtain $2 \mu_{2j+2} = b_{3,2j-1}$, for $j \in \{1,\dots, g\}$. The relations \eqref{rels}--\eqref{rels5} in this case will take the form
\begin{gather*}
b_{3,1} = 2 \lambda_{4}, \quad b_{3, 3} = 2 \lambda_{6}, \quad b_{3, 5} = 2 \lambda_{8},  \\
p_{j,k} = - \frac{(j-1)}2 b_{3, k+j-3},\quad b_{3,2k+1} = 2 \lambda_{2k+4}, \quad b_{3,2k+3} = 2 \lambda_{2k+6}
\end{gather*}
for $j,k \in \{3,\dots, 2g-1\}$, $k \geqslant j$, and any variable is equal to zero if its index is outside the range specified in Theorem~\ref{c1}.

Hence, $\mu_{2j+2} = \lambda_{2j+2}$, for $j \in \{1, \dots, g\}$. The remaining equalities follow from  Theorem \ref{scale}.\qed
\end{proof}

\begin{cor} \label{cor77}
Under the conditions of Theorem \ref{TBS}, the sequence of functions $v_2, v_4,\dots,$ $v_{2g}$ defines a mapping
\[
\psi_v\colon\mathbb{C}^g \dashrightarrow \mathbb{C}^{3g}
\]
from the common domain of functions $v_2, v_4,\dots, v_{2g}$ depending on variables~$t = (t_1, t_3, t_5,\dots$, $t_{2g-1})$ to the space $\mathbb {C}^{3g}$ with coordinates $b_{1,j} = \frac12 v_{j+1}$, $b_{2,j} = \frac12 v_{j+1}'$, $b_{ 3,j} = \frac12 v_{j+1}''$, where $j \in \{1,3,\dots,2g-1\}$.

The image of the composition of mappings $\rho_g \circ \psi_v$ is a point in $\mathbb{C}^{2g}$ with coordinates $\lambda_s = \mu_s$, $s \in \{4,6,\dots, 4g+ 2\}$.
\end{cor}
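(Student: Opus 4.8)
The plan is to evaluate the defining relation for $\rho_g$ given in Theorem~\ref{t1} along the image of $\psi_v$ and to recognise the resulting identity as precisely relation~\eqref{mx} of Theorem~\ref{TBS}. Concretely, I would substitute the coordinates $b_{1,j} = \frac12 v_{j+1}$, $b_{2,j} = \frac12 v_{j+1}'$, $b_{3,j} = \frac12 v_{j+1}''$ (for $j \in \{1,3,\dots,2g-1\}$) produced by $\psi_v$ into formulas~\eqref{bm1} for the generating functions $\mathbf{b}_i(\xi)$, and then read off what the polynomial map $\rho_g$ assigns to this point.

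The first step is a direct comparison of generating functions. With the above substitution, $\mathbf{b}_i(\xi) = \sum_{j=1}^g b_{i,2j-1}\xi^j$ becomes $\frac12\sum_{j=1}^g v_{2j}\xi^j$, $\frac12\sum_{j=1}^g v_{2j}'\xi^j$, $\frac12\sum_{j=1}^g v_{2j}''\xi^j$ for $i=1,2,3$ respectively, which are exactly the functions $\mathbf{b}_1(\xi)$, $\mathbf{b}_2(\xi)$, $\mathbf{b}_3(\xi)$ appearing in~\eqref{bx}. Moreover $2 b_{1,1} = v_2$, so the factor $\xi^{-1} + 2 b_{1,1}$ in~\eqref{mx2} coincides with $\xi^{-1} + v_2$ in~\eqref{mx}. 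Hence the entire right-hand side of~\eqref{mx2}, evaluated along $\psi_v$, equals the right-hand side of~\eqref{mx}.

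Now I would invoke Theorem~\ref{TBS}: under its hypotheses there is a set of \emph{constants} $\mu = (\mu_4,\dots,\mu_{4g+2})$, independent of $t$, for which the right-hand side of~\eqref{mx} equals $4\mathbf{m}(\xi)$ with $\mathbf{m}(\xi) = \xi^{-1} + \sum_{i=1}^{2g}\mu_{2i+2}\xi^i$. By Theorem~\ref{t1} the map $\rho_g$ is defined by reading the coefficients $\lambda_{2j+2}$ off the $\xi^j$-expansion of $\frac14$ times this same right-hand side. Comparing coefficients of $\xi^j$ therefore yields $\lambda_{2j+2} = \mu_{2j+2}$ for $j = 1,\dots,2g$, that is $\lambda_s = \mu_s$ for all $s \in \{4,6,\dots,4g+2\}$. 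Since the $\mu_s$ do not depend on $t$, the composite $\rho_g\circ\psi_v$ is constant on the common domain of $v_2,\dots,v_{2g}$, and its image is the single point with coordinates $\lambda_s = \mu_s$.

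The argument is essentially a bookkeeping identification, so there is no genuine analytic obstacle; the one point requiring care is to check that the normalisation conventions match --- in particular the factor $\frac12$ distinguishing the coordinates $b_{i,j}$ of $\psi_v$ from the functions $v_{2k}$, and the appearance of $2b_{1,1}$ rather than $b_{1,1}$ in~\eqref{mx2} --- so that the substituted right-hand side reproduces~\eqref{mx} exactly rather than up to a constant. Once this matching is confirmed, the constancy of the $\mu_s$ furnished by Theorem~\ref{TBS} immediately gives that the image is a point.
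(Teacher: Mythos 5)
Your argument is correct and is exactly the (implicit) derivation the paper intends: the corollary follows by substituting the coordinates of $\psi_v$ into \eqref{mx2}, recognising the result as \eqref{mx} from Theorem \ref{TBS} (including the normalisations $\mathbf{b}_i(\xi)=\frac12\sum v_{2j}^{(i-1)}\xi^j$ and $2b_{1,1}=v_2$), and reading off $\lambda_{2j+2}=\mu_{2j+2}$ from Theorem \ref{t1}, with constancy of the $\mu$'s giving that the image is a single point. No discrepancies with the paper's approach.
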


\begin{thm} \label{t2}
The coefficients of the expansion
\begin{equation} \label{pijn}
\mathbf{p}(\xi, \eta) = \sum_{i=1}^g \sum_{j=1}^g p_{2i-1,2j-1} \xi^i \eta^j,
\end{equation}
where
\begin{align*}
\mathbf{p}(\xi, \eta)&=\frac{\xi^2 \eta^2}{(\xi - \eta)^2} \bigg(2 \xi^{-1} + 2 \eta^{-1} + 4 \sum_{i=1}^g \lambda_{4i+2} \xi^i \eta^i\\[-10pt]
&\hskip60pt + 2 \sum_{i=0}^{g-1} \lambda_{4i+4} (\xi + \eta) \xi^i \eta^i - \mathbf{q}(\xi, \eta)\bigg),\\
\mathbf{q}(\xi, \eta)&= \mathbf{b}_2(\xi) \mathbf{b}_2(\eta) + (1 - \mathbf{b}_1(\xi)) \mathbf{b}_3(\eta) + \mathbf{b}_3(\xi) (1 - \mathbf{b}_1(\eta))\\
&\hskip60pt+ 2 (1 - \mathbf{b}_1(\xi)) (1 - \mathbf{b}_1(\eta)) (\xi^{-1} + \eta^{-1} + 4 b_{1,1})
\end{align*}
and $\mathbf{b}_j(\xi)$ are given by the expressions \eqref{bm1} and the coordinates~$\lambda$ are expressed in terms of the coordinates~$b$ according to Theorem \ref{t1}, give expressions for the polynomial mapping $\mathbb{C}^{3g} \to \mathbb{C}^{g (g-1)/2}$ from Lemma \ref{cor4}.
\end{thm}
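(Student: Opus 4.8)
The plan is to follow the same route as the proof of Theorem~\ref{t1}, reducing the claimed identity to the already-established hyperelliptic relations and then passing from the hyperelliptic locus to all of $\mathbb{C}^{3g}$. First I would observe that the generating function $\mathbf{p}(\xi,\eta)$ written in the statement is literally the Buchstaber--Shorina polynomial of Corollary~8.1 in~\cite{BS}: the only differences are that the constants $\mu_{4i+2}, \mu_{4i+4}$ there are replaced by $\lambda_{4i+2}, \lambda_{4i+4}$, which is legitimate by Lemma~\ref{lemmu} (equivalently Theorem~\ref{t1}), and that $2v_2$ is replaced by $4b_{1,1}$, consistent with the normalisation $v_2 = 2b_{1,1}$ of Corollary~\ref{cor77}. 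By Corollary~8.1 in~\cite{BS} the function $\mathbf{p}(\xi,\eta)$ is a polynomial of degree $g$ in $\xi$ and $\eta$, so the factor $\xi^2\eta^2/(\xi-\eta)^2$ introduces no pole and the coefficients $p_{2i-1,2j-1}$ in~\eqref{pijn} are well defined. After substituting the expressions $\lambda_s = \lambda_s(b)$ supplied by Theorem~\ref{t1} (Lemma~\ref{cor3}), each such coefficient becomes a genuine polynomial $P_{2i-1,2j-1}(b)$ in the coordinates $b = (b_{i,j})$ on $\mathcal{S}\simeq\mathbb{C}^{3g}$. The goal is to show that, for $i,j\geqslant 2$, this polynomial coincides with the polynomial $Q_{2i-1,2j-1}(b)$ describing the map of Lemma~\ref{cor4}, that is, with the expressions~\eqref{p3}, \eqref{p5}, \eqref{pj}. (The coefficients with $i=1$ or $j=1$ are just $2b_{1,2j-1}$ by Lemma~8.2 in~\cite{BS} and lie outside the range of Lemma~\ref{cor4}.)

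Next I would evaluate both polynomials on the image of $\widetilde\varphi$. There the $b_{i,2j-1}$ take the hyperelliptic values $\wp_{1,2j-1}$, $\wp_{1,1,2j-1}$, $\wp_{1,1,1,2j-1}$, so the generating functions $\mathbf{b}_i(\xi)$ agree with those of Corollary~\ref{cor53}; by Lemma~\ref{lemmu} the substituted parameters $\lambda_s(b)$ equal the curve parameters $\lambda_s$ of $\mathcal{V}_\lambda$, and hence $\mathbf{p}(\xi,\eta)$ specialises to the actual Buchstaber--Shorina polynomial. Its coefficients are then $p_{2i-1,2j-1} = 2\wp_{2i-1,2j-1}$ by Corollary~\ref{cor63}, and by Theorem~\ref{c1} these are exactly the middle coordinates whose expression in $b$ is, by definition of Lemma~\ref{cor4}, the polynomial $Q_{2i-1,2j-1}(b)$. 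Thus $P_{2i-1,2j-1}$ and $Q_{2i-1,2j-1}$ agree on the image of $\widetilde\varphi$.

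To conclude I would invoke the identity theorem for polynomials. The image of $\widetilde\varphi$ is Zariski-dense in $\mathcal{S}\simeq\mathbb{C}^{3g}$: the $3g$ functions $\wp_{1,j}, \wp_{1,1,j}, \wp_{1,1,1,j}$ are the generators of $\mathcal{F}_g$ that uniformise $\mathcal{S}$ in Theorem~\ref{thm3}, and since $\mathcal{F}_g$ has transcendence degree $\dim\mathcal{U}_g = 3g$ over $\mathbb{C}$, these generators are algebraically independent, so $\varphi$ is dominant. Two polynomials in $b$ agreeing on a Zariski-dense subset coincide identically, whence $P_{2i-1,2j-1}(b) = Q_{2i-1,2j-1}(b)$ on all of $\mathbb{C}^{3g}$, which is the assertion.

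The step I expect to be the main obstacle is justifying the Zariski-density of the image of $\widetilde\varphi$ cleanly, since it is precisely this density that upgrades the functional identity on hyperelliptic data to an identity of polynomials on the whole coordinate space $\mathbb{C}^{3g}$; once it is in place, the rest is bookkeeping. An alternative that sidesteps density would be to expand~\eqref{pijn} directly and match it term by term against the recursion~\eqref{p3}, \eqref{p5}, \eqref{pj}, exploiting that the products $\mathbf{b}_2(\xi)\mathbf{b}_2(\eta)$ in $\mathbf{q}(\xi,\eta)$ generate exactly the left-hand sides of~\eqref{rels5}; but this is a lengthy computation, whereas the route above reduces everything to the relations of Theorem~\ref{t31} already encoded in Theorem~\ref{c1}.
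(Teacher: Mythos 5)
Your proposal is correct and follows essentially the same route as the paper, whose entire proof is the single sentence that the result ``follows directly from Corollary~\ref{cor63} and Theorem~\ref{c1}'' --- precisely the two ingredients you combine. The extra material you supply (Lemma~\ref{lemmu} for replacing $\mu$ by $\lambda$, and the Zariski-density of the image of $\widetilde{\varphi}$ needed to pass from an identity on hyperelliptic data to an identity of polynomials on $\mathbb{C}^{3g}$) is a legitimate filling-in of details the paper leaves implicit, not a different argument.
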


The proof follows directly from Corollary~\ref{cor63} and Theorem~\ref{c1}.

\section{Polynomial differential Buchstaber--Shorina operators} \label{s8}

Vector fields $\partial_{k}$, where $k \in \{1,3,\dots, 2g-1\}$, act on the field $\mathcal{F}_g$. In~\cite{B3} polynomial vector fields $\mathcal{D}_k$ in $\mathbb{C}^{3g}$, where $k \in \{1,3,\dots, 2g-1\} $, were defined for all $i \in \{1,2,3\}$ and $j \in \{1,3,\dots,2g-1\}$ by the relations
\begin{equation} \label{keyrel}
\partial_k (\varphi^* b_{i,j}) = \varphi^* \mathcal{D}_k (b_{i,j}).
\end{equation}

\begin{lem}[\rm(Lemma 6.2 in \cite{B3})] \label{ld1}
$$
\mathcal{D}_1 = \sum_{j \in \{1,3,\dots,2g-1\}}\bigg( b_{2,j} \frac\partial{\partial b_{1,j}} + b_{3,j} \frac\partial{\partial b_{2,j}} + 4 (2 b_{1,1} b_{2,j} + b_{2,1} b_{1, j} + b_{2, j+2}) \frac\partial{\partial b_{3,j}}\bigg),
$$
where $b_{2,2 g + 1} = 0$.
\end{lem}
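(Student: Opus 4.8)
The plan is to compute $\mathcal{D}_1$ directly from its defining property \eqref{keyrel}, namely $\partial_1(\varphi^* b_{i,j}) = \varphi^* \mathcal{D}_1(b_{i,j})$, using the explicit form of $\varphi$. Recall from Theorem \ref{c1} that $\varphi^* b_{1,j} = \wp_{1,j}$, $\varphi^* b_{2,j} = \wp_{1,1,j}$, and $\varphi^* b_{3,j} = \wp_{1,1,1,j}$, and that $\partial_1$ is differentiation with respect to $t_1$. Since the coordinates $b = (b_{i,j})$ uniformise $\mathcal{S} \simeq \mathbb{C}^{3g}$ (Theorem \ref{thm3}), it suffices to express $\partial_1(\varphi^* b_{i,j})$ as the pullback of a polynomial in the $b_{i,j}$ for each $i \in \{1,2,3\}$ and $j \in \{1,3,\dots,2g-1\}$; that polynomial is then $\mathcal{D}_1(b_{i,j})$.

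First I would treat the two easy cases. Differentiating the defining relations $\varphi^* b_{1,j} = \wp_{1,j}$ and $\varphi^* b_{2,j} = \wp_{1,1,j}$ with respect to $t_1$ gives immediately
\[
\partial_1 \wp_{1,j} = \wp_{1,1,j} = \varphi^* b_{2,j}, \qquad \partial_1 \wp_{1,1,j} = \wp_{1,1,1,j} = \varphi^* b_{3,j},
\]
so that $\mathcal{D}_1(b_{1,j}) = b_{2,j}$ and $\mathcal{D}_1(b_{2,j}) = b_{3,j}$. These reproduce the first two terms of the claimed formula.

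The only real content lies in the third coordinate, where $\partial_1(\varphi^* b_{3,j}) = \wp_{1,1,1,1,j}$, and one must exhibit this fifth-order logarithmic derivative as a polynomial in the generators. Here I would invoke relation \eqref{pirel}, established in the proof of Lemma \ref{l51}: writing $j = 2k-1$ and rearranging it yields
\[
\wp_{1,1,1,1,j} = 4 \wp_{1,1,j+2} + 4 \wp_{1,1,1} \wp_{1,j} + 8 \wp_{1,1} \wp_{1,1,j} = \varphi^*\bigl( 4(b_{2,j+2} + b_{2,1} b_{1,j} + 2 b_{1,1} b_{2,j}) \bigr),
\]
whence $\mathcal{D}_1(b_{3,j}) = 4(2 b_{1,1} b_{2,j} + b_{2,1} b_{1,j} + b_{2,j+2})$, the third term of the formula. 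The boundary convention $b_{2,2g+1} = 0$ arises precisely in the top case $j = 2g-1$, where $\wp_{1,1,2g+1}$ would involve the nonexistent variable $t_{2g+1}$ and is set to zero, in agreement with the convention in \eqref{pirel} that any $\wp$ with an out-of-range index vanishes.

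The substantive step, and the only place where the hyperelliptic structure enters, is the reduction of $\wp_{1,1,1,1,j}$ via \eqref{pirel}; everything else is formal differentiation of the defining equations of $\varphi$. Since \eqref{pirel} is itself a direct consequence of Theorem \ref{t31} (the relations among the functions $\wp_{k_1,\dots,k_n}$), the lemma requires no new computation beyond what was already carried out in the proof of Lemma \ref{l51}. I therefore expect no serious obstacle; the point to verify carefully is the index bookkeeping at the top of the range, i.e. that the convention $b_{2,2g+1}=0$ in the stated formula is exactly the vanishing of $\wp_{1,1,2g+1}$.
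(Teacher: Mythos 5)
Your derivation is correct: the first two components follow formally from $\varphi^* b_{1,j}=\wp_{1,j}$, $\varphi^* b_{2,j}=\wp_{1,1,j}$, $\varphi^* b_{3,j}=\wp_{1,1,1,j}$, and the third component is exactly relation \eqref{pirel} rewritten as $\wp_{1,1,1,1,j}=4\wp_{1,1,j+2}+4\wp_{1,1,1}\wp_{1,j}+8\wp_{1,1}\wp_{1,1,j}$, with the convention $b_{2,2g+1}=0$ accounting for the top index $j=2g-1$. The paper itself gives no proof of this lemma (it is imported verbatim from Lemma~6.2 of \cite{B3}), so there is nothing to diverge from; your reconstruction via \eqref{keyrel}, Theorem \ref{thm3} (to justify that the representing polynomial is well defined), and the identity \eqref{pirel} already established in the proof of Lemma \ref{l51} is the natural argument and is consistent with how the cited source proceeds.
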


\begin{lem}[\rm(compare with Lemma 6.3 in \cite{B3})] \label{ld2} Set $p_{s,1} = 2 b_{1,s}$. Then,
$$
\mathcal{D}_{s} =
\frac12 \sum_{k \in \{1,3,\dots,2g-1\}}\!\bigg( \mathcal{D}_1(p_{s,k}) \frac\partial{\partial b_{1,k}} + \mathcal{D}_1(\mathcal{D}_1(p_{s,k})) \frac\partial{\partial b_{2,k}}
+ \mathcal{D}_1(\mathcal{D}_1(\mathcal{D}_1(p_{s,k}))) \frac\partial{\partial b_{3,k}} \bigg)
$$
for $s \in \{3, 5,\dots, 2g - 1\}$.
\end{lem}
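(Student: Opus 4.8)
The plan is to exploit the defining relation \eqref{keyrel} together with the single fact, already contained in Lemma \ref{ld1}, that under the pullback $\varphi^*$ the field $\mathcal{D}_1$ is just the differentiation $\partial_1$. Concretely, I would first establish the intertwining identity $\varphi^* \circ \mathcal{D}_k = \partial_k \circ \varphi^*$ as maps $\mathbb{C}[b_{i,j}] \to \mathcal{F}_g$ for every $k \in \{1,3,\dots,2g-1\}$. By \eqref{keyrel} the two sides agree on the generators $b_{i,j}$; since $\varphi^*$ is a ring homomorphism and both $\mathcal{D}_k$ and $\partial_k$ are derivations, the difference $\varphi^*\mathcal{D}_k - \partial_k\varphi^*$ is a $\varphi^*$-twisted derivation $\mathbb{C}[b]\to\mathcal{F}_g$ vanishing on generators, hence vanishing identically. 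Iterating the case $k=1$ then gives $\varphi^*\bigl(\mathcal{D}_1^m(f)\bigr) = \partial_1^m(\varphi^* f)$ for every polynomial $f$ and every $m$.

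Next I would record the pullbacks that feed into the formula. By Corollary \ref{cor63}, together with the convention $p_{s,1} = 2 b_{1,s}$ fixed in the statement, one has $\varphi^* p_{s,k} = 2\wp_{s,k}$ uniformly for all $k \in \{1,3,\dots,2g-1\}$ (the case $k=1$ reducing to $\varphi^* b_{1,s} = \wp_{1,s} = \wp_{s,1}$). Since by Theorem \ref{t2} each $p_{s,k}$ is a polynomial in the $b_{i,j}$, the coefficients $\tfrac12\mathcal{D}_1^m(p_{s,k})$ for $m=1,2,3$ appearing in the proposed formula are honest polynomials in the $b_{i,j}$, so the right-hand side defines a genuine polynomial vector field on $\mathbb{C}^{3g}$.

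The heart of the argument is to verify that this proposed field satisfies the defining relation \eqref{keyrel} for $\partial_s$. Because the coordinates $b_{i,j}$ uniformise $\mathcal{S}$ (Theorem \ref{thm3}) and $\varphi^*$ is injective on polynomials, matching \eqref{keyrel} on the generators uniquely pins down $\mathcal{D}_s$. Applying $\varphi^*$ to the proposed values $\mathcal{D}_s(b_{i,k}) = \tfrac12\mathcal{D}_1^{\,i}(p_{s,k})$ for $i=1,2,3$, and using the intertwining identity and $\varphi^* p_{s,k} = 2\wp_{s,k}$, I obtain $\varphi^*\mathcal{D}_s(b_{i,k}) = \partial_1^{\,i}\wp_{s,k}$. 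On the other side, $\partial_s(\varphi^* b_{i,k})$ equals $\partial_s\wp_{1,k}$, $\partial_s\wp_{1,1,k}$, $\partial_s\wp_{1,1,1,k}$ for $i=1,2,3$; since the $\partial_j$ all commute as differentiations of $\ln\sigma$, these are exactly $\partial_1\wp_{s,k}$, $\partial_1^2\wp_{s,k}$, $\partial_1^3\wp_{s,k}$. The two sides coincide, so \eqref{keyrel} holds and the formula is proved.

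The step I expect to require the most care is not any individual computation but the justification that $\mathcal{D}_s$ is \emph{determined} by \eqref{keyrel}: namely the injectivity of $\varphi^*$ on polynomials (density of the image of $\widetilde\varphi$ in $\mathcal{S}\simeq\mathbb{C}^{3g}$) and the extension of the intertwining identity from generators to all polynomials via the twisted-derivation argument. Once these are in place, the remainder is the short bookkeeping with the commutativity of the $\partial_j$ and the identity $\varphi^* p_{s,k} = 2\wp_{s,k}$ indicated above.
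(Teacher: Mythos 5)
Your argument is correct. Note, however, that the paper itself offers no proof of this statement: the lemma is imported by reference (``compare with Lemma 6.3 in \cite{B3}''), so there is nothing internal to compare against. Your derivation is the natural self-contained one: the proposed field is pinned down by checking the defining relation \eqref{keyrel} on the generators $b_{i,k}$, and that check reduces, via the intertwining identity $\varphi^*\circ\mathcal{D}_1=\partial_1\circ\varphi^*$ iterated $i$ times and the pullback identity $\varphi^*p_{s,k}=2\wp_{s,k}$ (Corollary \ref{cor63} together with the convention $p_{s,1}=2b_{1,s}$), to the commutativity of the partial derivatives of $\ln\sigma$, i.e.\ $\partial_s\wp_{\underbrace{1,\dots,1}_{i},k}=\partial_1^{\,i}\wp_{s,k}$. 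The one point you rightly single out --- that \eqref{keyrel} \emph{determines} $\mathcal{D}_s$, which requires injectivity of $\varphi^*$ on polynomials, i.e.\ Zariski density of the image of $\varphi$ in $\mathcal{S}\simeq\mathbb{C}^{3g}$ --- is not something you need to prove afresh: it is exactly the hypothesis under which the paper (following \cite{B3}) declares the fields $\mathcal{D}_k$ to be well defined by \eqref{keyrel} in the first place, so you may simply invoke it. With that understood, your proof is complete and consistent with the framework of \S\ref{s3} and \S\ref{s8}.
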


We denote by $\mathcal{P}$ the ring of polynomials in $\lambda \in \mathbb{C}^{2g}$. We call a vector field $\mathcal{D}$ in $\mathbb{C}^{3g}$ \emph{projectable} for a polynomial mapping~$\rho_g \colon \mathbb{C}^{3g} \to \mathbb{ C}^{2g}$ if there exists a vector field $F$ in $\mathbb{C}^{2g}$ such that $\mathcal{D}(\rho_g^* f) = \rho_g^* F (f)$, for any $f \in \mathcal{P}$. The vector field $F$ is called the \emph{projection} of the vector field $\mathcal{D}$. Therefore, for a projectable vector field $\mathcal{D}$, we have $\mathcal{D}(\rho_g^* \mathcal{P}) \subset \rho_g^* \mathcal{P}$.

\begin{cor} \label{cords} 
For $s \in \{1,3,\dots,2g-1\}$ the polynomial vector fields $\mathcal{D}_s$ are projectable for the polynomial mapping $\rho_g\colon \mathbb{C}^{3g} \to \mathbb{C}^{2g}$. Their projections are zero.
\end{cor}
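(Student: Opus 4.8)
The projection of $\mathcal{D}_s$ being zero means, by the definition given just above, that $\mathcal{D}_s(\rho_g^* f) = 0$ for every $f \in \mathcal{P}$. Since $\mathcal{P}$ is generated as a $\mathbb{C}$-algebra by the coordinate functions $\lambda_t$, $t \in \{4, 6, \dots, 4g+2\}$, and $\mathcal{D}_s$ is a derivation of the polynomial ring in the coordinates $b$, my plan is to reduce everything to the generators: it suffices to prove that
\[
\mathcal{D}_s(\rho_g^* \lambda_t) = 0 \qquad \text{in the ring } \mathbb{C}[b]
\]
for each such $t$ and every $s \in \{1, 3, \dots, 2g-1\}$. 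Note that establishing this simultaneously yields projectability (one may take the projection to be the zero field $F=0$) and the vanishing of the projection. Geometrically the statement says that $\mathcal{D}_s$ is tangent to the fibres of $\rho_g$, which mirrors the fact that the vector fields $\partial_s$ differentiate along the fibres (Jacobians) of $\pi$.

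The next step is to transport this identity back to $\mathcal{U}_g$ along $\varphi^*$. First I would promote the defining relation \eqref{keyrel} from the generators $b_{i,j}$ to arbitrary polynomials: because $\mathcal{D}_s$ is a derivation and $\partial_s(\varphi^* b_{i,j}) = \varphi^*\mathcal{D}_s(b_{i,j})$, the chain rule gives
\[
\partial_s(\varphi^* P) = \varphi^*(\mathcal{D}_s P) \qquad \text{for every } P \in \mathbb{C}[b].
\]
Applying this with $P = \rho_g^* \lambda_t$ and using the commutativity of diagram \eqref{d}, namely $\rho_g \circ \varphi = \iota \circ \pi$ with $\iota\colon \Lambda_g \hookrightarrow \mathbb{C}^{2g}$, I obtain $\varphi^* \rho_g^* \lambda_t = \pi^*(\iota^* \lambda_t)$. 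The function $\iota^*\lambda_t$ is a coordinate on the base $\Lambda_g$, so $\pi^*(\iota^*\lambda_t)$ is constant along the fibres of $\pi$, that is, independent of $t_1, t_3, \dots, t_{2g-1}$. Hence $\partial_s(\varphi^* \rho_g^* \lambda_t) = 0$, and therefore $\varphi^*(\mathcal{D}_s \rho_g^* \lambda_t) = 0$.

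It then remains to descend this vanishing from $\mathcal{U}_g$ back to $\mathbb{C}^{3g}$. The quantity $\mathcal{D}_s \rho_g^* \lambda_t$ is a polynomial in the coordinates $b$, and the rational map $\varphi\colon \mathcal{U}_g \dashrightarrow \mathbb{C}^{3g}$ is dominant; equivalently, the pullback $\varphi^*\colon \mathbb{C}[b] \to \mathcal{F}_g$ is injective on polynomials. This injectivity is precisely the property that makes \eqref{keyrel} well defined as a \emph{definition} of the polynomial vector fields $\mathcal{D}_s$ in the construction of \S\ref{s3}, so it is available to us. Consequently $\varphi^*(\mathcal{D}_s \rho_g^* \lambda_t) = 0$ forces $\mathcal{D}_s \rho_g^* \lambda_t = 0$ as an element of $\mathbb{C}[b]$, which completes the argument.

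I expect the only genuinely delicate point to be the descent in the last paragraph: one must be certain that $\varphi$ is dominant, so that $\varphi^*$ is injective on polynomials. As a concrete cross-check that avoids relying on density in disguise, I would verify the claim against the explicit generating-function identity \eqref{mx2} of Theorem \ref{t1}. Applying $\mathcal{D}_s$ to both sides of \eqref{mx2} and substituting the formulas for $\mathcal{D}_s$ on the $b_{i,j}$ from Lemmas \ref{ld1} and \ref{ld2} reduces the statement to $\mathcal{D}_s\,\mathbf{m}(\xi) = 0$, which is exactly the coefficientwise form of $\mathcal{D}_s \rho_g^* \lambda_t = 0$; on the image of $\varphi$ the $b_{i,j}$ are the Klein functions and the $\lambda_t$ are genuine curve parameters, constant in $t$ by Corollary \ref{cor53}, so the identity holds there, and injectivity of $\varphi^*$ upgrades it to a polynomial identity on all of $\mathbb{C}^{3g}$.
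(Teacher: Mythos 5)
Your argument is correct and is precisely the one the paper intends: the corollary is stated without a written proof, as an immediate consequence of the defining relation \eqref{keyrel}, the commutativity of diagram \eqref{d} (so that $\varphi^*\rho_g^*\lambda_t=\pi^*\lambda_t$ is constant along the fibres and is killed by $\partial_s$), and the injectivity of $\varphi^*$ on polynomials, which is the same fact that makes \eqref{keyrel} a valid definition of $\mathcal{D}_s$ in the first place. Your only flagged worry, the dominance of $\varphi$, is therefore already built into the paper's framework and needs no separate verification.
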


\begin{cor} \label{cordcom}
For $s \in \{1,3,\dots,2g-1\}$ the polynomial vector fields $\mathcal{D}_s$ commute.
\end{cor}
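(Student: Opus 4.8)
The plan is to lift the commutation relation from the flat partial derivatives $\partial_i = \partial/\partial t_i$ on $\mathcal{U}_g$, which visibly commute, down to the vector fields $\mathcal{D}_s$ through the dominant rational map $\varphi\colon \mathcal{U}_g \dashrightarrow \mathbb{C}^{3g}$. The only property of the $\mathcal{D}_s$ I shall use is their defining relation \eqref{keyrel}: $\partial_s(\varphi^* b_{i,j}) = \varphi^*\mathcal{D}_s(b_{i,j})$ for every coordinate function $b_{i,j}$ on $\mathbb{C}^{3g}$. I would first extend this relation from the generators $b_{i,j}$ to the whole polynomial ring $\mathbb{C}[b_{i,j}]$: since $\varphi^*$ is a ring homomorphism while $\partial_s$ and $\mathcal{D}_s$ are derivations, both sides of $\partial_s(\varphi^* P) = \varphi^*\mathcal{D}_s(P)$ are additive and obey the Leibniz rule, so agreement on generators forces agreement for every polynomial $P$.

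Next, for any pair $i, j \in \{1,3,\dots,2g-1\}$ and any coordinate $b_{m,n}$ I would compute the mixed second derivative in two ways by applying the extended relation twice:
\[
\partial_i\partial_j(\varphi^* b_{m,n}) = \partial_i\big(\varphi^*\mathcal{D}_j(b_{m,n})\big) = \varphi^*\big(\mathcal{D}_i\mathcal{D}_j(b_{m,n})\big),
\]
and symmetrically with $i$ and $j$ interchanged. Since the coordinate derivatives $\partial_i,\partial_j$ in the $t$-variables commute on $\mathcal{U}_g$, subtracting the two expressions yields
\[
\varphi^*\big([\mathcal{D}_i,\mathcal{D}_j](b_{m,n})\big) = 0.
\]

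The final and decisive step is to cancel the pullback. The map $\varphi$ is dominant: by Theorem \ref{thm3} the functions $\varphi^* b_{i,j}$ uniformise $\mathcal{S}\simeq\mathbb{C}^{3g}$, so these $3g$ hyperelliptic functions are algebraically independent over $\mathbb{C}$ and $\varphi^*$ is injective on $\mathbb{C}[b_{i,j}]$. Hence $[\mathcal{D}_i,\mathcal{D}_j](b_{m,n}) = 0$ for every coordinate $b_{m,n}$. A polynomial vector field on $\mathbb{C}^{3g}$ is determined by its action on the coordinate functions, so $[\mathcal{D}_i,\mathcal{D}_j] = 0$, as claimed.

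I expect the one genuinely substantive point to be the injectivity of $\varphi^*$, equivalently the dominance of $\varphi$; everything else is the formal transport of a commutator through a ring homomorphism. This injectivity is precisely what the uniformisation statement of Theorem \ref{thm3} supplies, so beyond invoking it no additional work should be required. A direct alternative — substituting the explicit expressions of Lemmas \ref{ld1} and \ref{ld2} and expanding $[\mathcal{D}_i,\mathcal{D}_j]$ by hand — is in principle available but would be computationally heavy and obscure the conceptual reason for commutativity, so I would avoid it.
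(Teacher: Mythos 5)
Your proposal is correct and is exactly the argument the paper relies on: the corollary is deduced from the defining relation \eqref{keyrel}, the commutativity of the coordinate derivatives $\partial_i$ on $\mathcal{U}_g$, and the faithfulness of $\varphi^*$ on the polynomial ring (equivalently, the algebraic independence of the $3g$ functions $\varphi^*b_{i,j}$, which underlies Theorem \ref{thm3} and is already needed for the $\mathcal{D}_s$ to be well defined by \eqref{keyrel}). No discrepancy with the paper's (implicit) proof.
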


\begin{thm}
For the mapping $\psi_v$ from Corollary \ref{cor77} we have
\begin{equation} \label{thisrel}
\partial_s (\psi_v^* b_{i,j}) = \psi_v^* \mathcal{D}_s (b_{i,j})
\end{equation}
for all $s \in \{1,3,\dots, 2g-1\}$, $i \in \{1,2,3\}$ and $j \in \{1,3,\dots,2g-1\}$.
\end{thm}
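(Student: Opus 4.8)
The plan is to verify \eqref{thisrel} directly on the generators $b_{i,j}$, first for $s=1$ and then, by an inductive use of the $s=1$ case together with the formula for $\mathcal{D}_s$ from Lemma \ref{ld2}, for every $s\in\{1,3,\dots,2g-1\}$. Throughout I use the definitions $\psi_v^*b_{1,j}=\frac12 v_{j+1}$, $\psi_v^*b_{2,j}=\frac12 v_{j+1}'$, $\psi_v^*b_{3,j}=\frac12 v_{j+1}''$ from Corollary \ref{cor77}, and the fact that both $\partial_s$ and $\mathcal{D}_s$ are derivations while $\psi_v^*$ is a ring homomorphism. Consequently, once \eqref{thisrel} is checked on the coordinate functions it automatically holds on the whole polynomial ring in the $b_{i,j}$, giving the operator identity $\partial_s\circ\psi_v^*=\psi_v^*\circ\mathcal{D}_s$ on all polynomials; this is the mechanism that lets the $s=1$ case bootstrap the general one.

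First I would treat $s=1$. Applying $\mathcal{D}_1$ of Lemma \ref{ld1} to $b_{1,j}$ and $b_{2,j}$ yields $b_{2,j}$ and $b_{3,j}$, whose pullbacks $\frac12 v_{j+1}'$ and $\frac12 v_{j+1}''$ equal $\partial_1\psi_v^*b_{1,j}$ and $\partial_1\psi_v^*b_{2,j}$, so these cases are immediate. The only substantial case is $b_{3,j}$: here $\psi_v^*\mathcal{D}_1(b_{3,j})=2v_2 v_{j+1}'+v_2'v_{j+1}+2v_{j+3}'$, whereas $\partial_1\psi_v^*b_{3,j}=\frac12 v_{j+1}'''$, and the required equality $v_{j+1}'''=4v_2 v_{j+1}'+2v_2'v_{j+1}+4v_{j+3}'$ is exactly Lemma \ref{vppp} with $2k=j+1$ (the boundary value $b_{2,2g+1}=0$ matching $v_{2g+2}=0$). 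This settles $s=1$ and hence the operator identity $\partial_1\psi_v^*=\psi_v^*\mathcal{D}_1$.

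Next, for $s\in\{3,\dots,2g-1\}$ I would use Lemma \ref{ld2}, which gives $\mathcal{D}_s(b_{i,k})=\frac12\mathcal{D}_1^{\,i}(p_{s,k})$ for $i=1,2,3$. A key preliminary step is the identification $\psi_v^*p_{s,k}=p_{s,k}$, where on the right $p_{s,k}$ is the Buchstaber--Shorina function of Theorem \ref{t62}: comparing the generating series $\mathbf{p}(\xi,\eta)$ of Theorem \ref{t2} with its counterpart in \S\ref{OBS}, the series $\mathbf{b}_i(\xi)$ pull back to the $\mathbf{b}_i(\xi)$ of \eqref{bx}, the substitution $4b_{1,1}\mapsto 2v_2$ in $\mathbf{q}(\xi,\eta)$ is accounted for, and, by Corollary \ref{cor77} (i.e. $\lambda_s\mapsto\mu_s$ under $\rho_g\circ\psi_v$), the coefficients $\lambda_{4i+2},\lambda_{4i+4}$ pull back to $\mu_{4i+2},\mu_{4i+4}$, so the two series agree term by term; for $k=1$ the convention $p_{s,1}=2b_{1,s}$ pulls back to $v_{s+1}$, consistent with Lemma 8.2 of \cite{BS}. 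Applying the already-established $\partial_1\psi_v^*=\psi_v^*\mathcal{D}_1$ repeatedly then gives $\psi_v^*\mathcal{D}_1^{\,i}(p_{s,k})=\partial_1^{\,i}(\psi_v^*p_{s,k})=\partial_1^{\,i}p_{s,k}$.

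Finally I would convert $\partial_1$ into $\partial_s$ by Corollary \ref{cor4l}, which reads $\partial_1 p_{s,k}=\partial_s v_{k+1}$. Using commutativity of $\partial_1$ and $\partial_s$, this gives $\partial_1^{\,i}p_{s,k}=\partial_1^{\,i-1}\partial_s v_{k+1}=\partial_s\big(\partial_1^{\,i-1}v_{k+1}\big)$. Dividing by $2$ and recognising $\frac12\partial_1^{\,i-1}v_{k+1}=\psi_v^*b_{i,k}$ yields $\psi_v^*\mathcal{D}_s(b_{i,k})=\frac12\psi_v^*\mathcal{D}_1^{\,i}(p_{s,k})=\partial_s\psi_v^*b_{i,k}$, which is \eqref{thisrel}. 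I expect the main obstacle to be the bookkeeping in the identification $\psi_v^*p_{s,k}=p_{s,k}$: one must match the two definitions of $\mathbf{p}(\xi,\eta)$ carefully and invoke Corollary \ref{cor77} to replace the $\lambda$'s by the constants $\mu$'s. Everything else is a short derivation-theoretic argument resting on the $s=1$ case together with Corollaries \ref{cor4l} and \ref{cor77}.
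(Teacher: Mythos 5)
Your proof is correct and follows essentially the same route as the paper, which disposes of the case $s=1$ by Lemma \ref{vppp} and of $s>1$ by Corollary \ref{cor4l} together with the commutativity statement (the paper cites Corollary \ref{cordcom}, i.e.\ commutes the $\mathcal{D}$'s upstairs, while you commute the $\partial$'s downstairs --- an immaterial difference). Your explicit verification of the pullback identification $\psi_v^* p_{s,k}=p_{s,k}$ via Corollary \ref{cor77} and Theorem \ref{t2} is a detail the paper leaves implicit, and it is handled correctly.
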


\begin{proof}
For $s = 1$ the relation \eqref{thisrel} follows from Lemma \ref{vppp}.

For $s > 1$ the relation \eqref{thisrel} follows from Corollaries \ref{cor4l} and \ref{cordcom}.\qed
\end{proof}

Set
\begin{equation}
\mathcal{L} = \mathcal{D}_1^2 - 2 b_{1,1}. \label{eL}
\end{equation}

Operators
\begin{align}
\mathcal{B}_{2k+1}&= \mathcal{D}_{2k-1} \mathcal{D}_1^2 - (b_{1,1} \mathcal{D}_{2k-1} + \mathcal{D}_{2k-1} b_{1,1})\notag\\
&\qquad- \frac12 (b_{1,2k-1} \mathcal{D}_1 + \mathcal{D}_1 b_{1,2k-1}) - \mathcal{D}_{2k+1}, \label{eW}
\end{align}
where $k \in \{1,\dots, g\}$ and $\mathcal{D}_{2g+1} = 0$, are called the \emph{Buchstaber--Schorina polynomial differential operators}.

\begin{thm} \label{t95}
The polynomial differential operators $\mathcal{L}$ and $\mathcal{B}_{2k+1}$ in $\mathbb{C}^{3g}$ given by formulas~\eqref{eL} and \eqref{eW}, where $k \in \{1,\dots, g\}$ and $\mathcal{D}_{2g+1} = 0$, commute.
\end{thm}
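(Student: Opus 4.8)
The plan is to deduce the commutativity on $\mathbb{C}^{3g}$ from the already-established commutativity of the operators $L$ and $B_{2k+1}$ for hyperelliptic functions (Lemma~\ref{l51}) by transporting it through the intertwining relation \eqref{keyrel}. Recall that \eqref{keyrel} reads $\partial_k(\varphi^* b_{i,j}) = \varphi^*(\mathcal{D}_k b_{i,j})$ on the generators $b_{i,j}$. Since $\varphi^*$ is a ring homomorphism and both $\partial_k$ and $\mathcal{D}_k$ are derivations (and the $\mathcal{D}_k$ commute by Corollary~\ref{cordcom}, consistently with the commuting $\partial_k$), this relation propagates by the Leibniz rule to all polynomials, giving $\partial_k \circ \varphi^* = \varphi^* \circ \mathcal{D}_k$ on $\mathbb{C}[b]$. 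Thus $\varphi^*$ intertwines the algebra of polynomial differential operators generated by the $\mathcal{D}_k$ and the multiplications by $b_{i,j}$ with the corresponding algebra generated by the $\partial_k$ and the multiplications by $\varphi^* b_{i,j}$.

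First I would check that under this intertwining $\mathcal{L}$ goes to $L$ and $\mathcal{B}_{2k+1}$ goes to $B_{2k+1}$; this is a matter of matching coefficients. Since $\varphi^* b_{1,1} = \wp_{1,1}$ and $v_2 = 2\wp_{1,1}$, one gets $\varphi^*(\mathcal{L}P) = \partial_1^2(\varphi^* P) - 2\wp_{1,1}\,\varphi^* P = L(\varphi^* P)$, so $\varphi^*\circ\mathcal{L} = L\circ\varphi^*$. Likewise, using $\varphi^* b_{1,1} = \tfrac12 v_2$ and $\varphi^* b_{1,2k-1} = \wp_{1,2k-1} = \tfrac12 v_{2k}$, the leading term of \eqref{eW} maps to $\partial_{2k-1}\partial_1^2$ while the remaining three terms map to $-\tfrac12(v_2\partial_{2k-1} + \partial_{2k-1}v_2)$, $-\tfrac14(v_{2k}\partial_1 + \partial_1 v_{2k})$ and $-\partial_{2k+1}$, which is exactly $B_{2k+1}$ of \eqref{U}; hence $\varphi^*\circ\mathcal{B}_{2k+1} = B_{2k+1}\circ\varphi^*$.

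Combining these intertwinings gives $\varphi^*\circ[\mathcal{L},\mathcal{B}_{2k+1}] = [L,B_{2k+1}]\circ\varphi^*$, and the right-hand side vanishes by Lemma~\ref{l51}; the same argument, using $[B_{2i+1},B_{2j+1}] = 0$ from that lemma, shows the whole family $\mathcal{L},\mathcal{B}_3,\dots,\mathcal{B}_{2g+1}$ commutes pairwise. Therefore, for every polynomial $P$, the polynomial $[\mathcal{L},\mathcal{B}_{2k+1}]P$ lies in the kernel of $\varphi^*$.

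The last step, which I expect to be the main obstacle, is to pass from ``vanishing after pullback'' to ``vanishing as a polynomial differential operator.'' For this I would use that $\varphi\colon\mathcal{U}_g\dashrightarrow\mathbb{C}^{3g}$ is dominant, so that $\varphi^*$ is injective on $\mathbb{C}[b]$: by Theorem~\ref{thm3} the coordinates $b_{i,j}$ uniformise $\mathcal{S}\simeq\mathbb{C}^{3g}$ and $\widetilde\varphi$ maps onto $\mathcal{S}$, while $\dim\mathcal{U}_g = 3g = \dim\mathbb{C}^{3g}$; hence the image of $\varphi$ is Zariski dense and any polynomial vanishing on it is zero. Applying this to $Q = [\mathcal{L},\mathcal{B}_{2k+1}]P$ yields $[\mathcal{L},\mathcal{B}_{2k+1}]P = 0$ for all $P$, i.e.\ $[\mathcal{L},\mathcal{B}_{2k+1}] = 0$. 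A direct but more laborious alternative would be to expand the commutator using the explicit vector fields of Lemmas~\ref{ld1} and~\ref{ld2} together with the polynomial form of the relation in Lemma~\ref{vppp}; the transfer argument avoids this computation entirely, concentrating the difficulty in the single density statement.
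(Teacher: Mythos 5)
Your proposal is correct and follows essentially the same route as the paper: the paper's proof of Theorem~\ref{t95} is precisely the intertwining argument $L(\varphi^* b_{i,j}) = \varphi^*\mathcal{L}(b_{i,j})$, $B_{2k+1}(\varphi^* b_{i,j}) = \varphi^*\mathcal{B}_{2k+1}(b_{i,j})$ combined with Lemma~\ref{l51}. The only difference is that you spell out the final density/injectivity step for $\varphi^*$, which the paper leaves implicit.
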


\begin{proof}
Under the conditions of Lemma \ref{l51}, comparing the expressions~\eqref{L} and~\eqref{U} with \eqref{eL} and~\eqref{eW}, we obtain
$$
L (\varphi^* b_{i,j}) = \varphi^* \mathcal{L} (b_{i,j}), \quad B_{2k+1} (\varphi^* b_{i,j}) = \varphi^* \mathcal{B}_{2k+1} (b_{i,j})
$$
for all $i \in \{1,2,3\}$ and $j \in \{1,3,\dots,2g-1\}$, and, hence, the proof of the Theorem follows from Lemma \ref{l51}.\qed
\end{proof}

The proofs of the following lemmas follow from explicit expressions for the polynomial differential operators $\mathcal{D}_s$, $\mathcal{L}$, and $\mathcal{B}_{2k+1}$.

\begin{lem}
Under the conditions of Theorem \ref{scale}, for $s \in \{1,3,\dots,$ $2j-1\}$, the vector fields $\mathcal{D}_s$ defined for genus $g$, subject to the constraint on~$\mathbb{C}^{3j}$, coincide with the corresponding vector fields defined for genus $j$.
\end{lem}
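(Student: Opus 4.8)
The plan is to prove the statement by directly comparing the explicit formulas of Lemmas~\ref{ld1} and~\ref{ld2} in genus $g$ and genus $j$, using the embedding $\mathbb{C}^{3j} \subset \mathbb{C}^{3g}$ of Theorem~\ref{scale}, which sets $b_{i,m} = 0$ for $i \in \{1,2,3\}$ and $m \in \{2j+1, 2j+3, \dots, 2g-1\}$. Recall that a vector field on $\mathbb{C}^{3g}$ restricts to a vector field on this coordinate subspace precisely when it is tangent to it, i.e.\ when its components $\mathcal{D}(b_{i,m})$ along the high indices $m \geqslant 2j+1$ vanish on $\mathbb{C}^{3j}$. So in each case I would first check tangency and then compare the low-index components with the genus-$j$ operator.

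I would begin with $s=1$. From Lemma~\ref{ld1} the components of $\mathcal{D}_1$ are $\mathcal{D}_1(b_{1,m}) = b_{2,m}$, $\mathcal{D}_1(b_{2,m}) = b_{3,m}$, and $\mathcal{D}_1(b_{3,m}) = 4(2 b_{1,1} b_{2,m} + b_{2,1} b_{1,m} + b_{2,m+2})$. For $m \geqslant 2j+1$ each of these is a sum of monomials, every one of which carries a factor $b_{i,m'}$ with $m' \geqslant 2j+1$, hence vanishes on $\mathbb{C}^{3j}$; this gives tangency. For $m \leqslant 2j-1$ the components agree verbatim with the genus-$j$ formula, the only delicate index being $m=2j-1$, where the term $b_{2,m+2} = b_{2,2j+1}$ of the genus-$g$ operator is killed by the embedding and therefore matches the boundary convention $b_{2,2j+1} = 0$ of the genus-$j$ operator. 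Hence the genus-$g$ field $\mathcal{D}_1$ restricts on $\mathbb{C}^{3j}$ to the genus-$j$ field $\mathcal{D}_1$.

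For $s \in \{3,5,\dots,2j-1\}$ I would use Lemma~\ref{ld2}, which expresses $\mathcal{D}_s$ through $\mathcal{D}_1$ and the polynomials $p_{s,k}$. The essential input is the vanishing property of the $p_{k,l}$ recorded in the proof of Theorem~\ref{scale}: if $b_{i,m}=0$ for all $m \in \{l, l+2, \dots, 2g-1\}$, then $p_{k,l}=0$. Since the embedding kills exactly the coordinates with index $\geqslant 2j+1$, this yields $p_{s,k} = 0$ on $\mathbb{C}^{3j}$ whenever $k \geqslant 2j+1$; and the same recursive formulas \eqref{p3}, \eqref{p5}, \eqref{pj} show that for $s,k \leqslant 2j-1$ the genus-$g$ polynomial $p_{s,k}$ restricts to the genus-$j$ polynomial $p_{s,k}$, the out-of-range terms again being absorbed by the boundary conventions. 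The point that makes the high-index terms drop out is the following: because $\mathcal{D}_1$ is tangent to $\mathbb{C}^{3j}$ and $p_{s,k}$ vanishes identically on $\mathbb{C}^{3j}$ for $k \geqslant 2j+1$, every iterated derivative $\mathcal{D}_1(p_{s,k})$, $\mathcal{D}_1(\mathcal{D}_1(p_{s,k}))$, $\mathcal{D}_1(\mathcal{D}_1(\mathcal{D}_1(p_{s,k})))$ also vanishes on $\mathbb{C}^{3j}$ --- a polynomial vanishing on a coordinate subspace lies in the ideal generated by the vanishing coordinates, and a field tangent to the subspace sends this ideal into itself. Thus all summands with $k \geqslant 2j+1$ disappear from the genus-$g$ field $\mathcal{D}_s$ on the subspace, which simultaneously proves its tangency to $\mathbb{C}^{3j}$ and removes the superfluous terms.

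Finally, for $k \leqslant 2j-1$ I would combine the two compatibilities just established --- namely that $\mathcal{D}_1$ restricts correctly and that $p_{s,k}$ restricts to the genus-$j$ polynomial $p_{s,k}$ --- so that each component $\frac12 \mathcal{D}_1(p_{s,k})$, $\frac12 \mathcal{D}_1(\mathcal{D}_1(p_{s,k}))$, $\frac12 \mathcal{D}_1(\mathcal{D}_1(\mathcal{D}_1(p_{s,k})))$ of the genus-$g$ field $\mathcal{D}_s$ restricts to the corresponding component of the genus-$j$ field $\mathcal{D}_s$. This gives the claimed coincidence. The main obstacle I anticipate is purely a matter of bookkeeping rather than of any conceptual difficulty: tracking how the two boundary conventions (indices $>2g-1$ versus $>2j-1$ being set to zero) interact at the cut index $m=2j-1$, and verifying the compatibility $p_{s,k}$ through the recursion \eqref{p3}, \eqref{p5}, \eqref{pj}.
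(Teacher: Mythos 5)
Your proof is correct and follows the same route the paper intends: the paper offers no detailed argument, stating only that the lemma ``follows from explicit expressions for the polynomial differential operators,'' and your proposal is precisely a careful execution of that check via Lemmas~\ref{ld1} and~\ref{ld2}, the vanishing property of the $p_{k,l}$ from the proof of Theorem~\ref{scale}, and the tangency/ideal argument for the iterated $\mathcal{D}_1$-derivatives.
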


\begin{lem}
Under the conditions of Theorem \ref{scale}, for $s \in \{2j+1, 2j+3,\dots,2g-1\}$,
the vector fields $\mathcal{D}_s$ defined for genus $g$ become zero when restricted to $\mathbb{C}^{3j}$.
\end{lem}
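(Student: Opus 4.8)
The plan is to read the embedded subspace $\mathbb{C}^{3j}\subset\mathbb{C}^{3g}$ from Theorem \ref{scale} as the coordinate subspace cut out by $b_{i,m}=0$ for all $i\in\{1,2,3\}$ and all $m\in\{2j+1,2j+3,\dots,2g-1\}$, and to let $I$ denote the ideal generated by these coordinates. With this description, the assertion that $\mathcal{D}_s$ ``becomes zero when restricted to $\mathbb{C}^{3j}$'' means precisely that every coefficient function of $\mathcal{D}_s$ lies in $I$, so that the restricted vector field is the zero vector field. By Lemma \ref{ld2}, for $s\in\{2j+1,\dots,2g-1\}$ (note $2j+1\geqslant 3$ since $j\geqslant 1$, so these $s$ are covered) the coefficients of $\mathcal{D}_s$ are, up to the factor $1/2$, the functions $\mathcal{D}_1(p_{s,k})$, $\mathcal{D}_1(\mathcal{D}_1(p_{s,k}))$ and $\mathcal{D}_1(\mathcal{D}_1(\mathcal{D}_1(p_{s,k})))$, with $k\in\{1,3,\dots,2g-1\}$ and $p_{s,1}=2b_{1,s}$. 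Thus the statement reduces to showing $\mathcal{D}_1^m(p_{s,k})\in I$ for $m=1,2,3$.

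First I would show that $p_{s,k}\in I$ for every $s\geqslant 2j+1$ and every $k$. Using $p_{s,k}=p_{k,s}$ together with the support property of the $p$-coordinates recorded in the proof of Theorem \ref{scale} --- namely that $p_{k,l}$ with $k\leqslant l$ lies in the ideal generated by $\{b_{i,m}:m\geqslant l\}$ --- the larger of the two indices $s,k$ is always $\geqslant 2j+1$, so $p_{s,k}$ lies in the ideal generated by $\{b_{i,m}:m\geqslant \max(s,k)\}\subseteq I$. For $k=1$ this is immediate from $p_{s,1}=2b_{1,s}$ with $s\geqslant 2j+1$. Hence $p_{s,k}\in I$ in all cases.

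Next I would verify that $\mathcal{D}_1$ is tangent to $\mathbb{C}^{3j}$, that is $\mathcal{D}_1(I)\subseteq I$. By Lemma \ref{ld1} one checks this on generators: for $m\geqslant 2j+1$ one has $\mathcal{D}_1(b_{1,m})=b_{2,m}\in I$ and $\mathcal{D}_1(b_{2,m})=b_{3,m}\in I$, while $\mathcal{D}_1(b_{3,m})=4(2 b_{1,1} b_{2,m}+b_{2,1} b_{1,m}+b_{2,m+2})\in I$ because each of $b_{2,m}$, $b_{1,m}$, $b_{2,m+2}$ has second index $\geqslant 2j+1$ (this tangency is already implicit in the preceding lemma, for $s=1$). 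Since $\mathcal{D}_1$ is a derivation with $\mathcal{D}_1(I)\subseteq I$ and $p_{s,k}\in I$, an immediate induction gives $\mathcal{D}_1^m(p_{s,k})\in I$ for all $m\geqslant 0$. Applying this for $m=1,2,3$ and substituting into the expression of Lemma \ref{ld2} shows every coefficient of $\mathcal{D}_s$ lies in $I$, so $\mathcal{D}_s$ vanishes identically on $\mathbb{C}^{3j}$.

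The only substantive input is the support property of the $p_{k,l}$, which is already established inside the proof of Theorem \ref{scale}; granting it, the argument is a short combination of ideal membership and the tangency $\mathcal{D}_1(I)\subseteq I$. I expect no real obstacle beyond careful bookkeeping of the index ranges --- in particular confirming that $b_{2,m+2}\in I$ when $m\geqslant 2j+1$, and handling the boundary index $2g-1$, where any coordinate with index outside the admissible range is set to zero and therefore trivially lies in $I$.
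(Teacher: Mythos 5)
Your argument is correct, and it is essentially the proof the paper intends: the paper omits the details, asserting only that the claim ``follows from explicit expressions'' for the $\mathcal{D}_s$, and your write-up supplies exactly those details --- the coefficients $\mathcal{D}_1^m(p_{s,k})$ from Lemma~\ref{ld2}, the support property of the $p_{k,l}$ recorded in the proof of Theorem~\ref{scale}, and the tangency $\mathcal{D}_1(I)\subseteq I$ read off from Lemma~\ref{ld1}. No gaps.
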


\begin{lem} \label{rem9}
Under the conditions of Theorem \ref{scale}, for $k \in \{1,\dots,j\}$, the polynomial differential operators $\mathcal{L}$ and $\mathcal{B}_{2k+1}$ defined for genus $g$, when restricted to~$\mathbb{C}^{3j}$, coincide with the corresponding polynomial differential operators defined for genus $j$.
\end{lem}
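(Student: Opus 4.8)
The plan is to reduce the statement to the two immediately preceding lemmas on the vector fields $\mathcal{D}_s$, combined with the observation that the coordinate functions appearing in \eqref{eL} and \eqref{eW} all lie among the first $3j$ coordinates. Recall that under the embedding $\mathbb{C}^{3j} \subset \mathbb{C}^{3g}$ of Theorem~\ref{scale} the coordinates $b_{i,s}$ with $s \in \{2j+1,\dots,2g-1\}$ are set to zero, while those with $s \in \{1,3,\dots,2j-1\}$ are identified with the coordinates of $\mathbb{C}^{3j}$. So "restriction" of a differential operator means its restriction to this linear subspace, which is well defined once the operator is tangent to $\mathbb{C}^{3j}$.

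First I would record which ingredients of $\mathcal{L}$ and $\mathcal{B}_{2k+1}$ are tangent to $\mathbb{C}^{3j}$ and how they restrict. By the first preceding lemma, for each $s \in \{1,3,\dots,2j-1\}$ the genus-$g$ field $\mathcal{D}_s$ is tangent to $\mathbb{C}^{3j}$ and its restriction is the genus-$j$ field $\mathcal{D}_s$; by the second preceding lemma, $\mathcal{D}_{2j+1}$ of genus $g$ restricts to zero on $\mathbb{C}^{3j}$. The coefficients $b_{1,1}$ and $b_{1,2k-1}$ in \eqref{eL} and \eqref{eW} carry indices $1,\,2k-1 \leqslant 2j-1$ for $k \leqslant j$, hence they are coordinates of $\mathbb{C}^{3j}$ and multiplication by them restricts to multiplication by the same coordinate.

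Next I would invoke the elementary fact that the restriction of a composition of differential operators tangent to $\mathbb{C}^{3j}$ equals the composition of their restrictions: for tangent fields $X,Y$ and any function $f$ one has $(XYf)|_{\mathbb{C}^{3j}} = (X|_{\mathbb{C}^{3j}})(Y|_{\mathbb{C}^{3j}})(f|_{\mathbb{C}^{3j}})$, and the same holds for products involving multiplication by a coordinate of $\mathbb{C}^{3j}$. Applying this to $\mathcal{L} = \mathcal{D}_1^2 - 2 b_{1,1}$ shows that the genus-$g$ operator $\mathcal{L}$ restricts to $\mathcal{D}_1^2 - 2 b_{1,1}$ computed with the genus-$j$ field $\mathcal{D}_1$, that is, to the genus-$j$ operator $\mathcal{L}$. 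Applying it termwise to \eqref{eW}, the summand $\mathcal{D}_{2k-1}\mathcal{D}_1^2$ restricts to its genus-$j$ expression, and the anticommutator terms with $b_{1,1}$ and $b_{1,2k-1}$ restrict to their genus-$j$ counterparts; it remains only to treat the last term $-\mathcal{D}_{2k+1}$.

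The one point requiring care — and the only genuine obstacle — is the top index $k=j$. For $k < j$ we have $2k+1 \leqslant 2j-1$, so $\mathcal{D}_{2k+1}$ restricts to the genus-$j$ field $\mathcal{D}_{2k+1}$ by the first preceding lemma, and \eqref{eW} restricts termwise to the genus-$j$ operator $\mathcal{B}_{2k+1}$. For $k=j$ the remaining term is $-\mathcal{D}_{2j+1}$, which for genus $g$ restricts to zero by the second preceding lemma; this matches exactly the genus-$j$ convention $\mathcal{D}_{2j+1}=0$ used in \eqref{eW}, so again the restriction of the genus-$g$ operator $\mathcal{B}_{2j+1}$ coincides with the genus-$j$ operator $\mathcal{B}_{2j+1}$. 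Collecting these cases establishes the claimed coincidence for both $\mathcal{L}$ and $\mathcal{B}_{2k+1}$, $k \in \{1,\dots,j\}$.
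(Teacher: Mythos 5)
Your proof is correct and follows the same route the paper intends: the paper merely asserts that this lemma ``follows from explicit expressions'' for $\mathcal{D}_s$, $\mathcal{L}$, and $\mathcal{B}_{2k+1}$, and your argument is precisely the detailed working-out of that assertion via the two preceding restriction lemmas for the fields $\mathcal{D}_s$. Your attention to the top case $k=j$, where the restriction $\mathcal{D}_{2j+1}|_{\mathbb{C}^{3j}}=0$ matches the genus-$j$ convention $\mathcal{D}_{2j+1}=0$ in \eqref{eW}, is exactly the point that makes the termwise comparison close up.
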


\begin{lem} \label{rem9b}
Under the conditions of Theorem \ref{scale}, for $k \in \{j+1, j+2,\dots,g\}$, the
polynomial differential operators $\mathcal{B}_{2k+1}$ defined for genus~$g$ become zero when restricted to~$\mathbb{C}^{3j}$.
\end{lem}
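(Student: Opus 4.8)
The plan is to read off the vanishing of each summand in the defining formula~\eqref{eW} for $\mathcal{B}_{2k+1}$ directly from the behaviour of the vector fields $\mathcal{D}_s$ and the coordinate functions $b_{i,m}$ on the submanifold $\mathbb{C}^{3j}\subset\mathbb{C}^{3g}$. Recall that this embedding sets $b_{i,m}=0$ for $i\in\{1,2,3\}$ and $m\in\{2j+1,2j+3,\dots,2g-1\}$. For $k\in\{j+1,\dots,g\}$ the index $2k-1$ lies in $\{2j+1,\dots,2g-1\}$, so by the lemma preceding Lemma~\ref{rem9} the vector field $\mathcal{D}_{2k-1}$ becomes zero upon restriction to $\mathbb{C}^{3j}$; that is, writing $\mathcal{D}_{2k-1}=\sum_\alpha c_\alpha\,\partial/\partial b_\alpha$, every coefficient $c_\alpha$ vanishes on $\mathbb{C}^{3j}$. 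The same lemma gives that $\mathcal{D}_{2k+1}$ vanishes on $\mathbb{C}^{3j}$ for $k\in\{j+1,\dots,g-1\}$, while for $k=g$ one has $\mathcal{D}_{2g+1}=0$ by definition. Finally, the coordinate $b_{1,2k-1}$ is identically zero on $\mathbb{C}^{3j}$ for such $k$.

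I would then go through the six summands of~\eqref{eW} one at a time, applying the operator to an arbitrary function $f$ on $\mathbb{C}^{3g}$ and restricting the result to $\mathbb{C}^{3j}$. The three terms $\mathcal{D}_{2k-1}\mathcal{D}_1^2$, $b_{1,1}\mathcal{D}_{2k-1}$, and $\mathcal{D}_{2k-1}b_{1,1}$ each carry an outermost factor $\mathcal{D}_{2k-1}$ whose coefficients vanish on $\mathbb{C}^{3j}$, so that $\mathcal{D}_{2k-1}(h)|_{\mathbb{C}^{3j}}=0$ for every $h$ (taking $h=\mathcal{D}_1^2 f$, $h=f$, and $h=b_{1,1}f$, respectively); hence all three restrict to zero. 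The term $b_{1,2k-1}\mathcal{D}_1$ vanishes on $\mathbb{C}^{3j}$ because its scalar factor $b_{1,2k-1}$ does, and the term $\mathcal{D}_{2k+1}$ restricts to zero by the previous paragraph.

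The one summand requiring a short computation is $\mathcal{D}_1\,b_{1,2k-1}$, which I would expand by the Leibniz rule as $\mathcal{D}_1(b_{1,2k-1}f)=(\mathcal{D}_1 b_{1,2k-1})f+b_{1,2k-1}\,\mathcal{D}_1 f$. On $\mathbb{C}^{3j}$ the second summand vanishes because $b_{1,2k-1}=0$, and for the first I would use the explicit form of $\mathcal{D}_1$ from Lemma~\ref{ld1} to compute $\mathcal{D}_1 b_{1,2k-1}=b_{2,2k-1}$; since $2k-1\geqslant 2j+1$, the coordinate $b_{2,2k-1}$ also vanishes on $\mathbb{C}^{3j}$, so this term restricts to zero as well. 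Collecting the six contributions shows $\mathcal{B}_{2k+1}(f)|_{\mathbb{C}^{3j}}=0$ for every $f$, which is the assertion. I do not anticipate any real obstacle: the statement is an immediate consequence of the two preceding lemmas on the $\mathcal{D}_s$ together with the explicit expression for $\mathcal{D}_1$, and the only point demanding care is the Leibniz expansion of the single term $\mathcal{D}_1\,b_{1,2k-1}$.
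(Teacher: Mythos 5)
Your proof is correct and follows essentially the route the paper intends: the paper merely remarks that Lemmas \ref{rem9} and \ref{rem9b} ``follow from explicit expressions for the polynomial differential operators $\mathcal{D}_s$, $\mathcal{L}$, and $\mathcal{B}_{2k+1}$,'' and your term-by-term check of \eqref{eW} --- using the vanishing of $\mathcal{D}_{2k-1}$, $\mathcal{D}_{2k+1}$ and of the coordinates $b_{i,2k-1}$ on $\mathbb{C}^{3j}$, plus the Leibniz expansion of $\mathcal{D}_1 b_{1,2k-1}$ --- is exactly that verification spelled out.
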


\section{Polynomial parametric Korteweg--de Vries hierarchy} \label{Sinf}

\begin{lem} \label{lem1}
The following diagram is commutative
\begin{equation} \label{dinf}
\xymatrix{
\mathbb{C}^{3 g} \ar@{^{(}->}[r] \ar[d]^{\rho_g} & \mathbb{C}^{3\infty} \ar[d]^{\rho}\\
\mathbb{C}^{2g} \ar@{^{(}->}[r] & \mathbb{C}^{2\infty}\\
}
\end{equation}
Here, $\mathbb{C}^{3\infty}$ and $\mathbb{C}^{2\infty}$ denote infinite-dimensional complex spaces with coordinates $b = (b_{1,2j-1}, b_{2 ,2j-1}, b_{3,2j-1})$ and $\lambda = (\lambda_{4j}, \lambda_{4j+2})$ for $j \in \mathbb{N}$, the embedding $\mathbb{C}^{3g} \subset \mathbb{C}^{3\infty}$ is defined by the equality of the $3g$ coordinates in $\mathbb{C}^{3g}$ to the first $3g$ coordinates in $ \mathbb{C}^{3\infty}$ and the remaining coordinates are assumed to be zero, the embedding $\mathbb{C}^{2g} \subset \mathbb{C}^{2\infty}$ is defined by the equality of $2g$ coordinates in $\mathbb{C}^{2g}$ to the first $2g$ coordinates in $\mathbb{C}^{2\infty}$ and the remaining coordinates are assumed to be zero, the polynomial mapping $\rho_g$ is defined in Lemma \ref{cor3}, and the polynomial mapping $\rho \colon \mathbb{C}^{3\infty} \to \mathbb{C}^{2 \infty}$ is given by \eqref{mx2}, where
\begin{equation} \label{bminf}
\mathbf{b}_i(\xi) =\sum_{j=1}^\infty b_{i,2j-1} \xi^j \quad \text{for }\,i = 1,2,3, \quad \text{and } \quad
\mathbf{m}(\xi) = \xi^{-1} + \sum_{j=1}^\infty \lambda_{2j+2} \xi^j.
\end{equation}
\end{lem}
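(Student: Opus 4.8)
The plan is to exploit that both $\rho_g$ and $\rho$ are defined by the single relation \eqref{mx2}, so that the commutativity of \eqref{dinf} reduces to a statement about how the generating functions $\mathbf{b}_i(\xi)$ and $\mathbf{m}(\xi)$ behave under the two coordinate embeddings. The essential observation is that the top embedding $\mathbb{C}^{3g} \hookrightarrow \mathbb{C}^{3\infty}$ sets $b_{i,2j-1} = 0$ for all $j > g$, so the infinite series \eqref{bminf} defining $\mathbf{b}_i(\xi)$ for $\rho$ truncate exactly to the polynomials \eqref{bm1} of degree at most $g$ used to define $\rho_g$ in Theorem~\ref{t1}.

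First I would take an arbitrary point $b \in \mathbb{C}^{3g}$ and denote by $\iota(b) \in \mathbb{C}^{3\infty}$ its image under the top embedding. By the observation above, the generating functions $\mathbf{b}_1(\xi), \mathbf{b}_2(\xi), \mathbf{b}_3(\xi)$ computed from $\iota(b)$ via \eqref{bminf} are polynomials of degree at most $g$ and coincide coefficient by coefficient with those computed from $b$ via \eqref{bm1}. Since the defining relation \eqref{mx2} is literally the same equation in both settings, the resulting Laurent series $\mathbf{m}(\xi)$, and hence each coordinate $\lambda_{2j+2}$ read off as the coefficient of $\xi^j$, agrees for $\rho(\iota(b))$ and for $\rho_g(b)$ at every index $j$ where both are defined. (The same comparison shows the $\xi^0$ coefficient of the right-hand side of \eqref{mx2} vanishes identically, so $\rho$ indeed lands in $\mathbb{C}^{2\infty}$.)

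The one point requiring a short verification is a degree count, confirming that the image $\rho(\iota(b))$ lies in the embedded copy of $\mathbb{C}^{2g}$, i.e. that $\lambda_{2j+2} = 0$ for $j > 2g$. With each $\mathbf{b}_i(\xi)$ of degree at most $g$, the right-hand side of \eqref{mx2} is a Laurent polynomial whose summands $\mathbf{b}_2(\xi)^2$, $2\mathbf{b}_3(\xi)(1 - \mathbf{b}_1(\xi))$, and $8 b_{1,1}(1 - \mathbf{b}_1(\xi))^2$ all have degree at most $2g$, while $4\xi^{-1}(1 - \mathbf{b}_1(\xi))^2$ has degree at most $2g-1$ and supplies the leading term $4\xi^{-1}$. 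Hence $4\mathbf{m}(\xi) = 4\xi^{-1} + P(\xi)$ with $\deg P \leqslant 2g$, so $\lambda_{2j+2} = 0$ for $j > 2g$, which is exactly the condition defining the bottom embedding $\mathbb{C}^{2g} \hookrightarrow \mathbb{C}^{2\infty}$.

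Combining these two steps yields $\rho \circ \iota = \iota' \circ \rho_g$ on $\mathbb{C}^{3g}$, where $\iota'$ denotes the bottom embedding, which is precisely the commutativity of \eqref{dinf}. I do not expect any genuine obstacle: once the truncation of the series under $\iota$ is noted, the argument is a direct comparison of coefficients, and the only nontrivial ingredient is the elementary degree bound guaranteeing that the image lies in the finite-dimensional subspace $\mathbb{C}^{2g}$.
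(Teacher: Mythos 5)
Your proof is correct, and it reaches the same conclusion by a slightly more self-contained route than the paper. The paper disposes of this lemma in one line by citing Theorem~\ref{scale} together with Theorem~\ref{t1}: Theorem~\ref{t1} identifies $\rho_g$ with the generating-function relation \eqref{mx2}, and Theorem~\ref{scale} (whose proof goes through the explicit recursive polynomial expressions for the $p_{k,l}$ and $\lambda_s$ in the coordinates $b_{i,j}$) supplies the stabilisation of these expressions as $g$ grows. You use Theorem~\ref{t1} in the same essential way --- without it the claim that $\rho_g$ is computed by \eqref{mx2} with \eqref{bm1} would be unjustified, since Lemma~\ref{cor3} defines $\rho_g$ as a projection restricted to $\mathcal{S}$ --- but you replace the appeal to Theorem~\ref{scale} by a direct coefficient comparison: the embedding truncates the series \eqref{bminf} to the polynomials \eqref{bm1}, the relation \eqref{mx2} is then literally the same on both sides of the diagram, and the degree bound $\deg \mathbf{b}_i(\xi)\leqslant g$ forces $\lambda_{2j+2}=0$ for $j>2g$, so the image lands in the embedded $\mathbb{C}^{2g}$. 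Your parenthetical check that the $\xi^{0}$ coefficient of the right-hand side of \eqref{mx2} vanishes (the $-8b_{1,1}$ from $4\xi^{-1}(1-\mathbf{b}_1(\xi))^2$ cancelling the $+8b_{1,1}$ from $8b_{1,1}(1-\mathbf{b}_1(\xi))^2$) is a worthwhile detail the paper leaves implicit. What the paper's route buys is reuse of a statement (Theorem~\ref{scale}) needed elsewhere; what yours buys is independence from the explicit recursive formulas \eqref{p3}--\eqref{pj}, making the infinite-dimensional limit transparent at the level of generating functions alone.
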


The proof of the lemma follows from Theorems \ref{scale} and \ref{t1}.

Note that due to the grading, each coordinate $\lambda_{2k+2}$ in \eqref{mx2} is expressed as a polynomial in a finite number of variables, namely the coordinates $b_{i,2j-1}$, such that $i +2j-1 \leqslant 2k+2$.

We denote by $\varphi_g$ the composition of the mapping $\varphi$ from diagram \eqref{d} and the embedding $\mathbb{C}^{3g} \subset \mathbb{C}^{3\infty}$ from diagram \eqref{dinf}. Lemma \ref{lem1} implies that the following diagram
\begin{equation}
\xymatrix{
\mathcal{U}_g \ar[d]^{\pi} \ar@{-->}[r]^{\varphi_g} & \mathbb{C}^{3\infty} \ar[d]^{\rho}\\
\Lambda_g \ar@{^{(}->}[r] & \mathbb{C}^{2\infty}\\
}
\end{equation}
is commutative. Note that the right side of the diagram does not depend on genus $g \in \mathbb{N}$.

\begin{cor}[\rm(from Lemma \ref{ld1} and Corollary \ref{cords})] The polynomial vector field
\begin{equation} \label{d1c}
\mathcal{D}_1 = \sum_{j \in \{1,3,\dots,2g-1,\dots\}}\bigg(b_{2,j} \frac\partial{\partial b_{1,j}} + b_{3,j}\frac\partial{\partial b_{2,j}} + 4 (2 b_{1,1} b_{2,j} + b_{2,1} b_{1,j} + b_{2,j+2}) \frac\partial{\partial b_{3,j}}\bigg)
\end{equation}
in~$\mathbb{C}^{3\infty}$ is projectable for $\rho$. Its projection is zero. Let $\varphi_g^*b_{2,2g+1}=0$. The relation
\[
\partial_1 (\varphi_g^*b_{i,j}) = \varphi_g^* \mathcal{D}_1 (b_{i,j})
\]
is satisfied for all $g \in \mathbb{N}$, $i \in \{1,2,3\}$ and $j \in \{1,3,\dots,2g-1\}$.
\end{cor}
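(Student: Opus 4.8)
The plan is to reduce every assertion to its finite-genus counterpart by exploiting the grading, which forces all objects in play to depend on only finitely many coordinates. First I would record that the formula \eqref{d1c} for $\mathcal{D}_1$ on $\mathbb{C}^{3\infty}$ is exactly the extension to infinitely many variables of the genus-$g$ vector field of Lemma \ref{ld1}: for each fixed $g$, restricting \eqref{d1c} to $\mathbb{C}^{3g}$ under the embedding of Lemma \ref{lem1} and setting $b_{2,2g+1}=0$ returns the operator of Lemma \ref{ld1}. Note also that $\mathcal{D}_1$ is homogeneous of weight $1$, since each summand $b_{2,j}\,\partial/\partial b_{1,j}$, $b_{3,j}\,\partial/\partial b_{2,j}$, and $b_{2,j+2}\,\partial/\partial b_{3,j}$ (as well as the two quadratic coefficients) raises weight by $1$.

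For projectability with zero projection I would argue as follows. A polynomial $f$ on $\mathbb{C}^{2\infty}$ depends on only finitely many $\lambda_{2k+2}$, so, $\mathcal{D}_1$ being a derivation, it suffices to prove $\mathcal{D}_1(\rho^*\lambda_{2k+2})=0$ for each $k$. By the grading remark following Lemma \ref{lem1}, $\rho^*\lambda_{2k+2}$ is a homogeneous polynomial of weight $2k+2$ involving only the coordinates $b_{i,2j-1}$ with $i+2j-1\le 2k+2$; in particular the largest odd index occurring is $2k+1$. Choosing any $g\ge k+1$, the polynomial $\rho^*\lambda_{2k+2}$ lies in the subring generated by the first $3g$ coordinates and, by commutativity of \eqref{dinf}, coincides with $\rho_g^*\lambda_{2k+2}$. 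The field \eqref{d1c} then acts on it exactly as the genus-$g$ field of Lemma \ref{ld1}: the partial derivatives $\partial/\partial b_{i,j}$ with $j>2g-1$ annihilate it, the coefficient of $\partial/\partial b_{3,j}$ for an occurring $b_{3,j}$ (forcing $j\le 2k-1$) involves only $b_{2,j+2}$ with $j+2\le 2k+1\le 2g-1$, and the boundary coefficient $b_{2,2g+1}$ multiplies $\partial/\partial b_{3,2g-1}$, which kills $\rho^*\lambda_{2k+2}$ since $b_{3,2g-1}$ does not appear. Hence $\mathcal{D}_1(\rho^*\lambda_{2k+2})$ equals the genus-$g$ expression, which vanishes by Corollary \ref{cords}. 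This yields both projectability and the vanishing of the projection.

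For the last relation I would invoke \eqref{keyrel} in genus $g$. Since $\varphi_g$ is the composition of $\varphi$ with the embedding $\mathbb{C}^{3g}\subset\mathbb{C}^{3\infty}$, one has $\varphi_g^*b_{i,j}=\varphi^*b_{i,j}$ for $j\in\{1,3,\dots,2g-1\}$, and $\partial_1$ commutes with this identification. Applying \eqref{d1c} to $b_{i,j}$ with $j\le 2g-1$ produces $b_{2,j}$, $b_{3,j}$, or $4(2b_{1,1}b_{2,j}+b_{2,1}b_{1,j}+b_{2,j+2})$; the only point where this differs from the genus-$g$ field of Lemma \ref{ld1} is at $j=2g-1$, where the term $b_{2,2g+1}$ occurs. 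The stated convention $\varphi_g^*b_{2,2g+1}=0$ removes precisely this discrepancy, so $\varphi_g^*\mathcal{D}_1(b_{i,j})=\varphi^*\mathcal{D}_1(b_{i,j})=\partial_1(\varphi^*b_{i,j})=\partial_1(\varphi_g^*b_{i,j})$, which is the desired identity.

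The mechanical parts (checking the homogeneity weights and the explicit index bookkeeping) are routine. The main obstacle is the infinite-to-finite reduction in the projectability step: one must make precise, using the grading, that for each $k$ there is a single genus $g$ for which $\mathcal{D}_1(\rho^*\lambda_{2k+2})$ literally equals the genus-$g$ expression $\mathcal{D}_1(\rho_g^*\lambda_{2k+2})$, with no stray contribution from coordinates beyond the genus-$g$ range, so that Corollary \ref{cords} applies verbatim.
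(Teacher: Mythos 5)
Your proposal is correct and follows exactly the route the paper intends: the paper states this as an immediate consequence of Lemma \ref{ld1}, Corollary \ref{cords}, relation \eqref{keyrel}, and the grading remark after Lemma \ref{lem1}, giving no further argument. Your write-up simply makes explicit the finite-to-infinite reduction (each $\rho^*\lambda_{2k+2}$ involves only coordinates $b_{i,j}$ with $i+j\le 2k+2$, so one fixed genus $g\ge k+1$ suffices, with the boundary term $b_{2,2g+1}\,\partial/\partial b_{3,2g-1}$ harmless) that the paper leaves implicit.
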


Let $p_{s,k}$ for odd $s , k \geqslant 1$ be polynomials in~$\mathbb{C}^{3\infty}$ defined by \eqref{pijn}, where $\mathbf{b}_j(\xi)$ are given by \eqref{bminf}. We have $p_{s,1} = 2 b_{1,s}$. Note that due to the grading, each coefficient $p_{s,k}$ is expressed as a polynomial in a finite number of variables, namely the coordinates $b_{i,j}$, such that $i+j \leqslant s+k$.

\begin{cor}[\rm(from Lemma \ref{ld2} and Corollary \ref{cords})] For odd $s \geqslant 3$, the polynomial vector fields in~$\mathbb{C}^{3\infty}$
\begin{equation} \label{dsc}
\mathcal{D}_{s} =
\frac12\!\sum_{j \in \{1,3,\dots,2g-1,\dots\}}\!\!\!\bigg( \mathcal{D}_1(p_{s,j})\frac\partial{\partial b_{1,j}} + \mathcal{D}_1(\mathcal{D}_1(p_{s,j})) \frac\partial{\partial b_{2,j}}+ \mathcal{D}_1(\mathcal{D}_1(\mathcal{D}_1(p_{s,j}))) \frac\partial{\partial b_{3,j}} \bigg)
\end{equation}
are projectable for $\rho$. Their projections are zero. We set $\varphi_g^*b_{i,j}=0$, for $j\ge 2g+1$. The relation
\[
\partial_s (\varphi_g^* b_{i,j}) = \varphi_g^* \mathcal{D}_s (b_{i,j})
\]
is satisfied for all $g \in \mathbb{N}$, $i \in \{1,2,3\}$ and $j \in \{1,3,\dots,2g-1\}$.
\end{cor}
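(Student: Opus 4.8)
The plan is to transfer the three assertions — that $\mathcal{D}_s$ is projectable for $\rho$, that its projection vanishes, and that the pullback identity holds — from the finite-genus statements of Lemma \ref{ld2} and Corollary \ref{cords}, the passage to the limit being controlled entirely by the grading. First I would record that the operator $\mathcal{D}_s$ of \eqref{dsc} is assembled from exactly the same data as its genus-$g$ counterpart in Lemma \ref{ld2}, namely the polynomials $p_{s,j}$ and the field $\mathcal{D}_1$ of \eqref{d1c}. Using $\wt b_{i,j} = i+j$ together with the already-noted facts that $p_{s,j}$ involves only the coordinates $b_{i,k}$ with $i+k\leqslant s+j$ and that $\mathcal{D}_1$ raises weight by one, every coefficient $\mathcal{D}_1(p_{s,j})$, $\mathcal{D}_1(\mathcal{D}_1(p_{s,j}))$, $\mathcal{D}_1(\mathcal{D}_1(\mathcal{D}_1(p_{s,j})))$ occurring in \eqref{dsc} is a polynomial in finitely many variables. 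Hence, although the sum in \eqref{dsc} is infinite, $\mathcal{D}_s$ carries every polynomial to a polynomial, and when applied to a fixed polynomial only finitely many summands act nontrivially.

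For projectability with zero projection I would fix an arbitrary polynomial $f$ on $\mathbb{C}^{2\infty}$, which may be taken homogeneous of some weight $w$. Then $\rho^* f$ is a polynomial in the finitely many coordinates $b_{i,j}$ with $i+j\leqslant w$. Choosing $g$ so large that $2g-1\geqslant w$, the commutative diagram \eqref{dinf} of Lemma \ref{lem1} gives $(\rho^* f)|_{\mathbb{C}^{3g}} = \rho_g^*(f|_{\mathbb{C}^{2g}})$, and all coordinates on which $\rho^* f$ depends already lie in $\mathbb{C}^{3g}$. By the consistency lemmas following Theorem \ref{t95} (and the truncation of the generating function \eqref{pijn} under \eqref{bminf}), the infinite-dimensional field $\mathcal{D}_s$ acts on $\rho^* f$ through precisely the summands with $j\leqslant 2g-1$, matching the genus-$g$ operator of Lemma \ref{ld2}. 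Corollary \ref{cords} then yields $\mathcal{D}_s(\rho_g^*(f|_{\mathbb{C}^{2g}})) = 0$, and undoing the restriction shows $\mathcal{D}_s(\rho^* f) = 0$ in $\mathbb{C}^{3\infty}$. Since $f$ was arbitrary, $\mathcal{D}_s$ is projectable for $\rho$ with zero projection.

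The pullback identity I would obtain directly from the defining relation \eqref{keyrel}: for fixed $g$ and $j\leqslant 2g-1$ that relation reads $\partial_s(\varphi^* b_{i,j}) = \varphi^*\mathcal{D}_s(b_{i,j})$ for the genus-$g$ map $\varphi$ and the genus-$g$ field $\mathcal{D}_s$. Since $\varphi_g$ is $\varphi$ composed with the embedding $\mathbb{C}^{3g}\subset\mathbb{C}^{3\infty}$, and since under the stated convention $\varphi_g^* b_{i,k}=0$ for $k\geqslant 2g+1$ the infinite-dimensional expression $\mathcal{D}_s(b_{i,j})$ pulls back to the genus-$g$ one, the identity $\partial_s(\varphi_g^* b_{i,j}) = \varphi_g^*\mathcal{D}_s(b_{i,j})$ follows for every $g$, $i\in\{1,2,3\}$, and $j\in\{1,3,\dots,2g-1\}$. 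This is the exact analogue of the argument already carried out for $\mathcal{D}_1$ from Lemma \ref{ld1}.

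I expect the only real obstacle to be the legitimacy of the limit itself: projectability is quantified over all polynomials $f$, and $\mathcal{D}_s$ is an infinite sum, so a priori neither side is manifestly finite. Both difficulties are dissolved by the grading, which forces $\rho^* f$, each $p_{s,j}$, and each iterate $\mathcal{D}_1^m(p_{s,j})$ to live in a finite-dimensional graded subspace; once this finiteness is made precise, every assertion reduces to a single finite-genus computation already supplied by Corollary \ref{cords} and Lemma \ref{lem1}, exactly as for the preceding corollary on $\mathcal{D}_1$.
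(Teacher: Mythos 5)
Your proposal is correct and follows essentially the same route as the paper, which states this corollary without a separate argument precisely because it reduces, via the grading (each $p_{s,j}$ and each iterate of $\mathcal{D}_1$ applied to it depends on finitely many coordinates $b_{i,k}$), to the finite-genus statements of Lemma \ref{ld2} and Corollary \ref{cords} together with the compatibility of the maps $\rho_g$ and the restrictions of the vector fields recorded in Theorem \ref{scale}, Lemma \ref{lem1}, and the lemmas following Theorem \ref{t95}. Your explicit handling of the finiteness issues is exactly the justification the paper leaves implicit.
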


\begin{cor}[\rm(from Corollary \ref{cordcom})] \label{cordcoms}
For $s \in \{1,3,\dots,2g-1,\dots\}$  the polynomial vector fields $\mathcal{D}_s$ commute.
\end{cor}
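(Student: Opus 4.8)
The plan is to deduce the infinite-dimensional statement from its genus-$g$ counterpart, Corollary \ref{cordcom}, by a stabilisation argument governed by the grading. Since $\mathcal{D}_s$ and $\mathcal{D}_{s'}$ are derivations of the polynomial algebra $\mathbb{C}[b_{i,j}]$ on $\mathbb{C}^{3\infty}$, their commutator $[\mathcal{D}_s,\mathcal{D}_{s'}]$ is again a derivation, and it vanishes identically if and only if it kills every coordinate $b_{i,j}$. Thus it suffices to prove that $[\mathcal{D}_s,\mathcal{D}_{s'}](b_{i,j}) = 0$ for all $i \in \{1,2,3\}$ and all odd $j \geqslant 1$.

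First I would fix the grading: the coordinate $b_{i,j}$ has weight $\wt b_{i,j} = i+j$ (consistent with relations \eqref{rels}--\eqref{rels5}), while $\mathcal{D}_s$ raises weight by $s$, as is visible from \eqref{d1c} and \eqref{dsc}. Consequently $c := [\mathcal{D}_s,\mathcal{D}_{s'}](b_{i,j})$ is a homogeneous polynomial of weight $W = i+j+s+s'$, and hence involves only those coordinates $b_{i',j'}$ with $i'+j' \leqslant W$; in particular every such $j'$ satisfies $j' \leqslant W-1$. Choosing any $g$ with $2g-1 \geqslant W$ therefore places all coordinates occurring in $c$, as well as the indices $s$, $s'$ and $j$, inside the genus-$g$ range $\{1,3,\dots,2g-1\}$.

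Next I would compare the computation of $c$ with the corresponding computation performed by the genus-$g$ fields of Lemmas \ref{ld1} and \ref{ld2}. By the restriction lemmas following Theorem \ref{t95}, the fields $\mathcal{D}_s$ defined for large genus, restricted to a smaller $\mathbb{C}^{3g}$, coincide with the fields $\mathcal{D}_s$ defined for genus $g$ and annihilate the out-of-range coordinates; equivalently, the infinite fields \eqref{d1c}, \eqref{dsc} are $\iota$-related to the genus-$g$ fields under the embedding $\iota\colon\mathbb{C}^{3g}\hookrightarrow\mathbb{C}^{3\infty}$. Because $c$ already lies in the subalgebra generated by the in-range coordinates, applying $\iota^{*}$ (which sets the remaining coordinates to zero) leaves $c$ unchanged, while on the other hand $\iota^{*} c = [\mathcal{D}_s,\mathcal{D}_{s'}](b_{i,j})$ as computed inside $\mathbb{C}^{3g}$. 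That last quantity is zero by Corollary \ref{cordcom}. Hence $c = 0$, and since $i,j,s,s'$ were arbitrary, the fields $\mathcal{D}_s$ commute.

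The main obstacle is the middle step: making rigorous the claim that the action of the infinite field on a coordinate of bounded weight ``sees only'' the genus-$g$ truncation. This is precisely controlled by the grading---each $p_{s,j}$, and each successive application of $\mathcal{D}_1$ to it in \eqref{dsc}, is a polynomial in finitely many bounded-weight coordinates---together with the compatibility of the finite fields across genera established in \S\ref{s8}. Once this stabilisation is in place, the commutation is inherited term by term from the already established finite-dimensional case.
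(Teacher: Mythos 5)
Your proposal is correct and follows essentially the same route as the paper, which states this result as an immediate consequence of Corollary \ref{cordcom}: the implicit justification there is exactly the stabilisation you spell out, namely that by the grading each $p_{s,k}$ (and hence each coefficient of $\mathcal{D}_s$) involves only finitely many coordinates, and the infinite-dimensional fields restrict to the genus-$g$ ones via the lemmas following Theorem \ref{t95}. Your writeup simply makes explicit the weight bound $W=i+j+s+s'$ and the choice $2g-1\geqslant W$ that the paper leaves to the reader.
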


Similarly to \S \ref{s8}, we set
$$
\mathcal{L} = \mathcal{D}_1^2 - 2 b_{1,1}.
$$

In the infinite-dimensional case by \emph{Buchstaber--Shorina polynomial differential operators} we mean the operators
$$
\mathcal{B}_{2k+1} = \mathcal{D}_{2k-1} \mathcal{D}_1^2 - (b_{1,1} \mathcal{D}_{2k-1} + \mathcal{D}_{2k-1} b_{1,1})
- \frac12 (b_{1,2k-1} \mathcal{D}_1 + \mathcal{D}_1 b_{1,2k-1}) - \mathcal{D}_{2k+1}
$$
in $\mathbb{C}^{3\infty}$, where $k \in \{1,\dots, g,\dots\}$.

\begin{cor}[\rm(from Theorem \ref{t95})] \label{c95}
The polynomial differential operators $\mathcal{L}$ and $\mathcal{B}_{2k+1}$ in $\mathbb{C}^{3\infty}$, where~$k \in \{1,\dots, g,\dots\}$, commute.
\end{cor}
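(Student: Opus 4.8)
The plan is to deduce this infinite-dimensional statement from the finite-dimensional Theorem~\ref{t95} by a grading-plus-truncation argument, using the consistency of all operators across the tower $\mathbb{C}^{3g} \subset \mathbb{C}^{3\infty}$ recorded in Lemma~\ref{lem1} and the compatibility lemmas of~\S\ref{s8}. The interpretation of the word ``commute'' is, as in Theorem~\ref{t95}, that every pair in the family $\mathcal{L}, \mathcal{B}_3, \mathcal{B}_5, \dots$ commutes; thus I must show that each commutator $[\mathcal{L}, \mathcal{B}_{2k+1}]$ and $[\mathcal{B}_{2i+1}, \mathcal{B}_{2j+1}]$ vanishes as a polynomial differential operator in~$\mathbb{C}^{3\infty}$.

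First I would record the homogeneity. With $\wt b_{1,2j-1} = 2j$, $\wt b_{2,2j-1} = 2j+1$, $\wt b_{3,2j-1} = 2j+2$, the vector field $\mathcal{D}_s$ raises weight by $s$, so $\mathcal{L}$ is homogeneous of weight $2$ and $\mathcal{B}_{2k+1}$ of weight $2k+1$; consequently each of the commutators above is a homogeneous polynomial differential operator. Since every coordinate has weight at least $2$, any homogeneous polynomial of fixed weight $W$ in the $b_{i,2j-1}$ involves only the finitely many coordinates of index $2j-1 \leqslant W-1$. Hence, for any fixed polynomial $f$ in the coordinates, both $f$ and the polynomial $C(f):=[\mathcal{L}, \mathcal{B}_{2k+1}](f)$ involve only coordinates of index at most $2N-1$ for some $N$ depending on $\wt f$ and $k$. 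I would also note that, upon restriction to $\mathbb{C}^{3g}$ (i.e.\ setting $b_{i,j}=0$ for $j \geqslant 2g+1$), the infinite-dimensional operators $\mathcal{D}_1, \mathcal{D}_{2k-1}, \mathcal{D}_{2k+1}, \mathcal{L}, \mathcal{B}_{2k+1}$ of \eqref{d1c}, \eqref{dsc}, \eqref{eL}, \eqref{eW} coincide, for $k<g$, with their genus-$g$ counterparts: this is exactly the content of Lemma~\ref{rem9} together with the explicit formulas and the vanishing convention $b_{2,2g+1}=0$ of Lemma~\ref{ld1}.

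Given these two observations, fix a pair of operators, say $\mathcal{L}$ and $\mathcal{B}_{2k+1}$, and an arbitrary polynomial $f$. Choose any $g \geqslant \max(N, k+1)$ and restrict to $\mathbb{C}^{3g}$: the operators become the genus-$g$ operators, which commute by Theorem~\ref{t95}, so $C(f)$ restricts to $0$ on $\mathbb{C}^{3g}$. As $C(f)$ is a polynomial in coordinates of index $\leqslant 2g-1$, vanishing on $\mathbb{C}^{3g}$ forces $C(f)=0$ identically; since $f$ was arbitrary, $[\mathcal{L}, \mathcal{B}_{2k+1}]=0$, and the same reasoning disposes of $[\mathcal{B}_{2i+1}, \mathcal{B}_{2j+1}]$. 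The weight bookkeeping and the truncation of the infinite operators to the finite ones are routine. The only point demanding care is the final choice of $g$: it must be large enough both that the commutator applied to $f$ has not yet reached the coordinates distinguishing $\mathbb{C}^{3g}$ from $\mathbb{C}^{3\infty}$, and that $k<g$, so that $\mathcal{B}_{2k+1}$ in the infinite family restricts to a genuine genus-$g$ Buchstaber--Shorina operator rather than to the truncated one with $\mathcal{D}_{2g+1}=0$. The grading guarantees that both constraints can be met simultaneously, which is precisely why the homogeneity step is the crux of the argument.
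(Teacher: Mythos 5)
Your proposal is correct and is essentially the argument the paper intends: the corollary is stated without explicit proof as a direct consequence of Theorem \ref{t95}, resting on exactly the two ingredients you supply — the grading, which confines any fixed coefficient or image polynomial to finitely many coordinates $b_{i,j}$, and the restriction compatibility of the operators with the truncations $\mathbb{C}^{3g}\subset\mathbb{C}^{3\infty}$ (Lemma \ref{rem9} and the explicit formulas \eqref{d1c}, \eqref{dsc}). Your extra care in choosing $g$ large enough that $\mathcal{B}_{2k+1}$ restricts to a genuine genus-$g$ operator rather than the one with $\mathcal{D}_{2g+1}=0$ is exactly the right point to watch.
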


We define an infinite set of parameters $a_4, a_6,\dots, a_{2g},\dots$ as polynomials in coordinates $\lambda$ recursively for $k = 1,\dots,g-1,\dots$ via the relations
\begin{equation} \label{par}
2 a_{2k+2} = \lambda_{2k+2} - \sum_{i=1}^{k-2} a_{2i+2} a_{2k-2i}.
\end{equation}
Note that, for $\lambda_{2k+2} = \mu_{2k+2}$, we obtain relations \eqref{44}.

We define polynomial differential operators $\mathcal{A}_{2k+1}$ in $\mathbb{C}^{3 \infty}$ recursively for $k \in\{0,1,\dots$, $g,\dots \}$ using the relation
\begin{equation} \label{relA}
\mathcal{A}_{2k+1} = \sum_{j=1}^k \mathcal{B}_{2j+1} \mathcal{L}^{k-j} + \mathcal{D}_{2k+1} - \sum_{j=0}^{k-2} a_{2k-2j} \mathcal{A}_{2j+1}.
\end{equation}

\begin{thm} \label{lemcom} For all $k \in \mathbb{N}$, we have
\begin{equation} \label{KdPol}
\mathcal{D}_{2k+1}(2 b_{1,1}) =\bigg[\mathcal{L}, \mathcal{A}_{2k+1} + \sum_{j=0}^{k-2} a_{2k-2j} \mathcal{A}_{2j+1}\bigg].
\end{equation}
\end{thm}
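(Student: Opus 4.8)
The plan is to use the recursive definition \eqref{relA} to collapse the right-hand side of \eqref{KdPol} into a single commutator $[\mathcal{L},\mathcal{D}_{2k+1}]$, and then evaluate that commutator directly from the definition of $\mathcal{L}$.

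First I would rearrange \eqref{relA}. Transferring the sum $\sum_{j=0}^{k-2} a_{2k-2j}\mathcal{A}_{2j+1}$ to the left-hand side yields
\[
\mathcal{A}_{2k+1} + \sum_{j=0}^{k-2} a_{2k-2j}\mathcal{A}_{2j+1} = \sum_{j=1}^k \mathcal{B}_{2j+1}\mathcal{L}^{k-j} + \mathcal{D}_{2k+1}.
\]
Hence the operator inside the commutator in \eqref{KdPol} is exactly $\sum_{j=1}^k \mathcal{B}_{2j+1}\mathcal{L}^{k-j} + \mathcal{D}_{2k+1}$, and by linearity the right-hand side of \eqref{KdPol} splits as $\sum_{j=1}^k [\mathcal{L},\mathcal{B}_{2j+1}\mathcal{L}^{k-j}] + [\mathcal{L},\mathcal{D}_{2k+1}]$.

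Next I would show that every summand vanishes. By the derivation (Leibniz) property of the commutator bracket,
\[
[\mathcal{L},\mathcal{B}_{2j+1}\mathcal{L}^{k-j}] = [\mathcal{L},\mathcal{B}_{2j+1}]\mathcal{L}^{k-j} + \mathcal{B}_{2j+1}[\mathcal{L},\mathcal{L}^{k-j}].
\]
The second term is trivially zero, and the first vanishes by Corollary~\ref{c95}, which asserts that $\mathcal{L}$ commutes with each $\mathcal{B}_{2j+1}$. Thus the whole sum contributes nothing, and the right-hand side of \eqref{KdPol} reduces to $[\mathcal{L},\mathcal{D}_{2k+1}]$. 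It remains to evaluate this bracket using $\mathcal{L} = \mathcal{D}_1^2 - 2 b_{1,1}$. Since $\mathcal{D}_1$ and $\mathcal{D}_{2k+1}$ commute by Corollary~\ref{cordcoms}, we get $[\mathcal{D}_1^2,\mathcal{D}_{2k+1}] = \mathcal{D}_1[\mathcal{D}_1,\mathcal{D}_{2k+1}] + [\mathcal{D}_1,\mathcal{D}_{2k+1}]\mathcal{D}_1 = 0$. For the remaining term, treating $2 b_{1,1}$ as a multiplication operator and $\mathcal{D}_{2k+1}$ as a vector field, the identity $[f,X] = -X(f)$ for a derivation $X$ gives $[-2 b_{1,1},\mathcal{D}_{2k+1}] = \mathcal{D}_{2k+1}(2 b_{1,1})$. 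Therefore $[\mathcal{L},\mathcal{D}_{2k+1}] = \mathcal{D}_{2k+1}(2 b_{1,1})$, which is precisely the left-hand side of \eqref{KdPol}.

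I expect no serious obstacle, since the analytic content is already carried by Corollaries~\ref{c95} and~\ref{cordcoms}; the proof is essentially a bookkeeping argument. The only points demanding care are the correct rearrangement of the recursion \eqref{relA} and the sign in the derivation identity $[f,X]=-X(f)$. The single structural subtlety worth flagging is that all operators act on $\mathbb{C}^{3\infty}$: I would note that the grading ensures each operator is determined by its action on finitely many coordinates of each fixed weight, so the commutator manipulations above are legitimate term by term.
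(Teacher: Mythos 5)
Your proposal is correct and follows exactly the same route as the paper's proof: rearrange the recursion \eqref{relA}, kill the $\mathcal{B}_{2j+1}\mathcal{L}^{k-j}$ terms via Corollary~\ref{c95}, reduce $[\mathcal{L},\mathcal{D}_{2k+1}]$ to $[-2b_{1,1},\mathcal{D}_{2k+1}]$ via Corollary~\ref{cordcoms}, and evaluate it as $\mathcal{D}_{2k+1}(2b_{1,1})$. The paper states this chain of equalities in three lines; you have merely written out the same bookkeeping in full, with the correct sign.
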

\begin{proof}
According to \eqref{relA} and Corollaries \ref{c95} and \ref{cordcoms}, we obtain
\begin{align*}
\kern40pt[\mathcal{L}, \mathcal{A}_{2k+1} + \sum_{j=0}^{k-2} a_{2k-2j} \mathcal{A}_{2j+1}]&= [\mathcal{L}, \sum_{j=1}^k \mathcal{B}_{2j+1} \mathcal{L}^{k-j} + \mathcal{D}_{2k+1}]\\
&= [\mathcal{L}, \mathcal{D}_{2k+1}] = \mathcal{D}_{2k+1}(2 b_{1,1}).\kern60pt\Box
\end{align*}
\end{proof}

We call the system of equations \eqref{KdPol}, for $k \in \mathbb{N}$, the \emph{polynomial parametric Korteweg--de Vries  hierarchy}. This concept is based on the following results.

\begin{thm} \label{tmain}
For all $g \in \mathbb{N}$, $i \in \{1,2,3\}$ and $j \in \{1,3,\dots,2g-1\}$ we have
\begin{equation} \label{keyrelA}
A_{2k+1} (\varphi_g^* b_{i,j}) = \varphi_g^* \mathcal{A}_{2k+1} (b_{i,j}).
\end{equation}
\end{thm}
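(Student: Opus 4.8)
The plan is to prove \eqref{keyrelA} by induction on $k$, exploiting the recursive definition \eqref{relA} of $\mathcal{A}_{2k+1}$ together with the intertwining properties of $\varphi_g^*$ already available for the elementary operators. First I would assemble three basic intertwining relations, valid on the whole ring of polynomials in the $b_{i,j}$ (and not merely on generators, since $\varphi_g^*$ is a ring homomorphism): (i) $\partial_s \circ \varphi_g^* = \varphi_g^* \circ \mathcal{D}_s$, which holds on generators by the corollaries of \S\ref{Sinf} and extends to all polynomials because $\partial_s,\mathcal{D}_s$ are derivations; (ii) $L \circ \varphi_g^* = \varphi_g^* \circ \mathcal{L}$, obtained from (i) applied twice together with $\varphi_g^*(2b_{1,1}) = v_2$ and the definitions $L=\partial_1^2-v_2$, $\mathcal{L}=\mathcal{D}_1^2-2b_{1,1}$; and (iii) $B_{2j+1}\circ\varphi_g^* = \varphi_g^*\circ\mathcal{B}_{2j+1}$, which is the content of the proof of Theorem \ref{t95} transported to the infinite setting via Lemmas \ref{lem1}, \ref{rem9}, and \ref{rem9b}. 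I would also record that $\varphi_g^*(\lambda_s)=\lambda_s$, a consequence of the commutativity of the diagram following Lemma \ref{lem1} ($\rho\circ\varphi_g$ agrees with $\pi$ followed by the embedding $\Lambda_g\subset\mathbb{C}^{2\infty}$), so that $\varphi_g^*(a_{2m})=a_{2m}$, the constants being the polynomials in $\lambda$ fixed by \eqref{par} (equivalently \eqref{44} with $\lambda_{2m}=\mu_{2m}$).

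The base case $k=0$ is immediate: \eqref{relA} gives $\mathcal{A}_1=\mathcal{D}_1$ and $A_1=\partial_1$, so \eqref{keyrelA} is relation (i) for $s=1$. For the inductive step, assume \eqref{keyrelA} for all indices below $k$. Applying $\varphi_g^*$ to \eqref{relA} and moving it through each summand by (i)--(iii) and the induction hypothesis — using $L^{k-j}\varphi_g^* = \varphi_g^*\mathcal{L}^{k-j}$, hence $\mathcal{B}_{2j+1}\mathcal{L}^{k-j}$ transports to $B_{2j+1}L^{k-j}$, while $\varphi_g^*(a_{2k-2j}\mathcal{A}_{2j+1})=a_{2k-2j}A_{2j+1}\varphi_g^*$ — I obtain
\begin{equation*}
\varphi_g^*\,\mathcal{A}_{2k+1} = \Bigl[\, \sum_{j=1}^k B_{2j+1} L^{k-j} + \partial_{2k+1} - \sum_{j=0}^{k-2} a_{2k-2j} A_{2j+1} \,\Bigr]\varphi_g^*.
\end{equation*}
It then remains to identify the bracketed operator with $A_{2k+1}$ when it acts on the hyperelliptic functions $\varphi_g^* b_{i,j}$.

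This identification is the heart of the matter and rests on a Lax-type identity of the form of Theorem \ref{t42}. At the purely algebraic level, the sum $\sum_{j=1}^k B_{2j+1}L^{k-j}$ has all its genuine time-derivative terms $\partial_{2m-1}$, for $2\le m\le k$, cancel in pairs: the contribution $\partial_{2m-1}L^{k-m+1}$ coming from the leading part $\partial_{2m-1}(\partial_1^2-v_2)$ of $B_{2m+1}L^{k-m}$ cancels against $-\partial_{2m-1}L^{k-m+1}$ coming from the $-\partial_{2m-1}$ term of $B_{2m-1}L^{k-m+1}$. Only the single time derivative $-\partial_{2k+1}$ survives, together with a pure $\partial_1$-operator whose function coefficients involve the $v_{2m}$ and the derivatives $\partial_{2m-1}v_2=v_{2m}'$ supplied by Lemma \ref{lem33}. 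Adding the explicit $\partial_{2k+1}$ kills the survivor, and the identity of Theorem \ref{t42} together with \eqref{44} shows that the remaining $\partial_1$-operator equals $A_{2k+1}+\sum_{j=0}^{k-2}a_{2k-2j}A_{2j+1}$ on the hyperelliptic functions. Substituting this into the displayed formula, the two $a$-sums cancel and the $\partial_{2k+1}$ cancels, leaving $\varphi_g^*\mathcal{A}_{2k+1}=A_{2k+1}\varphi_g^*$, which is \eqref{keyrelA}.

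The step I expect to be the main obstacle is this last one: Theorem \ref{t42} is stated only for top index equal to the genus $g$, whereas here $k$ is arbitrary. I must therefore justify the analogous pure-$\partial_1$ identity for an arbitrary top index $k$ and, for $k\ge g$, account both for the vanishing of $\partial_{2k+1}$ on genus-$g$ functions and for the collapse of $\mathcal{B}_{2k+1}$ recorded in Lemma \ref{rem9b}. I would handle this by passing between genera through the scaling Theorem \ref{scale} and its operator counterparts (Lemmas \ref{rem9} and \ref{rem9b}), which reduce the identity for top index $k$ to a situation already covered by Theorem \ref{t42}. The remaining bookkeeping — checking that the intertwinings (i)--(iii) survive composition on products, not just on generators — is routine given the ring-homomorphism property of $\varphi_g^*$.
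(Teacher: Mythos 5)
Your proposal is correct and follows essentially the same route as the paper: both hinge on reducing to the case where the top index equals the genus ($g=k$), where Theorem \ref{t42} applied to the functions \eqref{posl} gives the key identity, and on transferring between genera via the scaling results (Theorem \ref{scale}, Lemmas \ref{rem9} and \ref{rem9b}) together with the fact that $\mathcal{A}_{2k+1}$ only involves $\mathcal{D}_1,\dots,\mathcal{D}_{2k-1}$ and coordinates $b_{i,j}$ with $j\leqslant 2k-1$. The paper's proof is simply a terser version of your argument, leaving the intertwining relations, the induction on the recursion \eqref{relA}, and the pairwise cancellation of the $\partial_{2m-1}$-terms implicit.
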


\begin{proof}
The relation \eqref{relA} defines a polynomial differential operator~$\mathcal{A}_{2k+1}$ as an operator in $\mathcal{D}_{1}, \mathcal{D}_{3},\dots , \mathcal{D}_{2k-1}$ with coefficients in $b_{i,j}$, where $j \leqslant 2k-1$. Thus, if \eqref{keyrelA} holds for~$g = k$, then it also holds for $k<g$.

According to Lemma \ref{rem9}, if \eqref{keyrelA} holds for~$g = k$, then it also holds for $k>g$.

For $g=k$, the relation \eqref{keyrelA} follows from Theorem \ref{t42} for functions \eqref{posl}.\qed
\end{proof}

\begin{cor} \label{thatcor}
For mappings $\psi_v$ from Corollary \ref{cor77}, we have 
\begin{equation}
A_{2k+1} (\psi_v^* b_{i,j}) = \psi_v^* \mathcal{A}_{2k+1} (b_{i,j}).
\end{equation}
where $i \in \{1,2,3\}$ and $j \in \{1,3,\dots,2g-1\}$.
\end{cor}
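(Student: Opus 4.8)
The plan is to reduce the statement for the map $\psi_v$ to the already-established relation \eqref{keyrelA} for the map $\varphi_g$ proved in Theorem \ref{tmain}. The essential observation is that both $\varphi_g$ and $\psi_v$ satisfy the same structural defining property, namely the compatibility of the coordinate differentiations $\partial_s$ with the polynomial vector fields $\mathcal{D}_s$. Indeed, Theorem \ref{tmain} established $A_{2k+1}(\varphi_g^* b_{i,j}) = \varphi_g^* \mathcal{A}_{2k+1}(b_{i,j})$ by building $\mathcal{A}_{2k+1}$ recursively from \eqref{relA} as a polynomial expression in the fields $\mathcal{D}_1,\dots,\mathcal{D}_{2k-1}$ with coefficients among the $b_{i,j}$, and then invoking the pullback relation $\partial_s(\varphi^* b_{i,j}) = \varphi^* \mathcal{D}_s(b_{i,j})$ from \eqref{keyrel}. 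For $\psi_v$, the analogous pullback relation is exactly \eqref{thisrel}: $\partial_s(\psi_v^* b_{i,j}) = \psi_v^* \mathcal{D}_s(b_{i,j})$ for all admissible $s, i, j$.

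First I would record that, by Corollary \ref{cor77}, the map $\psi_v$ sends coordinates to $b_{1,j} = \tfrac12 v_{j+1}$, $b_{2,j} = \tfrac12 v_{j+1}'$, $b_{3,j} = \tfrac12 v_{j+1}''$, so that $\psi_v^* b_{1,1} = \tfrac12 v_2$ and $L = \partial_1^2 - v_2$ pulls back consistently with $\mathcal{L} = \mathcal{D}_1^2 - 2 b_{1,1}$. Next I would argue, exactly as in the proof of Theorem \ref{t95}, that the operator intertwining relations
\[
L(\psi_v^* b_{i,j}) = \psi_v^* \mathcal{L}(b_{i,j}), \qquad B_{2k+1}(\psi_v^* b_{i,j}) = \psi_v^* \mathcal{B}_{2k+1}(b_{i,j})
\]
hold, since these operators are themselves polynomial expressions in $\mathcal{D}_1, \mathcal{D}_{2k-1}, \mathcal{D}_{2k+1}$ and $b_{1,1}, b_{1,2k-1}$, and \eqref{thisrel} propagates each $\partial_s$ through such expressions.

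The main step is then a bootstrapping argument on the recursion \eqref{relA}. Since $\mathcal{A}_{2k+1}$ is defined as $\sum_{j=1}^k \mathcal{B}_{2j+1}\mathcal{L}^{k-j} + \mathcal{D}_{2k+1} - \sum_{j=0}^{k-2} a_{2k-2j}\mathcal{A}_{2j+1}$ and $A_{2k+1}$ obeys the identical recursion (via Theorem \ref{t42}), I would proceed by induction on $k$: assuming $A_{2j+1}(\psi_v^* b_{i,j'}) = \psi_v^* \mathcal{A}_{2j+1}(b_{i,j'})$ for all $j < k$, I apply the intertwining relations for $L$, $B_{2j+1}$, and $\mathcal{D}_{2k+1}$ termwise to the recursion, using that the parameters $a_{2m}$ are $u$-independent constants matching on both sides by \eqref{par} and \eqref{44}. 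The composite $\mathcal{B}_{2j+1}\mathcal{L}^{k-j}$ requires that \eqref{thisrel} be compatible not merely with single fields but with arbitrary polynomial differential operators in the $\mathcal{D}_s$; this follows because \eqref{thisrel} is a $\psi_v^*$-intertwining of each generating field, and such intertwinings compose.

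The step I expect to be the main obstacle is the boundary bookkeeping at the top index $j = 2g-1$: the operators $\mathcal{B}_{2k+1}$ and $\mathcal{D}_{2k+1}$ involve coordinates $b_{i,j}$ with $j$ possibly equal to $2g+1$, which lie outside the range of $\psi_v$. As in the conventions of Theorem \ref{tmain} and the infinite-dimensional corollaries (where $\varphi_g^* b_{i,j} = 0$ for $j \geqslant 2g+1$), I would set $\psi_v^* b_{i,j} = 0$ for $j \geqslant 2g+1$ and verify that the higher stationary $g$-KdV constraint of Theorem \ref{TBS2}, equivalently $v_{2g+2} = 0$ in Lemma \ref{vppp}, makes this truncation consistent with \eqref{thisrel}. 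Once the truncation is shown to respect the recursion, the induction closes and \eqref{keyrelA} for $\psi_v$ follows for all $i \in \{1,2,3\}$ and $j \in \{1,3,\dots,2g-1\}$.
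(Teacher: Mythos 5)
Your proposal is correct and follows essentially the same route as the paper: the paper leaves this corollary unproved precisely because it is the proof of Theorem \ref{tmain} verbatim with $\varphi_g$ replaced by $\psi_v$, resting on the intertwining relation \eqref{thisrel}, the recursion \eqref{relA}, and Theorem \ref{t42}, which is already stated under the conditions of Theorem \ref{TBS} and hence applies directly to the functions $v_{2k}$. Your extra care with the truncation $v_{2g+2}=0$ and the matching of the parameters via \eqref{par} and \eqref{44} is consistent with the paper's conventions.
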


From Theorems~\ref{lemcom} and~\ref{tmain}, and Corollary~\ref{thatcor}, we obtain the following results.

\begin{thm} \label{fullhd}
Under the conditions of Theorem \ref{TBS}, function $v_2$ is a solution to the parametric Korteweg--de Vries hierarchy \eqref{KdVhld}. The parameters $a_4, a_6,\dots$, $a_{2k},\dots$ of the Korteweg--de Vries hierarchy are given by \eqref{par}, for $k = 1,2,3,\dots$, where $\lambda_ {2k+2} = \mu_{2k+2}$ for $k \leqslant 2g$ and $\lambda_{2k+2} = 0$ for $k > 2g$.
\end{thm}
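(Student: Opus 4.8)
The plan is to deduce Theorem~\ref{fullhd} by pulling the polynomial parametric Korteweg--de Vries hierarchy \eqref{KdPol} of Theorem~\ref{lemcom} back along the mapping $\psi_v$ of Corollary~\ref{cor77}, under which $\psi_v^* b_{i,j} = \tfrac12 v_{j+1}^{(i-1)}$ and, in particular, $\psi_v^*(2 b_{1,1}) = v_2$. First I would apply $\psi_v^*$ to both sides of the identity $\mathcal{D}_{2k+1}(2 b_{1,1}) = [\mathcal{L}, \mathcal{A}_{2k+1} + \sum_{j=0}^{k-2} a_{2k-2j}\mathcal{A}_{2j+1}]$. On the left, the intertwining relation \eqref{thisrel} applied to the generator $b_{1,1}$ gives $\psi_v^*\mathcal{D}_{2k+1}(2 b_{1,1}) = \partial_{2k+1}\big(\psi_v^*(2 b_{1,1})\big) = \partial v_2/\partial t_{2k+1}$, which is exactly the left-hand side of the $k$-th equation of \eqref{KdVhld} for $u = v_2$.

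The core step is to identify the pullback of the right-hand side with the genuine Korteweg--de Vries commutator $[L, A_{2k+1} + \sum_{j=0}^{k-2} a_{2k-2j} A_{2j+1}]$ evaluated at $u = v_2$. Since $\mathcal{L} = \mathcal{D}_1^2 - 2 b_{1,1}$ and $L = \partial_1^2 - v_2$, the equality $\psi_v^*\mathcal{L} = L\psi_v^*$ holds as operators on the whole polynomial algebra, because $\psi_v^*\mathcal{D}_1 = \partial_1\psi_v^*$ (a case of \eqref{thisrel}, extended from generators to all polynomials as both sides are derivations and $\psi_v^*$ is a ring homomorphism) and $\psi_v^*(2b_{1,1}) = v_2$. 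Combining this with Corollary~\ref{thatcor}, I would argue that $\psi_v^*$ carries the commutator of polynomial operators into the commutator of the genuine operators in $z = t_1$. Because both commutators are zeroth-order (multiplication) operators, evaluating the intertwined identity on the constant $1$ converts the operator statement into the equality of functions $\partial v_2/\partial t_{2k+1} = [L, A_{2k+1} + \sum_{j=0}^{k-2} a_{2k-2j}A_{2j+1}]$ at $u = v_2$, i.e. the $k$-th equation of \eqref{KdVhld}.

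It then remains to pin down the parameters. By Corollary~\ref{cor77} the composition $\rho_g\circ\psi_v$ is the point with $\lambda_s = \mu_s$ for $s \in \{4,6,\dots,4g+2\}$; moreover $v_{2k}=0$ for $k>g$, so $\psi_v^* b_{i,j}=0$ for $j>2g-1$, and the grading of the polynomials expressing $\lambda_{2k+2}$ then forces $\lambda_{2k+2}=0$ for $k>2g$. Substituting $\lambda_{2k+2}=\mu_{2k+2}$ for $k\leqslant 2g$ and $\lambda_{2k+2}=0$ for $k>2g$ into the defining recursion \eqref{par}, which for these values coincides with \eqref{44}, produces precisely the parameters $a_4,a_6,\dots$ asserted in the statement.

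The hard part will be promoting Corollary~\ref{thatcor} from an identity on the generators $b_{i,j}$ to the operator-level statement needed to transfer the commutator through $\psi_v^*$: the operator $\mathcal{A}_{2k+1}$ is assembled by \eqref{relA} from the $\mathcal{B}_{2j+1}$, powers of $\mathcal{L}$, and $\mathcal{D}_{2k+1}$, so its pullback a priori involves higher time derivatives $\partial_{2j+1}$ that are absent from the genuine $A_{2k+1}$ (which contains only $\partial_1$). I would handle this by checking that the operator intertwinings $\psi_v^*\mathcal{L}=L\psi_v^*$ and $\psi_v^*\mathcal{B}_{2j+1}=B_{2j+1}\psi_v^*$, which hold on all polynomials since $\mathcal{L}$ and $\mathcal{B}_{2j+1}$ are built from derivations and coordinate multiplications, together with Theorem~\ref{t42} and the on-shell hierarchy relations \eqref{tpar}, collapse the higher-derivative contributions and reduce $\psi_v^*\mathcal{A}_{2k+1}$ to $A_{2k+1}\psi_v^*$ on the polynomials $1$ and $b_{1,1}$ that actually occur when the commutator is evaluated on $1$. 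Establishing this reduction is the delicate point on which the whole transfer rests.
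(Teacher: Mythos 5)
Your proposal is correct and follows essentially the same route as the paper: the paper derives Theorem \ref{fullhd} directly by combining Theorem \ref{lemcom}, Theorem \ref{tmain} and Corollary \ref{thatcor}, which is exactly the pullback-along-$\psi_v$ argument you spell out (with the parameter identification via Corollary \ref{cor77} and \eqref{par}). The ``delicate point'' you flag at the end is precisely what Theorem \ref{tmain} and Corollary \ref{thatcor} already establish via Theorem \ref{t42}, so no additional work is needed there.
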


Comparing this result with the result of the lemma \ref{lem33}, we obtain a representation of the parametric Korteweg--de Vries hierarchy \eqref{KdVhld} as \eqref{nu}, where $\nu_{2k+2}[u]\break\in \mathcal{H}_A$.

\begin{cor} \label{lastcor}
In the case $\mu_4 = \mu_6 = \ldots = \mu_{4g+2} = 0$, the expressions \eqref{43}, for $k \leqslant g$, define $v_{2k}$ as differential polynomials from $v_2$ to $z=t_1$, therefore
\[
r_{2k+3}[v_2] = \partial_z (v_{2k+2}).
\]
\end{cor}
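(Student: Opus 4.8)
I would first dispose of the two routine points. The claim that \eqref{43} defines $v_{2k}$ as a differential polynomial in $v_2$ is immediate: setting $\mu_4=\dots=\mu_{2k}=0$ in the polynomial $T_k$ removes every $\mu$-argument and leaves $v_{2k}$ equal to a polynomial in $v_2,v_2',\dots,v_2^{(2k-2)}$, that is, a differential polynomial in $v_2$ with respect to $z=t_1$. Next, substituting $\lambda_{2i+2}=\mu_{2i+2}=0$ into the recursion \eqref{par} and arguing by induction on $k$ shows that every parameter vanishes, $a_4=a_6=\dots=0$.

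The core step uses Theorem~\ref{fullhd}. With $a=0$ the parametric hierarchy \eqref{KdVhld} collapses to the canonical one \eqref{KdVh}, so Theorem~\ref{fullhd} gives $\partial_{2k+1}v_2=r_{2k+3}[v_2]$, while Lemma~\ref{lem33} gives $\partial_{2k+1}v_2=\partial_1 v_{2k+2}$ for $k=1,\dots,g-1$. Equating the two expressions for $\partial_{2k+1}v_2$ yields
\[
r_{2k+3}[v_2]=\partial_z(v_{2k+2}),\qquad k=1,\dots,g-1,
\]
which is exactly the $\mu=0$ specialisation of the representation \eqref{nu} noted in the comment following Theorem~\ref{fullhd} (there $\nu_{2k+2}[u]\in\mathcal{H}_A$ becomes $v_{2k+2}\in\mathcal{H}$). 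The boundary case $k=g$ is the higher stationary $g$-KdV equation of Theorem~\ref{TBS2} with $a=0$, namely $r_{2g+3}[v_2]=0=\partial_z(v_{2g+2})$ since $v_{2g+2}=0$, and the case $k=0$ is the tautology $r_3[v_2]=v_2'=\partial_z(v_2)$.

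The step I expect to be the main obstacle is promoting this to an identity in the differential algebra $\mathcal{H}$, with $v_2$ regarded as the free generator $u$, since that is how the corollary is invoked in \S\ref{S0}; a priori Theorem~\ref{fullhd} and Lemma~\ref{lem33} give only a relation between the specific functions, and under the conditions of Theorem~\ref{TBS} the function $v_2$ satisfies the order-$(2g+1)$ stationary ODE, so its high derivatives are constrained. To bypass this I would observe that both $r_{2k+3}[u]$ and $\partial_z(v_{2k+2})$ involve derivatives of $u$ only up to order $2k+1$, whereas the differential polynomials $v_{2k+2}$ produced by \eqref{43} with $\mu=0$ do not depend on $g$; hence for a fixed $k$ it suffices to run the $k=g-1$ case above at genus $g=k+1$, where $2g+1=2k+3>2k+1$ keeps $v_2,v_2',\dots,v_2^{(2k+1)}$ unconstrained, so the functional relation becomes a genuine $\mathcal{H}$-identity. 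Equivalently one may pass to the infinite family of \S\ref{Sinf}, where no stationary truncation is imposed. I would also flag that the tempting shortcut through the Lenard recursion \eqref{len} is circular here, because \eqref{len} already presupposes $r_{2k+1}[u]=\partial_z(\nu_{2k}[u])$, the very statement being proved.
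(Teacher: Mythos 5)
Your argument matches the paper's: Corollary \ref{lastcor} is stated there with no separate proof, as an immediate consequence of combining Theorem \ref{fullhd} (where $\mu=0$ forces $a_4=a_6=\dots=0$ via \eqref{par}, collapsing the parametric hierarchy to the canonical one) with Lemma \ref{lem33}, which is exactly your core step. Your additional care in promoting the functional relation to an identity in $\mathcal{H}$ by working at genus $g=k+1$, and your observation that the Lenard recursion \eqref{len} cannot be invoked here without circularity, go beyond what the paper writes down but are correct and consistent with how the corollary is used in \S\ref{S0}.
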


\begin{thm} \label{mainthm}
For each $g\in \mathbb{N}$, function
\[
\wp_{1,1} = - \partial_1^2 \ln \sigma(t, \lambda)
\]
(see \eqref{e1}) defines a solution $u = 2 \wp_{1,1}$ to the parametric Korteweg--de Vries hierarchy \eqref{KdVhld}. The parameters $a_4, a_6,\dots, a_{2k},\dots$ of the Korteweg--de Vries hierarchy are related to the parameters $\lambda = (\lambda_4, \lambda_6,\dots, \lambda_{4g+2})$ of the sigma-function by relations \eqref{par}, for $k = 1,2,\dots$, where $\lambda_{2k+2} = 0$, for $k>2g$.
\end{thm}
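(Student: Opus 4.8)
The plan is to obtain Theorem \ref{mainthm} as the hyperelliptic specialisation of the general statement of Theorem \ref{fullhd}. The first step is to exhibit the concrete sequence of functions to which Theorem \ref{TBS} applies. I would take
\[
v_2 = 2\wp_{1,1}, \quad v_4 = 2\wp_{1,3}, \quad \dots, \quad v_{2g} = 2\wp_{1,2g-1},
\]
built from the hyperelliptic sigma function $\sigma(t,\lambda)$ through \eqref{pijdef}. By Lemma \ref{l51} these functions make the Buchstaber--Shorina operators $L$ and $B_{2k+1}$ commute, so the conditions of Theorem \ref{TBS} are satisfied. This is the step that anchors the abstract construction to the hyperelliptic setting, and it rests on the differential relations \eqref{pirel} and \eqref{pirel2} extracted from Theorem \ref{t31}.

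Second, I would invoke Theorem \ref{fullhd} directly. Since the conditions of Theorem \ref{TBS} hold for the sequence above, that theorem asserts that $v_2 = 2\wp_{1,1}$ is a solution of the parametric Korteweg--de Vries hierarchy \eqref{KdVhld}, with parameters $a_4, a_6, \dots$ determined by the recursion \eqref{par} in terms of the constants $\mu_{2k+2}$ of Corollary \ref{cor53}, where the relabelled $\lambda_{2k+2} = \mu_{2k+2}$ for $k \leqslant 2g$ and $\lambda_{2k+2} = 0$ for $k > 2g$. Thus $u = 2\wp_{1,1}$ already solves the hierarchy; what remains is to rewrite the parameters intrinsically.

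The conceptually decisive step is to identify these abstract constants $\mu$ with the parameters $\lambda$ of the sigma function itself. For this I would appeal to Lemma \ref{lemmu}, established together with Theorem \ref{t1}, according to which $\mu_s = \lambda_s$ for $s \in \{4,6,\dots,4g+2\}$. Substituting this identification into \eqref{par} expresses the hierarchy parameters $a$ directly through the genuine sigma-function parameters $\lambda = (\lambda_4,\dots,\lambda_{4g+2})$. Finally, because the sigma function of a genus-$g$ curve carries only $\lambda_4,\dots,\lambda_{4g+2}$, that is $\lambda_{2k+2}$ for $1 \leqslant k \leqslant 2g$, setting $\lambda_{2k+2} = 0$ for $k > 2g$ yields exactly the stated description of the parameters.

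The main obstacle is not this final assembly, which is a specialisation, but the two ingredients it consumes. The first is the verification in Lemma \ref{l51} that the hyperelliptic $\wp$-functions satisfy the commutation conditions; the second, and genuinely nontrivial, point is the identification $\mu = \lambda$ of Lemma \ref{lemmu}. The difficulty there is that $\mu$ is produced abstractly from the generating-function relation \eqref{mx} of Theorem \ref{TBS}, whereas $\lambda$ arises from the defining equation of the curve $\mathcal{V}_\lambda$; bridging the two requires the explicit polynomial-mapping description of $\rho_g$ developed in \S\ref{s3}. Once Lemma \ref{lemmu} is in hand, Theorem \ref{mainthm} follows by combining it with Theorem \ref{fullhd} as above.
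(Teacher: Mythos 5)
Your proposal is correct and follows essentially the same route as the paper: the paper also obtains Theorem \ref{mainthm} from the same cluster of results (Theorems \ref{lemcom} and \ref{tmain} together with the identification $\mu_s=\lambda_s$ of Lemma \ref{lemmu}, which is built into Theorem \ref{t1} and the map $\varphi_g$), and your derivation via Theorem \ref{fullhd} specialised by Lemma \ref{l51} is exactly this argument with the general statement made explicit as an intermediate step. You correctly single out Lemma \ref{lemmu} as the genuinely nontrivial ingredient bridging the abstract constants $\mu$ and the curve parameters $\lambda$.
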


As a result, we obtain the following generalisation of Theorem \ref{fullhd}.

\begin{thm} \label{tgreat}
Let $v_2, v_4,\dots, v_{2g},\dots$ be an infinite sequence of differentiable functions in an infinite set of complex variables $t = (t_1, t_3, t_5,\dots, t_{2g-1},\dots)$. We denote by a prime the derivative with respect to $t_1$. We define operators $L$ and $B_{2k+1}$ by \eqref{L} and~\eqref{U} respectively, for $k = 1,2,\dots,g,\dots$\,. The following conditions are equivalent:

\begin{itemize}
\item operators $L, B_{3}, B_{5},\dots, B_{2g+1},\dots$ commute;
 
\item the following hold
\[
\frac{\partial v_2}{\partial t_{2k+1}} = \frac{\partial v_{2k+2}}{\partial t_{1}}, \qquad k \in \mathbb{N},
\]
and there is a set of constants $\mu=(\mu_4, \mu_6,\dots, \mu_{4g+2},\dots)$ such that, for generating functions
$$
\mathbf{b}_1(\xi) = \frac12 \sum_{i=1}^\infty v_{2i} \xi^i,\quad\mathbf{b}_2(\xi) = \frac12 \sum_{i=1}^\infty v_{2i}' \xi^i, \quad\mathbf{b}_3(\xi)= \frac12 \sum_{i=1}^\infty v_{2i}'' \xi^i
$$
and $\mathbf{m}(\xi) = \xi^{-1} + \sum_{i=1}^{\infty} \mu_{2i+2} \xi^i$, relation \eqref{mx} holds;

\item  function $u=v_2$ is a solution to the parametric Korteweg--de Vries hierarchy \eqref{KdVhld} with some set of parameters $a_4,a_6,\dots,a_{2k},\dots$, and  functions $v_4, v_6 , \dots, v_{2g}, \dots$ are given by \eqref{43}, where the constants $\mu_{2k+2}$, for $k \in \mathbb{N}$, are given by 
\begin{equation} \label{lastrel}
\mu_{2k+2} = 2 a_{2k+2} + \sum_{i=1}^{k-2} a_{2i+2} a_{2k-2i}.
\end{equation}
\end{itemize}
\end{thm}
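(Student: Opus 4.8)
The plan is to prove the three conditions equivalent by establishing $(1)\Leftrightarrow(2)$ as the infinite-dimensional analogue of Theorem \ref{TBS}, and then $(2)\Leftrightarrow(3)$ by transporting the polynomial parametric Korteweg--de Vries hierarchy \eqref{KdPol} from $\mathbb{C}^{3\infty}$ back to the function space through an infinite analogue of the mapping $\psi_v$. The organising principle throughout is the grading: since $\wt v_{2i}=2i$ and $\wt t_k=-k$, every relation among the $v_{2i}$ and the constants, read off coefficient by coefficient, involves only finitely many functions, so that all the infinite sums in \eqref{mx} and in \eqref{bminf} are coefficientwise finite and the finite-genus computations of \cite{BS} and of \S\ref{s8} transfer.

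For $(1)\Leftrightarrow(2)$ I would argue exactly as in the proof of Lemma \ref{l51}, but now with no stationary truncation. The commutators $[L,B_{2k+1}]$ vanish for every $k$ if and only if the two relations \eqref{prov} hold for every $k$, the first of which, after reindexing, is precisely the derivative relation $\partial v_2/\partial t_{2k+1}=\partial v_{2k+2}/\partial t_1$ of condition (2); similarly $[B_{2i+1},B_{2j+1}]=0$ is equivalent to the pair of relations used in Lemma \ref{l51}. Combining the second relation of \eqref{prov} with the derivative relation yields, for all $k$, the recursion $v_{2k}'''=4v_2v_{2k}'+2v_2'v_{2k}+4v_{2k+2}'$ of Lemma \ref{vppp}. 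The generating-function manipulation \eqref{thateps}--\eqref{vmu} of \cite{BS} then shows that the coefficients of $\xi^i$, $i\geqslant 1$, on the right-hand side of \eqref{mx} are annihilated by every $\partial_k$, i.e.\ that they are constants $\mu_{2i+2}$, and that \eqref{mx} holds; the converse reads the same computation backwards. This is the step I expect to require the most care: unlike the genus-$g$ situation, no operator terminates the family, so one must verify the $t$-independence of the $\mu_{2i+2}$ directly from the relations rather than deducing it from a stationary equation.

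For $(2)\Rightarrow(3)$ I would first recover the functions $v_{2k}$ from the $\mathbf{m}$ relation: by \eqref{thateps}--\eqref{vmu} the identity \eqref{mx} is equivalent to the expressions \eqref{43}, and I define the parameters $a_{2k+2}$ by \eqref{par} with $\lambda_{2k+2}=\mu_{2k+2}$, which is exactly \eqref{lastrel}. I then form the infinite analogue of $\psi_v$ from Corollary \ref{cor77}, sending the common domain of the $v_{2i}$ into $\mathbb{C}^{3\infty}$ by $b_{1,2j-1}=\tfrac12 v_{2j}$, $b_{2,2j-1}=\tfrac12 v_{2j}'$, $b_{3,2j-1}=\tfrac12 v_{2j}''$. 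Using the derivative relations together with \eqref{43} (as in the proof of the intertwining \eqref{thisrel}: the case $s=1$ from Lemma \ref{vppp}, the case $s>1$ from Corollary \ref{cor4l} and the infinite commutation of Corollary \ref{cordcoms}), I obtain $\partial_s(\psi_v^*b_{i,j})=\psi_v^*\mathcal{D}_s(b_{i,j})$ for all odd $s$. The infinite-dimensional analogue of Corollary \ref{thatcor} upgrades this to $A_{2k+1}(\psi_v^*b_{i,j})=\psi_v^*\mathcal{A}_{2k+1}(b_{i,j})$, so that every polynomial identity among $\mathcal{D}_1,\mathcal{D}_3,\dots,\mathcal{A}_{2k+1},\mathcal{L}$ pulls back to the corresponding identity among $\partial_1,\partial_3,\dots,A_{2k+1},L$. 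Applying $\psi_v^*$ to the polynomial parametric hierarchy \eqref{KdPol} of Theorem \ref{lemcom}, and using $\psi_v^*(2b_{1,1})=v_2$, turns the left-hand side into $\partial v_2/\partial t_{2k+1}$ and the right-hand side into the multiplication operator $[L,A_{2k+1}+\sum_{j=0}^{k-2}a_{2k-2j}A_{2j+1}]$, which is precisely equation \eqref{KdVhld} for $u=v_2$.

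Finally, for $(3)\Rightarrow(2)$ the argument runs in reverse. With $v_{2k}$ given by \eqref{43} and $\mu$ by \eqref{lastrel}, the equivalence of \eqref{43} with \eqref{mx} via \eqref{thateps}--\eqref{vmu} supplies the $\mathbf{m}$ relation outright, while the derivative relations $\partial v_2/\partial t_{2k+1}=\partial v_{2k+2}/\partial t_1$ follow by writing the parametric hierarchy \eqref{KdVhld} in the form \eqref{nu} with $\nu_{2k+2}[v_2]\in\mathcal{H}_A$ and identifying $\nu_{2k+2}[v_2]$ with $v_{2k+2}$ through the common Lenard-type recursion (as in Lemma \ref{lem33} and Corollary \ref{lastcor}). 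Since conditions (2) and (3) are thereby each shown equivalent to (1), all three are equivalent, completing the proof.
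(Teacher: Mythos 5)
Your proposal is correct and follows essentially the same route as the paper: both halves of the argument ultimately rest on the finite-genus results of Buchstaber--Shorina (Theorem \ref{TBS}, Lemma \ref{lem33}) for the equivalence of the first two conditions, and on the polynomial hierarchy machinery culminating in Theorem \ref{fullhd} for the equivalence with the third. The one place where you diverge is the step (i)$\Leftrightarrow$(ii): you propose to redo the computation of \cite{BS} directly in the infinite setting, via the relations \eqref{prov} and the generating-function identity \eqref{thateps}--\eqref{vmu}, and you rightly flag the $t$-independence of the $\mu_{2i+2}$ as the delicate point once no operator $B_{2g+1}$ terminates the family. The paper disposes of exactly this point with a shorter grading argument: every relation among the $v_{2i}$ and $\mu_{2j}$ is homogeneous, a relation of weight $N$ involves only finitely many of these quantities, and it coincides with the corresponding relation of the genus-$g$ theory for any $g>N$, so the finite-genus Theorem \ref{TBS} (including the constancy of the $\mu$'s) applies weight by weight. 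Your direct route would also work but requires carrying out the verification you flag; the stabilization argument buys that conclusion for free. For (ii)$\Leftrightarrow$(iii) you essentially inline the proof of Theorem \ref{fullhd} (the infinite $\psi_v$, the intertwining with $\mathcal{D}_s$ and $\mathcal{A}_{2k+1}$, and the pullback of \eqref{KdPol}), whereas the paper simply cites that theorem together with the built-in equivalence of \eqref{43} and \eqref{mx} and the identification \eqref{lastrel} of \eqref{par} with $\lambda_{2k+2}=\mu_{2k+2}$; the content is the same.
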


\begin{proof}
The equivalence of conditions (i) and (ii) follows from Theorem \ref{TBS} and Lemma \ref{lem33}. Indeed, since all equations are homogeneous with respect to the grading of variables introduced in this paper, the relations on the functions $v_2, v_4,\dots, v_{2g},\dots$ of weight $N$ that arise in the proof of this equivalence coincide with similar relations from Theorems \ref{TBS} for $g>N$.

The equivalence of conditions (ii) and (iii) follows from the fact that, under the conditions of the theorem, relations \eqref{43} are by construction equivalent to \eqref{mx} for the same constants $\mu_{2k+2}$. Moreover, the right parts of relations \eqref{tpar} correspond to the right parts of the hierarchy \eqref{KdVhld} in view of Theorem \ref{fullhd}. Relations \eqref{lastrel} follow from the \eqref{par} relations for~$k\in \mathbb{N}$ in the case $\lambda_{2k+2} = \mu_{2k+2}$.\qed
\end{proof}



\begin{thebibliography}{99}

\bibitem{KdVt}
\textsc{D. J. Korteweg, G. de Vries},
\emph{On the change of form of long waves advancing in a rectangular canal and on a new type of long stationary waves},
Philos. Mag., 39:5 (1895), 422--443.

\bibitem{Lax}
\textsc{P. D. Lax},
\emph{Integrals of nonlinear equations of evolution and solitary waves},
Comm. Pure Appl. Math., 21 (1968), 467--490.

\bibitem{BM}
\textsc{V. M. Buchstaber, A. V. Mikhailov},
\emph{Integrable polynomial Hamiltonian systems and symmetric powers of plane algebraic curves},
Russian Math. Surveys, 76:4 (2021), 587--652.

\bibitem{Nov}
\textsc{S. P. Novikov},
\emph{The periodic problem for the Korteweg--de Vries equation},
Funct. Anal. App., 8:3 (1974), 54--66.

\bibitem{GD}
\textsc{I. M. Gelfand and L. A. Dikii},
\emph{Asymptotic behaviour of the resolvent of Sturm--Liouville equations and the algebra of Korteweg--de Vries equations},
Russian Math. Surveys, 30:5 (1975).

\bibitem{DMN}
\textsc{B. A. Dubrovin, V. B. Matveev, S. P. Novikov},
\emph{Nonlinear equations of Korteweg--de Vries type, finite-zone linear operators, and Abelian varieties},
Russian Math. Surveys, 31:1 (1976).

\bibitem{Sok}
\textsc{V. V. Sokolov},
\emph{Algebraic Structures in Integrability},
World Scientific, Hackensack, NJ, 2020.

\bibitem{BS}
\textsc{V. M. Buchstaber, S. Yu. Shorina},
\emph{The $w$-function of the KdV hierarchy. Geometry, topology, and mathematical physics},
Amer. Math. Soc. Transl. Ser. 2:212, Amer. Math. Soc., Providence, RI, 2004, 41--66.

\bibitem{BS1}
\textsc{V. M. Buchstaber, S. Yu. Shorina},
\emph{Commuting multidimensional third-order differential operators defining a KdV hierarchy},
Russian Math. Surveys, 58:3, 2003, 610--612.

\bibitem{BS2}
\textsc{V. M. Buchstaber, S. Yu. Shorina},
\emph{The $w$-function of a solution of the $g$-th stationary KdV equation},
Russian Math. Surveys, 58:4, 2003, 780--781.

\bibitem{Baker}
\textsc{H. F. Baker}
\emph{On the hyperelliptic sigma functions}
Amer. J. Math., 20:4, 1898, 301--384.

\bibitem{BEL}
\textsc{V. M. Buchstaber, V. Z. Enolskii, D. V. Leikin}
\emph{Kleinian functions, hyperelliptic Jacobians and applications}
Rev. Math. and Math. Physics, 10:2, Gordon and Breach, London, 1997, 3--120.

\bibitem{BEL-97}
\textsc{V. M. Buchstaber, V. Z. Enolskii, D. V. Leikin}
\emph{Hyperelliptic Kleinian functions and applications}
Amer. Math. Soc. Transl. Ser. 2, 179:2, Providence, RI, Amer. Math. Soc., 1997, 1--34.

\bibitem{BEL-12}
\textsc{V. M. Buchstaber, V. Z. Enolskii, D. V. Leikin}
\emph{Multi-Dimensional Sigma-Functions}
arXiv: 1208.0990

\bibitem{WW}
\textsc{E. T. Whittaker, G. N. Watson}
\emph{A Course of Modern Analysis}
Reprint of 4th (1927) ed., Vol 2. Transcendental functions, Cambridge Univ. Press, Cambridge, 1996.

\bibitem{B2}
\textsc{V. M. Buchstaber}
\emph{Polynomial dynamical systems and the Korteweg--de Vries equation}
Modern problems of mathematics, mechanics, and mathematical physics. II, Collected papers, 294
Proc. Steklov Inst. Math, 2016, 176--200.

\bibitem{B3}
\textsc{E. Yu. Bunkova}
\emph{Differentiation of genus 3 hyperelliptic functions}
Eur. J. Math., 4:1 (2018), 93--112.

\bibitem{Nak-10-AJM}
\textsc{A.~Nakayashiki}
\emph{On algebraic expressions of sigma functions for $(n,s)$-curves}
Asian J. Math., 14:2 (2010), 175--212.
arXiv: 0803.2083

\bibitem{DN}
\textsc{B. A. Dubrovin, S. P. Novikov}
\emph{The periodic problem for the Korteweg--de Vries and Sturm--Liouville equations. Their connection with algebraic geometry}
Dokl. Akad. Nauk SSSR, 219:3 (1974), 531--534.

\bibitem{Prob}
\textsc{V. M. Buchstaber, D. V. Leikin}
\emph{Solution of the Problem of Differentiation of Abelian Functions over Parameters for Families of $(n,s)$-curves}
Funct. Anal. Appl., 42:4 (2008), 262--278.

\bibitem{BEL18}
\textsc{V. M. Buchstaber, V. Z. Enolski, D. V. Leykin}
\emph{$\sigma$-functions: old and new results}
Integrable Systems and Algebraic Geometry, vol. 2, London Math. Soc. Lecture Note Series 459,
Cambridge Univ. Press, Cambridge, 2020, 175--214.

\end{thebibliography}
\end{document}